\makeindex \addtolength{\hoffset}{-1,5cm}
\newtheorem{theorem}{Theorem}[section]
\newtheorem{definition}[theorem]{Definition}
\newtheorem{example}[theorem]{Example}
\newtheorem{lemma}[theorem]{Lemma}
\newtheorem{notation}[theorem]{Notation}
\newtheorem{proposition}[theorem]{Proposition}
\newtheorem{remark}[theorem]{Remark}
\newenvironment{proof}[1][Proof]{\noindent\textbf{#1.} }{\ \rule{0.5em}{0.5em}}
\begin{document}

\title{Solving Systems of Algebraic Equations Over Finite Commutative Rings
and Applications}
\author{Hermann Tchatchiem Kamche \thanks{Hermann Tchatchiem Kamche (hermann.tchatchiem@gmail.com) is with the Institute of mathematics, University of Neuchatel, Switzerland. He is supported by the Swiss Government Excellence Scholarship (ESKAS No. 2022.0689)}
~ and Herv\'e Tal\'e Kalachi 
\thanks{Herv\'e Tal\'e Kalachi (http://perso.prema-a.org/herve.tale-kalachi/) is with the National Advanced School of Engineering of Yaound\'e, University of Yaounde I, Cameroon. He is supported by the UNESCO-TWAS and the German Federal Ministry of Education and Research (BMBF) under the SG-NAPI grant number 4500454079.}
}

\maketitle

\begin{abstract}
Several problems in algebraic geometry and coding theory over finite rings
are modeled by systems of algebraic equations. Among these problems, we have the rank decoding problem, which is used in the construction of public-key cryptography. In 2004, Nechaev and Mikhailov proposed two methods for
solving systems of polynomial equations over finite chain rings. These
methods used solutions over the residual field to construct all solutions
step by step. However, for some types of algebraic equations, one simply
needs partial solutions. In this paper, we combine two existing approaches
to show how Gr\"obner bases over finite chain rings can be used to solve
systems of algebraic equations over finite commutative rings. Then, we use
skew polynomials and Pl\"ucker coordinates to show that some algebraic
approaches used to solve the rank decoding problem and the MinRank problem
over finite fields can be extended to finite principal ideal rings.
\end{abstract}

\bigskip \textbf{Keywords}: Finite commutative rings, Gr\"obner bases,
MinRank problem, Rank decoding problem, Systems of algebraic equations.

\section{Introduction}

Solving systems of algebraic equations has always been of high interest in
algorithmic algebra. Indeed, many algebraic problems have their solution
sets contained in those of systems of algebraic equations. A tangible
example is the rank decoding problem \cite{Gaborit2016complexity}, which has
attracted a lot of attention this last decade in view of its application in
cryptography. This problem is generally defined over finite fields and
therefore, leads to the problem of solving systems of algebraic equations
over finite fields when modeled appropriately. But it should be remembered
that this latest problem has been studied for a long time and has a wide
variety of algorithms that can be used to solve it and also estimate the
solving complexities \cite{Gianni1989algebrric,Faugere1999new,Faugere2002new,Courtois2000efficient,Caminata2023solving}.

Most recently, the rank decoding problem has been extended to finite
principal ideal rings in \cite{Kalachi2023OnTheRank} where the authors,
after having justified the interest of studying this problem over finite
rings, show that it is at least as hard as the rank decoding problem over
finite fields, and also provide a combinatorial type algorithm for solving
this new problem. The translation of the rank decoding problem over finite
rings as a system of algebraic equations naturally induces the problem of
solving systems of algebraic equations over finite rings.

Contrary to the problem of solving systems of algebraic equations over
finite fields, the previous problem over finite rings has not experienced
much development. The most advanced and recent work is the paper of Mikhailov and Nechaev \cite{Mikhailov2004solving}, who proposed two approaches for
solving systems of polynomial equations over finite chain rings. One of
these approaches uses canonical generating systems, which are not Gr\"obner
bases in general. An algebraic modeling of the rank decoding problem over
finite chain rings that we will use is a system of algebraic equations with
some parameters, and we just need a partial solution. Note that Gr\"obner
bases over fields are generally used to solve these kinds of systems. A
natural question is therefore to know whether Gr\"obner bases can be used to
solve systems of algebraic equations over finite chain rings in general, as
in the case of finite fields.

Independently, Gr\"obner bases over finite chain rings have been much
studied and implemented in some mathematical software systems like Magma
\cite{Cannon2013handbook}, SageMath \cite{Sagemath2023}, etc. Indeed,
similar to Buchberger's algorithm over fields \cite%
{Buchberger1965algorithmus}, Norton and Salagean \cite%
{Norton2001strong_cyclic} gave an algorithm for computing Gr\"obner bases
over finite chain rings. This algorithm has been improved in \cite%
{Hashemi2013applying} by adding the product criterion and the chain
criterion. In the Magma handbook \cite{Cannon2013handbook}, it was specified
that the $F_{4}$ algorithm \cite{Faugere1999new} was extended over Euclidean
rings\footnote{%
Note that Euclidean rings in Magma also contain rings with zero divisors
like Galois rings.}, taking into account the elimination criteria given in
\cite{Moller1988construction}. Moreover, the elimination theorem, which is
the main property used to solve systems of algebraic equations, can be
extended over finite chain rings. However, the elimination theorem does not
hold in general on other types of finite rings. But we must not forget that
Low Rank Parity Check codes which are potential linear codes for rank-based
cryptography have been extended to finite commutative rings \cite%
{Renner2021low, Kamche2021low}. Thus, it also becomes necessary to tackle
the resolution of systems of algebraic equations over finite commutative
rings.

According to the structure theorem for finite commutative rings \cite%
{Mcdonald1974finite}, every finite commutative ring is isomorphic to a
product of finite commutative local rings. Thus, solving systems of
algebraic equations over finite commutative rings is reduced to finite local
rings. In \cite{Bulyovszky2017polynomial}, Bulyovszky and Horv\'ath gave a
good method for solving systems of linear equations over finite local rings.
Indeed, they transformed systems of linear equations from local rings to
Galois rings and used the Hermite normal form to solve it. In this work we
show that this transformation can be applied to systems of algebraic
equations, and we then use Gr\"obner bases to solve the resulting equation
since Galois rings are specific cases of finite chain rings.

Before one can use Gr\"obner bases over finite chain rings to solve the rank
decoding problem, it is first necessary to give an algebraic modeling. As
specified in \cite{Kalachi2023OnTheRank}, some properties of the rank of a
matrix over fields cannot be extended to rings in general due to zero
divisors. Therefore, the algebraic modeling of the rank decoding problem
given in \cite{BBCGPSTV20} using the MaxMinors cannot be directly applied to
rings. However, in \cite{Gaborit2016complexity} other algebraic modeling
using linearized polynomials has been given and some main properties of
linearized polynomials have been extended in \cite{Kamche2019rank} over
finite principal ideal rings. We will use these results to prove that the
algebraic modeling done in \cite{Gaborit2016complexity} using linearized
polynomials can be generalized over finite principal ideal rings.
Furthermore, as the rank decoding problem reduces to the MinRank problem
\cite{Faugere2008cryptanalysis}, we also study possible algebraic modelings of
the MinRank problem over finite rings.

The MinRank problem have several algebraic modeling over fields. For
example, the MaxMinors modeling \cite{Faugere2010computing}, the
Kipmis-Shamir modeling \cite{Kipnis1999cryptanalysis}, or the Support-Minors
modeling \cite{BBCGPSTV20}. Over finite chain rings, the rank of a matrix is
not generally equal to the order of the highest order non-vanishing minor.
Thus, the MaxMinors modeling cannot directly extend over rings. However, we
will use the rank decomposition and the Pl\"ucker coordinates to show that the
Kipmis-Shamir modeling and the Support-Minors modeling can be extended to
finite principal ideal rings.

The rest of the paper is organized as follows. In Section \ref{GrobnerBases}, we give some
properties of Gr\"obner bases over finite chain rings, followed by the use of Gr\"obner bases for solving systems of algebraic equations over finite chain
rings in Section \ref{SolvingFCR}.
In Section \ref{LocalRings} we show how to solve systems of algebraic equations over finite commutative local rings by decomposing them as a
direct sum of cyclic modules over Galois rings. Section \ref{MinRankproblem}
uses the fact that the row span of a matrix is contained in a free module of the same rank to prove that the Kipmis-Shamir Modeling and the Support
Minors Modeling of the MinRank problem can be extended to finite principal
ideal rings. In Section \ref{RDproblem}, skew polynomials are used to give an
algebraic modeling of the rank decoding problem over finite principal ideal
rings, and to finish, we conclude the paper and give some perspectives in Section \ref{conclusion}.

\section{Gr\"obner Bases Over Finite Chain Rings \label{GrobnerBases}}

\subsection{Finite Chain Rings}

A chain ring\textbf{\ }is a ring whose ideals are linearly ordered by
inclusion, and a local ring is a ring with exactly one maximal ideal. By \cite%
{Mcdonald1974finite}, a finite ring is a chain ring if and only if it is a
local principal ideal ring, that is to say a finite ring that admits exactly one
maximal ideal and every ideal is generated by one element. A basic example
of finite chain rings is the ring $\mathbb{Z}_{p^{k}}=\mathbb{Z}/p^{k}\mathbb{Z}$ of integers modulo a power of a prime
number $p$. Its maximal ideal is $p\mathbb{Z}_{p^{k}}$. Other
examples of finite chain rings that we will use to give a representation of
finite commutative local rings in Section \ref{LocalRings} are Galois rings.
A Galois ring of characteristic $p^{k}$ and rank $r$, denoted by $GR\left(
p^{k},r\right) $, is the ring $\mathbb{Z}_{p^{n}}\left[ X\right] /\left(
f\right) $, where $f$ $\in $ $\mathbb{Z}_{p^{k}}\left[ X\right] $ is a monic
polynomial of degree $r$, irreducible modulo $p$, and $\left( f\right) $
being the ideal of $\mathbb{Z}_{p^{k}}\left[ X\right] $ generated by $f$. Thus, $GR\left( p^{k},r\right) $ is a degree $r$
Galois extension of $\mathbb{Z}_{p^{k}}$ and is a finite chain
ring with maximal ideal generated by $p$ and residue field $%
\mathbb{F}_{p^{r}}=GR\left( p^{n},r\right) /pGR\left( p^{n},r\right) $ \cite%
{Mcdonald1974finite}.

In this section, we assume that $R$ is a finite commutative chain ring with
maximal ideal $\mathfrak{m}$ and residue field $\mathbb{F}_{q}=R/\mathfrak{m}$.
We denote by $\pi$ a generator of $\mathfrak{m}$, and $\nu $ the nilpotency
index of $\pi $, i.e., the smallest positive integer such that $\pi ^{\nu
}=0 $. An important property of finite chain rings is the structure of their
ideals. Every ideal of $R$ is of the form $\pi ^{i}R$, for $i=0,\ldots ,\nu $%
. A direct consequence is the following two decompositions of any element
from $R$. Let $\Gamma $ be a complete set of representatives of the
equivalence classes of $R$ under the congruence modulo $\pi $. As in \cite%
{Mikhailov2004solving}, we have for example $\Gamma =\left\{ a\in
R:a^{q}=a\right\} $.

\begin{proposition}
\label{DecompositionFCR}Let $c$ in $R$, then

(a) there exist a unique $l$ in $\left\{ 0,\ldots ,\nu \right\} $ and a unit $u$
in $R$ such that
\begin{equation}
c=\pi ^{l}u  \label{ValDecom}
\end{equation}

(b) the element $c$ has a unique representation in the form
\begin{equation}
c=\sum_{j=0}^{\nu -1}c_{j}\pi ^{j}  \label{AdicDecom}
\end{equation}%
\ where $c_{j}\in $ $\Gamma $, for $j=0,\ldots ,\nu -1$.
\end{proposition}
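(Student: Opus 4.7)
The plan is to exploit two structural facts about $R$ that follow from it being a finite chain ring: the descending chain $R = \pi^{0}R \supsetneq \pi R \supsetneq \cdots \supsetneq \pi^{\nu}R = (0)$ exhausts every ideal, and $R$ is local with maximal ideal $\pi R$, so every element outside $\pi R$ is a unit. Part (a) will be a direct application of the chain of ideals, while part (b) will follow from an iterative reduction modulo $\pi$ combined with a counting argument.

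For (a), I would define $l$ to be the largest index in $\{0,\ldots,\nu\}$ such that $c \in \pi^{l}R$, with the convention $l=\nu$ when $c=0$. Writing $c = \pi^{l}u$ gives existence; the maximality of $l$ forces $u \notin \pi R$ (otherwise $c \in \pi^{l+1}R$), so by locality $u$ is a unit. For uniqueness of $l$, suppose $\pi^{l}u = \pi^{l'}u'$ with $l<l'$ and $u,u'$ units. Then $\pi^{l}(u - \pi^{l'-l}u') = 0$, but $u - \pi^{l'-l}u' \equiv u \not\equiv 0 \pmod{\pi}$ is a unit, whence $\pi^{l}=0$ and $l = \nu = l'$, a contradiction.

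For (b), existence is obtained by iterated reduction: by the very definition of $\Gamma$ as a set of class representatives, there is a unique $c_{0} \in \Gamma$ with $c - c_{0} \in \pi R$, so $c - c_{0} = \pi d_{1}$ for some $d_{1} \in R$. Applying the same reduction to $d_{1}$ produces $c_{1} \in \Gamma$ and $d_{2} \in R$ with $c = c_{0} + \pi c_{1} + \pi^{2}d_{2}$, and iterating $\nu$ times leaves a remainder in $\pi^{\nu}R = (0)$, giving the required sum. For uniqueness, I would invoke a cardinality argument: for each $i<\nu$, multiplication by $\pi^{i}$ induces a surjective $R/\pi R$-module map $R/\pi R \twoheadrightarrow \pi^{i}R/\pi^{i+1}R$ which must be an isomorphism since $\pi^{i}R \neq \pi^{i+1}R$; hence $|R| = q^{\nu}$. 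The set of formal sums $\sum_{j=0}^{\nu-1} c_{j}\pi^{j}$ with $c_{j} \in \Gamma$ has cardinality $|\Gamma|^{\nu} = q^{\nu}$ as well, so the surjective map from $\Gamma^{\nu}$ onto $R$ established above is forced to be a bijection.

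The main obstacle is the uniqueness statement in (b): a direct algebraic argument comparing two representations term by term modulo successive powers of $\pi$ is delicate because one cannot divide by $\pi$ and must control the annihilator $\mathrm{Ann}(\pi^{j}) = \pi^{\nu-j}R$ at each step. The counting argument above bypasses this difficulty cleanly, reducing uniqueness to the purely numerical identity $|R| = |\Gamma|^{\nu}$.
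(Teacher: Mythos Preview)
Your proof is correct. The paper itself does not supply a proof of this proposition; it is stated as a standard structural fact about finite chain rings (implicitly deferring to references such as McDonald's \emph{Finite Rings with Identity}), so there is no ``paper's own proof'' to compare against. Your argument---the ideal-chain characterisation of the valuation in (a), and the iterated reduction plus counting for (b)---is exactly the standard route one finds in the literature.

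One cosmetic slip: in the uniqueness step of (a) you write ``whence $\pi^{l}=0$ and $l=\nu=l'$, a contradiction.'' What you actually obtain is $l=\nu$, and then the hypothesis $l<l'\le\nu$ is already violated; you do not deduce $l'=\nu$. The contradiction is simply $l'>\nu$. This does not affect the validity of the argument.
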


The integer $l$ in equation \eqref{ValDecom} is called the valuation of $c$ and is denoted $val\left( c\right)$, while the representation of $c$ given by equation \eqref{AdicDecom} is called the $\pi-$adic decomposition of $c$. Let $j\in \{0,\ldots ,\nu -1\}$, the map $\gamma
_{j}:R\longrightarrow \Gamma $ given by $c\longmapsto c_{j}$ can be extended
to elements $\mathbf{w}=\left( w_{1},\ldots ,w_{k}\right) $ from $R^{k}$ by $%
\gamma _{j}\left( \mathbf{w}\right) =\left( \gamma _{j}\left( w_{1}\right)
,\ldots ,\gamma _{j}\left( w_{k}\right) \right) $. Hence, $\mathbf{w}%
=\sum_{j=0}^{\nu -1}\pi ^{j}\gamma _{j}\left( \mathbf{w}\right) $. For $%
l=1,\ldots ,\nu -1$, set $\mathbf{w}^{[l]}=\sum_{j=0}^{l-1}\pi ^{j}\gamma
_{j}\left( \mathbf{w}\right) $. The valuation does not depend on the choice
of $\pi $ but the $\pi -$adic decomposition depends on the choice of $\pi $.

\begin{example}
The ring $%
\mathbb{Z}
_{8}$ is a finite chain ring where the maximal ideal is generated by $2$, with
 nilpotency index $3$. The residual field of $%
\mathbb{Z}
_{8}$ is $\mathbb{F}_{2}=%
\mathbb{Z}
_{8}/2%
\mathbb{Z}
_{8}$ and a complete set of representatives of the equivalence classes of $%
\mathbb{Z}
_{8}$ under the congruence modulo $2$ is $\Gamma =\left\{ 0,1\right\} $. We
have $6=3\times 2^{1}$ and $3$ is invertible in $%
\mathbb{Z}
_{8}$, thus $val\left( 6\right) =1$, moreover the maximal ideal is also
generated by $6$. The $2-$adic decomposition of $6$ is $6=0\times
2^{0}+1\times 2^{1}+1\times 2^{2}$ and the $6-$adic decomposition of $6$ is $%
6=0\times 6^{0}+1\times 6^{1}+0\times 6^{2}$.
\end{example}

The decomposition (\ref{ValDecom}) using the valuation will be used in the
next subsection to compute Gr\"obner bases and the $\pi -$adic
decomposition (\ref{AdicDecom}) will be used is Section \ref{SolvingFCR} to
solve algebraic equations.

\subsection{Gr\"obner Bases}

The ring of polynomials with $k$ indeterminates $x_{1},\ldots ,x_{k}$ and
coefficients in $R$ is denoted $R\left[ x_{1},\ldots ,x_{k}\right] $. A
monomial is an element of $R\left[ x_{1},\ldots ,x_{k}\right] $ of the form $%
x^{\alpha }:=x_{1}^{d_{1}}\cdots x_{k}^{d_{k}}$ $\ $where  the $d_{i}$'s are
non-negative integers and $\alpha =\left( d_{1},\ldots, d_{k}\right) $. If ``$>$''
is an admissible order on the set of monomials, then any element $f$ in $R%
\left[ x_{1},\ldots ,x_{k}\right] \backslash \left\{ 0\right\} $ can be
written uniquely as $f=\sum_{i=1}^{s}c_{i}x^{\alpha _{i}}$ where each $x^{\alpha
_{i}}$ is a monomial, $c_{i}\in R$, and $x^{\alpha _{1}}>\cdots >x^{\alpha
_{s}}$. The leading monomial, the leading coefficient, and the leading term
of $f$ are respectively defined by $lm\left( f\right) :=x^{\alpha _{1}}$, $%
lc\left( f\right) :=c_{1}$, $lt\left( f\right) :=c_{1}x^{\alpha _{1}}$. For $%
W\subset R\left[ x_{1},\ldots ,x_{k}\right] $, we denote by $lt\left(
W\right) $ the ideal generated by $\left\{ lt\left( w\right) \ |\ w\in
W\right\} $. According to \cite[Definition 3.8]{Norton2001strong}, we have
the following:

\begin{definition}
Let $I$ be an ideal in $R\left[ x_{1},\ldots ,x_{k}\right] $ and $G$ a
subset of $I$.

(a) $G$ is called a Gr\"obner basis for $I$ if $lt(G)=lt(I)$.

(b) $G$ is called a strong Gr\"obner basis for $I$ if for all $f\in I$
there exists $g\in G$ such that $\ lt(g)$ divides $lt\left( f\right) $, that
is to say $lt\left( f\right) =cx^{\alpha }lt(g)$ where $c\in R$ and $x^{\alpha }$
is a monomial.
\end{definition}

In \cite[Proposition 3.9]{Norton2001strong} a connection between Gr\"obner
bases and strong Gr\"obner bases was given over finite chain rings.

\begin{proposition}
\label{EquivalenceGrobnerBases}A subset of $R\left[ x_{1},\ldots ,x_{k}%
\right] $ is a Gr\"obner basis if and only if it is a strong Gr\"obner
basis.
\end{proposition}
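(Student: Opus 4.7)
The ``strong $\Rightarrow$ Gr\"obner'' direction is essentially tautological: if $G$ is a strong Gr\"obner basis, then for every $f\in I$ the leading term $lt(f)$ is of the form $cx^{\alpha}lt(g)$ for some $g\in G$, which places $lt(f)$ in the ideal generated by $lt(G)$. Combined with the obvious inclusion $lt(G)\subseteq lt(I)$ coming from $G\subset I$, this gives $lt(G)=lt(I)$. So the real work is in the converse, and the plan is to exploit the linear ordering of ideals in $R$ via the valuation.

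Assume $G=\{g_1,\ldots,g_t\}$ is a Gr\"obner basis and fix $f\in I$. Write $lt(f)=cx^{\alpha}$ and $lt(g_i)=c_ix^{\alpha_i}$. Because $lt(f)\in lt(I)=lt(G)$, there exist polynomials $h_i\in R[x_1,\ldots,x_k]$ with
\[
lt(f)=\sum_{i=1}^{t}h_i\,lt(g_i).
\]
The first step is to compare the coefficient of the monomial $x^{\alpha}$ on both sides. Only those indices $i$ for which $x^{\alpha_i}$ divides $x^{\alpha}$ can contribute, so if I let $S:=\{i:x^{\alpha_i}\mid x^{\alpha}\}$ and denote by $b_i\in R$ the coefficient of $x^{\alpha-\alpha_i}$ in $h_i$ for $i\in S$, then
\[
c=\sum_{i\in S}b_i c_i.
\]

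The core ring-theoretic step is now to invoke Proposition \ref{DecompositionFCR}(a) for each $c_i$ and for $c$. Since the ideals of $R$ form a chain $R\supset \pi R\supset\cdots\supset \pi^{\nu}R=0$, the valuation satisfies $val(b_i c_i)\geq val(c_i)$ and $val\bigl(\sum_{i\in S}b_i c_i\bigr)\geq \min_{i\in S}val(c_i)$. Hence there is some $i^{*}\in S$ with $val(c_{i^{*}})\leq val(c)$. Writing $c=\pi^{val(c)}v$ and $c_{i^{*}}=\pi^{val(c_{i^{*}})}u_{i^{*}}$ with $u_{i^{*}},v$ units, we obtain $c=\pi^{val(c)-val(c_{i^{*}})}vu_{i^{*}}^{-1}\cdot c_{i^{*}}=:d\cdot c_{i^{*}}$ in $R$, so that
\[
lt(f)=cx^{\alpha}=d\,x^{\alpha-\alpha_{i^{*}}}\cdot c_{i^{*}}x^{\alpha_{i^{*}}}=d\,x^{\alpha-\alpha_{i^{*}}}\cdot lt(g_{i^{*}}),
\]
which is exactly the divisibility required by the strong Gr\"obner basis definition.

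The main obstacle, such as it is, is the middle step: one must be careful to extract only the coefficient of the single monomial $x^{\alpha}$ (the higher-order terms in $h_i\,lt(g_i)$ are cancelled by contributions from other $g_j$ and play no role), and one must restrict the sum to indices in $S$. Everything else is bookkeeping. Once the identity $c=\sum_{i\in S}b_ic_i$ is in hand, the argument reduces to the elementary fact that in a chain ring the element of smallest valuation in any finite sum divides the sum, which is precisely where the chain ring hypothesis is used.
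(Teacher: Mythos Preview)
The paper does not actually prove this proposition; it merely quotes it from \cite[Proposition~3.9]{Norton2001strong}. Your argument is correct and is essentially the standard one: once you have $c=\sum_{i\in S}b_ic_i$, the chain-ring hypothesis guarantees that the ideal $(c_i:i\in S)$ is principal, generated by the $c_{i^{*}}$ of least valuation, so $c_{i^{*}}\mid c$ and hence $lt(g_{i^{*}})\mid lt(f)$. The only point worth making explicit is that $S\neq\varnothing$ (otherwise $c=0$, contradicting $f\neq 0$), which you use implicitly when selecting $i^{*}$.
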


Over finite chain rings, the reduction is also equivalent to the strong
reduction \cite[Proposition 3.2]{Norton2001strong_cyclic}. So, we will
simply recall the definition of strong reduction given in \cite[Definition
3.1 ]{Norton2001strong}.\

\begin{definition}
Let $f\in R\left[ x_{1},\ldots ,x_{k}\right] \backslash \left\{ 0\right\} $,
$h\in R\left[ x_{1},\ldots ,x_{k}\right] $ and $F=\left\{ f_{1},\ldots
,f_{n}\right\} \subset R\left[ x_{1},\ldots ,x_{k}\right] \backslash \left\{
0\right\} $. \

(i) We say that $f$ strongly reduces to $h$ with respect to $F$ in one step and 
denote it by $f\twoheadrightarrow _{F}h$, if there exists $i\in \left\{ 1,\ldots
,n\right\} $ such that $h=f-cx^{\alpha }f_{i}$, where $c\in R$, $\alpha \in
\mathbb{N}
^{k}$, and $lt\left( f\right) =cx^{\alpha }lt\left( f_{i}\right) $.

(ii) The reflexive and transitive closure of the relation $%
\twoheadrightarrow _{F}$ is denoted $\twoheadrightarrow _{F}^{\ast }$ and we
set $0\twoheadrightarrow _{F}^{\ast }0$ by convention.
\end{definition}

The reduction process and $S-$polynomials are used over fields to give a
characterization of Gr\"obner bases. But over a finite chain ring which is
not a field, we have zero divisors. Consequently, to give a characterization
of Gr\"obner bases, we also need $A-$polynomials that we define below, together with $S-$polynomials.

\begin{definition}
\label{SpolyApoly}
(i) Let $g_{1}\neq g_{2}\in R\left[ x_{1},\ldots ,x_{k}%
\right] \backslash \left\{ 0\right\} $. An S-polynomial of $g_{1}$ and $%
g_{2} $, denoted $Spoly(g_{1},g_{2})$, is any polynomial of the form $%
c_{1}x^{\alpha _{1}}g_{1}-c_{2}x^{\alpha _{2}}g_{2}$ where,
for $i\in \left\{ 1,2\right\} $, $x^{\alpha _{i}}$ is the least common
multiple of $lm\left( g_{1}\right) $ and $lm\left( g_{2}\right) $ divided by
$lm\left( g_{i}\right) $, that is, $x^{\alpha _{i}}=lcm\left( lm\left(
g_{1}\right) ,lm\left( g_{2}\right) \right) /lm\left( g_{i}\right) $, and
the constants $c_{1}$, $c_{2}$ are elements of $R$ such that $c_{1}lc\left(
g_{1}\right) =c_{2}lc\left( g_{2}\right) =lcm\left( lc\left( g_{1}\right)
,lc\left( g_{2}\right) \right) $.

(ii) Let $g\in R\left[ x_{1},\ldots ,x_{k}\right] \backslash \left\{
0\right\} $. An A-polynomial of $g$, denoted $Apoly(g)$, is any polynomial
of the form $ag$ where $a$ generates the ideal $Ann\left( lc\left( g\right)
\right) =\left\{ b\in R:b \cdot lc\left( g\right) =0\right\} .$
\end{definition}

\begin{remark}
Since $R$ is a finite chain ring, it is easy to find the constants $c_{1}$, $%
c_{2}$, and $a$ of Definition \ref{SpolyApoly}. Indeed:

(a) For $i\in \left\{
1,2\right\} $, set $lc\left( g_{i}\right) =b_{i}\pi ^{l_{i}}$ where $l_{i}=val\left( lc\left( g_{i}\right) \right) $. Then, a least common multiple of $lc\left( g_{1}\right) $ and $%
lc\left( g_{2}\right) $ is $lcm\left( lc\left( g_{1}\right) ,lc\left(
g_{2}\right) \right) =\pi ^{l}$ where $l=\max \left\{ l_{1},l_{2}\right\} $.
Thus, we can choose \ $c_{i}=b_{i}^{-1}\pi ^{l-l_{i}}$, for $i\in \left\{
1,2\right\} $.

(b) Since $\nu $ is the nilpotency index of $\pi $, then we can choose $%
a=\pi ^{\nu -val(lc(g))}$.
\end{remark}

By \cite[Corollary 5.13]{Norton2001strong} or \cite[Theorem 234]%
{Yengui2015constructive}, we have the following charaterization of Gr\"o%
bner bases over finite chain rings.

\begin{theorem}
\label{CharacterizationGB}Let $I$ be an ideal in $R\left[ x_{1},\ldots ,x_{k}%
\right] $, $I\neq \left\{ 0\right\} $, and $G$ a finite subset of $I$. Then $%
G$ is a Gr\"obner basis for $I$ if and only if the following two
conditions are satisfied:

(i) for all $g_{1}$, $g_{2}\in G$, with $g_{1}\neq g_{2}$, $\
Spoly(g_{1},g_{2})\twoheadrightarrow _{G}^{\ast }0$;

(ii) for all $g\in G$, $Apoly(g)\twoheadrightarrow _{G}^{\ast }0$.
\end{theorem}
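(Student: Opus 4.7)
I would prove the two implications separately. The forward direction is essentially a direct consequence of Proposition \ref{EquivalenceGrobnerBases}, while the backward direction requires a Buchberger-style rewriting argument adapted to handle the zero divisors of $R$.

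\textbf{Forward direction.} Assume $G$ is a Gr\"obner basis for $I$. By Proposition \ref{EquivalenceGrobnerBases}, $G$ is also a strong Gr\"obner basis. Both $Spoly(g_{1},g_{2})$ and $Apoly(g)$ lie in $I$ by construction, so by the strong Gr\"obner basis property there exists some element of $G$ whose leading term divides their leading term. A single strong reduction step then cancels that leading term and produces a polynomial in $I$ with strictly smaller leading monomial. Iterating and invoking the fact that an admissible order is a well-order on monomials, the strong reduction terminates at $0$, giving (i) and (ii).

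\textbf{Backward direction.} Assume (i) and (ii). Since $G\subset I$, the inclusion $lt(G)\subseteq lt(I)$ is immediate, so it suffices to prove $lt(I)\subseteq lt(G)$. Fix $f\in I\setminus\{0\}$; I want to show that some $lt(g_i)$ divides $lt(f)$. Because $I=\langle G\rangle$, write $f=\sum_{i=1}^{N}h_i g_i$ with $h_i\in R[x_1,\ldots,x_k]$ and $g_i\in G$, and set $\delta:=\max_i lm(h_i g_i)$, choosing a representation that minimises $\delta$ with respect to the admissible order. If $\delta=lm(f)$ we are done, since then some $lt(h_i)lt(g_i)$ equals $lt(f)$, giving $lt(f)\in lt(G)$. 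So suppose for contradiction that $\delta>lm(f)$, and let $S=\{i:lm(h_ig_i)=\delta\}$; then $\sum_{i\in S}lt(h_i)lt(g_i)$ must vanish above $lm(f)$. The plan is to rewrite $\sum_{i\in S}h_ig_i$ using (i) and (ii) in order to produce a representation of $f$ with smaller $\delta$, contradicting minimality.

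\textbf{Decomposing the vanishing leading sum.} For $i\in S$ write $lt(h_i)=c_i x^{\beta_i}$ and $lc(g_i)=u_i\pi^{l_i}$ with $u_i$ a unit. The cancellation in $\sum_{i\in S}c_i x^{\beta_i}lt(g_i)$ can occur in two qualitatively different ways: pairs of terms may cancel because their leading coefficients add to zero, and individual leading coefficients may be killed by an annihilator. The pairwise cancellations are exactly those captured by $S$-polynomials of pairs $(g_i,g_j)$ with $i,j\in S$, while the annihilator cancellations are captured by $A$-polynomials of single $g_i$. Using the decomposition \eqref{ValDecom} of leading coefficients of elements of $R$ into unit times a power of $\pi$, one can inductively split the sum over $S$ into a linear combination of such $Spoly(g_i,g_j)$ and $Apoly(g_i)$ (the standard syzygy argument over principal ideal rings, made explicit in \cite{Norton2001strong}). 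By hypotheses (i) and (ii), each of these reduces strongly to $0$, meaning each equals a combination of elements of $G$ whose $lm(h\cdot g)$ is strictly smaller than $\delta$. Substituting back into $f=\sum_i h_i g_i$ yields a representation of $f$ with strictly smaller $\delta$, which is the desired contradiction.

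\textbf{Main obstacle.} The genuine difficulty is the decomposition step of the backward direction. Over a field only pairwise cancellations can occur, so the classical Buchberger criterion needs only $S$-polynomials. Over the chain ring $R$ a vanishing sum $\sum_{i\in S}c_i u_i\pi^{l_i}=0$ may involve both pairwise cancellations among the $\pi^{l_i}$ and annihilations by high powers of $\pi$, and disentangling the two cleanly so that the rewriting strictly decreases $\delta$ is the technical heart of the proof. This is precisely the reason $A$-polynomials must be introduced alongside $S$-polynomials in Definition \ref{SpolyApoly}.
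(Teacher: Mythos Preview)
The paper does not give its own proof of Theorem~\ref{CharacterizationGB}; it merely cites \cite[Corollary~5.13]{Norton2001strong} and \cite[Theorem~234]{Yengui2015constructive}. Your sketch is precisely the standard Buchberger-criterion argument over a chain ring that those references carry out, so there is nothing to compare: you are supplying the proof the paper delegates.

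Two points worth tightening. In the case $\delta=lm(f)$ you claim ``some $lt(h_i)lt(g_i)$ equals $lt(f)$''; over a ring with zero divisors this need not hold, since several summands may combine to produce $lt(f)$. What you actually obtain is $lt(f)=\sum_{i\in S} lt(h_i)\,lt(g_i)$, and that already places $lt(f)\in lt(G)$, which is all the Gr\"obner-basis condition $lt(G)=lt(I)$ requires. Second, you write ``Because $I=\langle G\rangle$'', but this hypothesis is absent from the paper's statement of the theorem. Without it the backward implication is false: take $I=(x,y)$ and $G=\{x\}$ over any finite chain ring; condition~(i) is vacuous and $Apoly(x)=0$ so condition~(ii) holds, yet $lt(G)=(x)\neq(x,y)=lt(I)$. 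The generating hypothesis is present in the cited sources, so this is an omission in the paper's formulation rather than a defect in your argument.
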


Similar to Buchberger's algorithm over fields, Theorem \ref%
{CharacterizationGB} is used in \cite[Algorithme 3.9]%
{Norton2001strong_cyclic} to compute Gr\"obner bases.

\begin{example}
\label{ExGB} Consider $\ g_{1}=4x^{2}y+y^{3}+2y+4$ and $g_{2}=4xy^{2}$ in $%
\mathbb{Z}
_{8}[x,y]$. With lexicographic order $x>y$, we have:

$g_{3}=Spoly(g_{1},g_{2})=yg_{1}-xg_{2}=y^{4}+2y^{2}+4y$,

$g_{4}=Apoly(g_{1})=2g_{1}=2y^{3}+4y$,

$g_{5}=Spoly(g_{1},g_{3})=y^{3}g_{1}-4x^{2}g_{3}=\allowbreak
y^{6}+2y^{4}+4y^{3}=y^{2}\allowbreak g_{3}\twoheadrightarrow _{\left\{
g_{3}\right\} }0$.

When we iterate the process, $\left\{ g_{1},g_{2},g_{3},g_{4}\right\} $ is a
Gr\"obner basis for the ideal generated by $\left\{ g_{1},g_{2}\right\} $.
\end{example}

\section{Solving System of Algebraic Equations Over Finite Chain Rings \label%
{SolvingFCR}}

In this section, we assume as in Section \ref{GrobnerBases} that $R$ is a
finite commutative chain ring with maximal ideal $\mathfrak{m}$ generated by $\pi$, residue
field $\mathbb{F}_{q}=R/\mathfrak{m}$, and that $\nu$ is the nilpotency index of $\pi $.

\subsection{Solving With Lifting Approach}

The lifting approach consists of using solutions in the residue field $R/%
\mathfrak{m}$ to construct solutions in the ring $R$. One such method was
given in \cite[Algorithm 1]{Mikhailov2004solving}. Let us recall the main
property that justifies this algorithm. Consider a system of polynomial
equations of the form

\begin{equation}
f_{i}\left( x_{1},\ldots ,x_{k}\right) =0,\ \ i=1,\ldots ,d  \label{PolyEqua}
\end{equation}
where $f_{i}\left( x_{1},\ldots ,x_{k}\right) \in R\left[ x_{1},\ldots x_{k}%
\right] $. Set $\mathbf{f}\left( \mathbf{x}\right) =\left( f_{i}\left(
x_{1},\ldots ,x_{k}\right) \right) _{1\leq i\leq d}$, and \newline
$D\mathbf{f}\left( \mathbf{x}\right) =\left( D_{s}f_{i}\left( \mathbf{x}%
\right) \right) _{1\leq i\leq d,1\leq s\leq k}$, where $D_{s}f_{i}\left(
\mathbf{x}\right) $ is the partial derivative of $f_{i}\left( \mathbf{x}%
\right) $ with respect to $x_{s}$. Let $\overline{\mathbf{f}\left( \mathbf{x}%
\right) }$ be the canonical projection of $\mathbf{f}\left( \mathbf{x}%
\right) $ onto the residue field $\mathbb{F}_{q}$. As specified in Proposition %
\ref{DecompositionFCR}, every element $\mathbf{c}$ in $R^{k}$, has the $\pi-$adic decomposition $\mathbf{c}=\sum_{j=0}^{\nu -1}\pi ^{j}\gamma _{j}\left(
\mathbf{c}\right) $ where $\gamma _{j}\left( \mathbf{c}\right) \in $ $\Gamma
^{k}$. By \cite{Mikhailov2004solving} we have the following:

\begin{proposition}
\label{LiftingSolution}Assume that $\overline{\mathbf{f}\left( \mathbf{x}%
\right) }\neq \overline{\mathbf{0}}$. A vector $\mathbf{c}\in R^{k}$ is a
solution of (\ref{PolyEqua}) if and only if the vector $\gamma _{0}\left(
\mathbf{c}\right) $ is a solution in $\Gamma ^{k}$ of the system of
polynomial equations%
\begin{equation*}
\mathbf{f}\left( \mathbf{z}\right) \equiv \mathbf{0}\ \left( mod\ \pi
\right) ,
\end{equation*}%
and for $j\in \left\{ 1,\ldots ,\nu -1\right\} $ the vector $\gamma
_{j}\left( \mathbf{c}\right) $ is a solution in $\Gamma ^{k}$ of the system
of linear equations%
\begin{equation*}
D\mathbf{f}\left( \gamma _{0}\left( \mathbf{c}\right) \right) \cdot \mathbf{z%
}\equiv -\gamma _{j}\left( \mathbf{f}\left( \mathbf{c}^{[j]}\right) \right)
\ \ \left( mod\ \pi \right) .
\end{equation*}
\end{proposition}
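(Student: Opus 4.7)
The plan is to prove both implications simultaneously by induction on the $\pi$-adic level, using a first-order Taylor expansion of $\mathbf{f}$ combined with the crude but decisive observation that $\pi^{2(j+1)}$ vanishes modulo $\pi^{j+2}$. Since $\mathbf{c} = \mathbf{c}^{[\nu]}$, the equation $\mathbf{f}(\mathbf{c}) = \mathbf{0}$ in $R^d$ is exactly the top-level congruence $\mathbf{f}(\mathbf{c}^{[\nu]}) \equiv \mathbf{0} \pmod{\pi^{\nu}}$, so it suffices to prove, for every $j \in \{0, \ldots, \nu-1\}$, the inductive claim $P(j)$: the congruence $\mathbf{f}(\mathbf{c}^{[j+1]}) \equiv \mathbf{0} \pmod{\pi^{j+1}}$ is equivalent to $\gamma_0(\mathbf{c})$ satisfying $\mathbf{f}(\mathbf{z}) \equiv \mathbf{0} \pmod{\pi}$ together with $\gamma_i(\mathbf{c})$ satisfying the linear congruence from the statement for every $i \in \{1, \ldots, j\}$.

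The base case $j=0$ is immediate from $\mathbf{c}^{[1]} = \gamma_0(\mathbf{c})$. For the inductive step from $P(j)$ to $P(j+1)$, I would write $\mathbf{c}^{[j+2]} = \mathbf{c}^{[j+1]} + \pi^{j+1}\gamma_{j+1}(\mathbf{c})$ and apply the formal multivariate Taylor expansion
\[
\mathbf{f}\bigl(\mathbf{c}^{[j+1]} + \pi^{j+1}\gamma_{j+1}(\mathbf{c})\bigr) = \mathbf{f}(\mathbf{c}^{[j+1]}) + \pi^{j+1}\, D\mathbf{f}(\mathbf{c}^{[j+1]}) \cdot \gamma_{j+1}(\mathbf{c}) + \pi^{2(j+1)} Q,
\]
which is valid over any commutative ring since the first-order term involves no denominators. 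Reducing modulo $\pi^{j+2}$, the remainder $\pi^{2(j+1)} Q$ disappears and the Jacobian $D\mathbf{f}(\mathbf{c}^{[j+1]})$ may be replaced by $D\mathbf{f}(\gamma_0(\mathbf{c}))$, because $\mathbf{c}^{[j+1]} \equiv \gamma_0(\mathbf{c}) \pmod{\pi}$ and it is multiplied by $\pi^{j+1}$. Under the inductive hypothesis $\mathbf{f}(\mathbf{c}^{[j+1]}) \equiv \mathbf{0} \pmod{\pi^{j+1}}$, its $\pi$-adic decomposition collapses to $\mathbf{f}(\mathbf{c}^{[j+1]}) \equiv \pi^{j+1}\gamma_{j+1}\bigl(\mathbf{f}(\mathbf{c}^{[j+1]})\bigr) \pmod{\pi^{j+2}}$. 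Substituting and identifying the coefficient of $\pi^{j+1}$ turns the desired congruence $\mathbf{f}(\mathbf{c}^{[j+2]}) \equiv \mathbf{0} \pmod{\pi^{j+2}}$ into exactly the linear equation
\[
D\mathbf{f}(\gamma_0(\mathbf{c})) \cdot \gamma_{j+1}(\mathbf{c}) \equiv -\gamma_{j+1}\bigl(\mathbf{f}(\mathbf{c}^{[j+1]})\bigr) \pmod{\pi},
\]
so that both directions of $P(j+1)$ follow from $P(j)$.

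The main obstacle is purely organizational: one must verify the formal Taylor identity over $R$ (it is a coefficient-wise consequence of expanding $(\mathbf{a}+\mathbf{h})^{\alpha}$ via the binomial theorem, and no factorials appear at degree one), carefully track at each step which terms survive modulo $\pi^{j+2}$, and interpret \emph{dividing by $\pi^{j+1}$} as reading off the $(j+1)$th coefficient of a $\pi$-adic expansion. Setting $j = \nu - 1$ in $P(j)$ then yields the statement of the proposition, with the hypothesis $\overline{\mathbf{f}(\mathbf{x})} \neq \overline{\mathbf{0}}$ only serving to ensure that the mod-$\pi$ system is not vacuous.
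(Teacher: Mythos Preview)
Your argument is correct; the inductive first-order Taylor expansion modulo successive powers of $\pi$ is the standard Hensel-type computation for such lifting statements, and your bookkeeping (in particular the replacement of $D\mathbf{f}(\mathbf{c}^{[j+1]})$ by $D\mathbf{f}(\gamma_0(\mathbf{c}))$ and the reading of the $(j{+}1)$th $\pi$-adic coefficient) is sound. Note, however, that the paper does not prove this proposition at all---it simply imports it from \cite{Mikhailov2004solving} with the sentence ``By \cite{Mikhailov2004solving} we have the following''---so there is no in-paper proof against which to compare yours.
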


By Proposition \ref{LiftingSolution}, solving systems of multivariate
polynomial equations over the finite chain ring $R$ is reduced to the finite
field $R/\mathfrak{m}$. However, the projection of certain systems of
equations on $R/\mathfrak{m}$ has several solutions and among these
solutions, only one can be lifted. For example, consider the following
system over $%
\mathbb{Z}
_{25}$:
\begin{equation}
\left\{
\begin{array}{c}
x^{5}-x=0 \\
5x+10=0%
\end{array}%
\right.  \label{ExSolveFCR1}
\end{equation}%
The residual field of $%
\mathbb{Z}
_{25}$ is $\mathbb{F}_{5}$. The projection of (\ref{ExSolveFCR1}) onto $%
\mathbb{F}_{5}$ is the equation $x^{5}-x=0$ and all elements in $\mathbb{F%
}_{5}$ are solutions. When we use Proposition \ref{LiftingSolution}, we
observe that only one solution can be lifted. Thus, the solution of (\ref%
{ExSolveFCR1}) in $%
\mathbb{Z}
_{25}$ is $x=18$. \ We can directly obtain this solution if we use the
method based on the canonical generating system \cite[Algorithm 2]%
{Mikhailov2004solving}. Thus, to solve systems of algebraic equations in one
variable in the next subsection, we will use the canonical generating system. For
the case of several variables, the methods which use solutions in the
residual field $R/\mathfrak{m}$ to construct solutions in the ring $R$ are
not appropriate in practice for some systems of polynomial equations,
specifically for parametric systems. As an illustration, consider the
following system over $%
\mathbb{Z}
_{8}$:%
\begin{equation}
\left\{
\begin{array}{c}
4x^{2}y+y^{3}+2y+4=0 \\
4xy^{2}=0%
\end{array}%
\right.  \label{ExSolveFCR2}
\end{equation}
Equation (\ref{ExSolveFCR2}) has $16$ solutions. When we use the lifting
approach to solve (\ref{ExSolveFCR2}) we compute each solution\textbf{\ }%
step by step. We will see in the next subsection that we can easily obtain
all these solutions using Gr\"obner bases.

\subsection{Solving With Gr\"obner Bases \label{SolvingWithGB1}}

In this subsection, we show how Gr\"obner bases can be used to solve
systems of multivariate polynomial equations over finite chain rings, as in
the case of finite fields. The direct consequence of Proposition \ref%
{EquivalenceGrobnerBases} is the elimination theorem given in \cite[Theorem
244]{Yengui2015constructive}.

\begin{proposition}
\label{EliminationTheorem}Let $G$ be a Gr\"obner basis for an ideal $I$ in
$R\left[ x_{1},\ldots ,x_{k}\right] $ with the lexicographic order $%
x_{1}>\cdots >x_{k}$. Then, for all $i$ in $\left\{ 1,\ldots ,k\right\} $, $%
G\cap R\left[ x_{i},\ldots ,x_{k}\right] $ is a Gr\"obner basis of $I\cap R%
\left[ x_{i},\ldots ,x_{k}\right] $.
\end{proposition}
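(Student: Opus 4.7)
The plan is to adapt the classical field-theoretic proof of the elimination theorem, with the key difference being that over a finite chain ring one must invoke Proposition \ref{EquivalenceGrobnerBases} to pass between Gr\"obner bases and strong Gr\"obner bases. Set $G_{i}:=G\cap R\left[ x_{i},\ldots ,x_{k}\right] $ and $I_{i}:=I\cap R\left[ x_{i},\ldots ,x_{k}\right] $. The inclusion $G_{i}\subseteq I_{i}$ is immediate, so to prove that $G_{i}$ is a Gr\"obner basis of $I_{i}$ (viewed as an ideal in the smaller polynomial ring $R\left[ x_{i},\ldots ,x_{k}\right] $, which is itself a polynomial ring over the finite chain ring $R$), it suffices to show the containment $lt\left( I_{i}\right) \subseteq lt\left( G_{i}\right) $.

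First I would fix an arbitrary nonzero $f\in I_{i}$. Since $f\in I$ and $G$ is a Gr\"obner basis of $I$, Proposition \ref{EquivalenceGrobnerBases} asserts that $G$ is in fact a strong Gr\"obner basis of $I$. Therefore there exists $g\in G$ such that $lt\left( g\right) $ divides $lt\left( f\right) $; in particular $lm\left( g\right) $ divides $lm\left( f\right) $. Because $f\in R\left[ x_{i},\ldots ,x_{k}\right] $, the monomial $lm\left( f\right) $ involves none of the variables $x_{1},\ldots ,x_{i-1}$, and consequently neither does $lm\left( g\right) $.

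The remaining point is to deduce from this that $g$ itself lies in $R\left[ x_{i},\ldots ,x_{k}\right] $, and this is exactly where the hypothesis on the lexicographic order $x_{1}>\cdots >x_{k}$ is used: any monomial involving some $x_{j}$ with $j<i$ is strictly greater, in this order, than every monomial that involves only $x_{i},\ldots ,x_{k}$. Since every monomial appearing in $g$ is $\leq lm\left( g\right) $, each such monomial also avoids $x_{1},\ldots ,x_{i-1}$, so $g\in R\left[ x_{i},\ldots ,x_{k}\right] \cap G=G_{i}$. Hence $lt\left( f\right) $ is divisible by $lt\left( g\right) $ with $g\in G_{i}$, which shows simultaneously that $G_{i}$ is a strong Gr\"obner basis of $I_{i}$ and, by another application of Proposition \ref{EquivalenceGrobnerBases}, a Gr\"obner basis of $I_{i}$.

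I do not anticipate any real obstacle in executing this plan. The only subtlety worth emphasizing is that, over a ring with zero divisors, one cannot conclude from ``$lt\left( f\right) $ lies in the ideal generated by the $lt\left( g_{j}\right) $'' that some individual $lt\left( g_{j}\right) $ divides $lt\left( f\right) $; this divisibility, which is indispensable for the argument in the second and third paragraphs, is supplied precisely by the strong Gr\"obner basis property guaranteed by Proposition \ref{EquivalenceGrobnerBases}.
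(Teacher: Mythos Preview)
Your argument is correct and is precisely the approach the paper has in mind: the paper does not write out a proof but states the result as a ``direct consequence of Proposition \ref{EquivalenceGrobnerBases}'' and cites \cite[Theorem 244]{Yengui2015constructive}, which is exactly the route you take---pass to the strong Gr\"obner basis via Proposition \ref{EquivalenceGrobnerBases}, use divisibility of leading terms to locate the relevant $g$, and exploit the lexicographic order to force $g\in R[x_i,\ldots,x_k]$. Your closing remark about why the strong property (rather than mere membership of $lt(f)$ in $lt(G)$) is essential over rings with zero divisors is well taken and makes explicit the point the paper leaves implicit.
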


The elimination theorem makes it possible to iteratively solve algebraic
systems by eliminating variables. Indeed, by Proposition \ref%
{EliminationTheorem}, if we compute a Gr\"obner basis $G$ of the ideal $%
I=\left( f_{1},\ldots ,f_{d}\right) $ associated to (\ref{PolyEqua})
with the lexicographic order $x_{1}>\cdots >x_{k}$, then $G$ will be of the
form $G=G_{1}\cup G_{2}\cup \cdots \cup G_{k}$, where $G_{1}=\left\{
g_{1,1}\left( x_{k}\right) ,\ldots ,g_{1,j_{1}}\left( x_{k}\right) \right\} $%
, \newline $G_{2}=\left\{ g_{2,1}\left( x_{k-1},x_{k}\right) ,\ldots ,g_{2,j_{2}}\left(
x_{k-1},x_{k}\right) \right\} $, $\ldots $, $G_{k}=\left\{ g_{k,1}\left(
x_{1},\ldots ,x_{k}\right) ,\ldots ,g_{k,j_{k}}\left( x_{1},\ldots
,x_{k}\right) \right\} $. So, (\ref{PolyEqua}) is equivalent to:

\begin{equation*}
\left\{
\begin{array}{c}
g_{1,1}\left( x_{k}\right) =\cdots =g_{1,j_{1}}\left( x_{k}\right) =0 \\
g_{2,1}\left( x_{k-1},x_{k}\right) =\cdots =g_{2,j_{2}}\left(
x_{k-1},x_{k}\right) =0 \\
\vdots \\
g_{k,1}\left( x_{1},\ldots ,x_{k}\right) =\cdots =g_{k,j_{k}}\left(
x_{1},\ldots ,x_{k}\right) =0%
\end{array}%
\right.
\end{equation*}

Thus, solving systems of multivariate polynomial equations is generally
reduced to solving systems of univariate polynomial equations. We will now
show how to use Gr\"obner bases over finite chain rings\ to solve systems
of univariate polynomial equations. Recall that a Gr\"obner basis $G$ is
called minimal if no proper subset of $G$ is a Gr\"obner basis for the
ideal generated by $G$. In \cite[Theorem 4.2]{Norton2001strong_cyclic}, a
characterization of minimal Gr\"obner bases in one variable over finite
chain rings has been given.

\begin{proposition}
\label{MinimalGrobnerBasis}Let $G\subset R[x]\backslash \left\{ 0\right\} $.
Then $G$ is a minimal Gr\"obner basis if and only if $G=\left\{ u_{0}\pi
^{a_{0}}g_{0},\ldots ,u_{s}\pi ^{a_{s}}g_{s}\right\} $ for some $0\leq s\leq
\nu -1$, $u_{i}\in R$ and $g_{i}\in R[x]$ for $i=0,\ldots ,s$ and such that:
\begin{enumerate}
    \item[(i)] $0\leq a_{0}<a_{1}<\cdots
<a_{s}\leq \nu -1$ and for $i=0,\ldots ,s$, $u_{i}$ is a unit;

\item[(ii)] for $i=0,\ldots ,s$, $g_{i}$ is monic;

\item[(iii)] $\deg \left( g_{i}\right) >\deg \left( g_{i+1}\right) $ for any $i \in
\{0,\ldots ,s-1\}$;

\item[(iv)] for $i=0,\ldots ,s-1$, $\pi ^{a_{i+1}}g_{i}$ is in the ideal generated by $\left\{ \pi
^{a_{i+1}}g_{i+1},\ldots ,\pi ^{a_{s}}g_{s}\right\} $.
\end{enumerate}
\end{proposition}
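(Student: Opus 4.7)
The plan is to establish both implications of the equivalence, exploiting the key feature of the univariate setting: the monomials $x^d$ are totally ordered by divisibility, so a leading term $u\pi^{a}x^{d}$ divides $u'\pi^{a'}x^{d'}$ if and only if $a\le a'$ and $d\le d'$. I would apply this repeatedly, together with Proposition \ref{EquivalenceGrobnerBases} (to move freely between Gröbner and strong Gröbner bases) and Theorem \ref{CharacterizationGB} (to certify the Gröbner basis property via $S$- and $A$-polynomials).

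For the necessity direction, I start with a minimal Gröbner basis $G$ and first show that its leading terms form an antichain under divisibility: if $lt(h)\mid lt(h')$ for distinct $h,h'\in G$, then $G\setminus\{h'\}$ is still a strong Gröbner basis (as $lt(h')\in lt(G\setminus\{h'\})$), contradicting minimality. Setting $a_i=val(lc(h_i))$ and $d_i=\deg(h_i)$, this antichain property combined with totality of divisibility of monomials forces the $a_i$ to be pairwise distinct; relabeling so $a_0<a_1<\cdots <a_s$ automatically yields $d_0>d_1>\cdots >d_s$, and since the $a_i$ take $s+1$ distinct values in $\{0,\ldots,\nu-1\}$, we get $s\le \nu-1$, proving (i) and (iii). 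For the monic factorization in (ii), I would show that every coefficient of $h_i$ is divisible by $\pi^{a_i}$: otherwise the element $\pi^{\nu-a_i}h_i\in I$ would be nonzero with leading term of $x$-degree strictly less than $d_i$, and the strong Gröbner property would force this leading term to be divisible by $lt(h_k)$ for some $k>i$, which a descending induction on $i$ rules out using the antichain structure. Condition (iv) is then obtained by strongly reducing $\pi^{a_{i+1}}g_{i}\in I$: its leading term $\pi^{a_{i+1}}x^{d_i}$ is divisible by $lt(h_i)$ or $lt(h_{i+1})$; choosing $h_{i+1}$ at this top step, every subsequent reduction has $x$-degree $<d_i$ and therefore uses only $h_{i+1},\ldots,h_s$, yielding a representation that witnesses the claimed membership.

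For the sufficiency direction, I invoke Theorem \ref{CharacterizationGB}. The $A$-polynomial is trivially handled, since $Apoly(h_i)=\pi^{\nu-a_i}h_i=u_i\pi^{\nu}g_i=0$. For $S$-polynomials, a direct computation using $a_i<a_j$ and $d_i>d_j$ for $i<j$ gives $Spoly(h_i,h_j)=\pi^{a_j}(g_i-x^{d_i-d_j}g_j)$; condition (iv), iterated to deduce $\pi^{a_j}g_i\in\langle \pi^{a_j}g_{i+1},\ldots,\pi^{a_s}g_s\rangle$, combined with the strict decrease of leading monomials under strong reduction, drives this $S$-polynomial to $0$ under $\twoheadrightarrow_G^{\ast}$. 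Minimality of $G$ is then immediate from the antichain structure of the leading terms forced by (i)–(iii): no element can be removed without dropping a generator of $lt(I)$.

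The main obstacle I anticipate is part (ii), the monic factorization, because it is not a formal consequence of minimality alone and requires careful bookkeeping of the valuation filtration alongside the strong Gröbner property. A secondary, related subtlety is establishing (iv) through strong reduction rather than mere ideal membership, where the choice of reducer at the top leading term is essential to avoid introducing a dependence on $h_i$ itself.
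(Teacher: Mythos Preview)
The paper does not actually prove this proposition: it is quoted verbatim from \cite[Theorem 4.2]{Norton2001strong_cyclic} with no argument supplied. So there is no ``paper's own proof'' to compare against; what follows is an assessment of your sketch on its own merits.

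Your treatment of (i), (iii), (iv), the sufficiency direction, and minimality is essentially sound. The antichain argument for the leading terms is correct; the derivation of (iv) by strongly reducing $\pi^{a_{i+1}}g_i$ and observing that after the first step all reducers must have index $>i$ is fine; and the $S$- and $A$-polynomial checks for sufficiency go through (most cleanly by induction on $s$, noting that $\{h_{i+1},\ldots,h_s\}$ again satisfies (i)--(iv), so $Spoly(h_i,h_j)\in(h_{i+1},\ldots,h_s)$ strongly reduces to~$0$).

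There is, however, a genuine gap in your argument for (ii). You propose to look at $\pi^{\nu-a_i}h_i$: if some coefficient $c_d$ of $h_i$ has $val(c_d)=v'<a_i$, this element is nonzero of degree $d'<d_i$ and must be reducible by some $h_k$ with $k>i$. That much is true, but it is \emph{not} a contradiction, and ``descending induction plus the antichain structure'' does not rule it out. The antichain condition compares only the $lt(h_j)$ among themselves; it says nothing about divisibility of $lt(\pi^{\nu-a_i}h_i)$ by some $lt(h_k)$. Concretely, all you obtain is $\pi^{\nu-a_i}h_i\in(h_{i+1},\ldots,h_s)\subset\pi^{a_{i+1}}R[x]$, hence $\nu-a_i+v'\ge a_{i+1}$, i.e.\ $v'\ge a_{i+1}+a_i-\nu$, which is strictly weaker than $v'\ge a_i$.

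The fix is to multiply by $\pi^{a_{i+1}-a_i}$ instead of $\pi^{\nu-a_i}$, equivalently to use $Spoly(h_i,h_{i+1})$. Write $S:=u_i^{-1}\pi^{a_{i+1}-a_i}h_i-u_{i+1}^{-1}x^{d_i-d_{i+1}}h_{i+1}$. Then $\deg S<d_i$, so the strong Gr\"obner property forces $S$ to reduce to $0$ using only $h_{i+1},\ldots,h_s$, whence $S\in(h_{i+1},\ldots,h_s)\subset\pi^{a_{i+1}}R[x]$ by the inductive hypothesis. Since the second summand already lies in $\pi^{a_{i+1}}R[x]$, so does $u_i^{-1}\pi^{a_{i+1}-a_i}h_i$. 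Thus $\pi^{a_{i+1}-a_i}c_d\in\pi^{a_{i+1}}R$ for every coefficient $c_d$ of $h_i$; because $a_{i+1}\le\nu-1$, the annihilator $\pi^{\nu-(a_{i+1}-a_i)}R$ is contained in $\pi^{a_i}R$, and one concludes $c_d\in\pi^{a_i}R$, which is exactly (ii). Your instinct that (ii) is the crux was right; the multiplier just needs to be chosen more carefully.
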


As specified in \cite{Mikhailov2004solving}, a minimal Gr\"obner basis in
one variable over finite chain rings is a canonical generating system.
Therefore, according to Proposition \ref{MinimalGrobnerBasis}, we can use
\cite[Algorithm 2]{Mikhailov2004solving} to solve systems of univariate
polynomial equations over finite chain rings using Gr\"obner bases.
Specifically, consider the system of univariate polynomial equations of the
form
\begin{equation}
f_{i}\left( x\right) =0,\ \ \ \ \ i=1,\ldots ,r  \label{PolyEqua1var1}
\end{equation}%
where $f_{i}\left( x\right) \in R\left[ x\right] $. Assume that a minimal Gr\"obner basis of the ideal generated by $\left\{ f_{1}\left( x\right) ,\ldots
,f_{r}\left( x\right) \right\} $ is $G=\left\{ u_{0}\pi ^{a_{0}}g_{0},\ldots
,u_{s}\pi ^{a_{s}}g_{s}\right\} $ like in Proposition \ref%
{MinimalGrobnerBasis}. As specified in \cite[page 64]{Mikhailov2004solving}
we can assume that $a_{0}=0$. Set $h_{j}=g_{i}$, for $0\leq i\leq s$ and $%
a_{i}\leq j<$ $a_{i+1}$, where $a_{s+1}=\nu $. Then, Equation (\ref%
{PolyEqua1var1}) is equivalent to the following system of polynomial
equations:

\begin{equation}
\pi ^{j}h_{j}\left( x\right) =0,\ \ \ \ j=0,\ldots ,\nu-1.
\label{PolyEqua1var2}
\end{equation}

As in \cite[Theorem 8]{Mikhailov2004solving} and \cite[Equation (54)]%
{Mikhailov2004solving}, we will use the derivation $Dh_{j}\left( x\right) $
of $h_{j}\left( x\right) $ to solve Equation (\ref{PolyEqua1var2}). Recall
that any element $c$ in $R$ admits a unique $\pi -$adic decomposition $%
c=\sum_{j=0}^{\nu -1}\gamma _{j}\left( c\right) \pi ^{j}$, where $\gamma
_{j}\left( c\right) \in $ $\Gamma $.

\begin{proposition}
\label{LiftingSolutionOneVariable}An element $c$ in $R$, is a solution of (%
\ref{PolyEqua1var2}) if and only if $\gamma _{0}\left( c\right) $ is a
solution in $\Gamma $ of the polynomial equation%
\begin{equation*}
h_{\nu -1}\left( x\right) \equiv 0\ \left( mod\ \pi \right) ,
\end{equation*}%
and for $j\in \left\{ 1,\ldots ,\nu -1\right\} $, $\gamma _{j}\left(
c\right) $ is a solution in $\Gamma $ of the linear equation:%
\begin{equation*}
Dh_{\nu -j-1}\left( \gamma _{0}\left( c\right) \right) x\equiv -\gamma
_{j}\left( h_{\nu -j-1}\left( c^{[j]}\right) \right) \ \ \left( mod\ \pi
\right) .
\end{equation*}
\end{proposition}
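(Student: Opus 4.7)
The plan is to recast the system (\ref{PolyEqua1var2}) as a chain of congruences and analyze it level by level via a Taylor-style expansion. Since the annihilator of $\pi^j$ in $R$ is $\pi^{\nu-j} R$, the equation $\pi^j h_j(c) = 0$ is equivalent to $h_j(c) \equiv 0 \pmod{\pi^{\nu-j}}$, and re-indexing with $k = \nu-1-j$ turns the system into
\[
h_{\nu-1-k}(c) \equiv 0 \pmod{\pi^{k+1}}, \qquad k = 0, 1, \ldots, \nu-1.
\]
I would then prove by induction on $k$ that the conditions imposed by the proposition on $\gamma_0(c), \ldots, \gamma_k(c)$ are equivalent to this family of congruences up to level $k$.

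The base case $k = 0$ is immediate: $h_{\nu-1}(c) \equiv h_{\nu-1}(\gamma_0(c)) \pmod \pi$ since $c \equiv \gamma_0(c) \pmod{\pi}$. For the inductive step $k \geq 1$, the crucial ingredient is the polynomial identity
\[
h(u+v) = h(u) + v \cdot Dh(u) + v^{2} P_h(u, v) \quad \text{for some } P_h \in R[u,v],
\]
applied with $u = c^{[k]}$ and $v = c - c^{[k]}$. From the $\pi$-adic decomposition, $v \in \pi^k R$ and $v \equiv \pi^k \gamma_k(c) \pmod{\pi^{k+1}}$; moreover $v^2 \in \pi^{2k} R \subseteq \pi^{k+1} R$ since $k \geq 1$. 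Specialising $h$ to $h_{\nu-1-k}$, this yields
\[
h_{\nu-1-k}(c) \equiv h_{\nu-1-k}(c^{[k]}) + \pi^k \gamma_k(c) \cdot Dh_{\nu-1-k}(c^{[k]}) \pmod{\pi^{k+1}}.
\]

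The main obstacle is an auxiliary claim: under the inductive hypothesis, $h_{\nu-1-k}(c^{[k]}) \equiv 0 \pmod{\pi^k}$, which is what allows the factor $\pi^k$ to be extracted. I would split into two subcases. If $h_{\nu-1-k} = h_{\nu-k}$, the previous-level congruence $h_{\nu-k}(c) \equiv 0 \pmod{\pi^k}$ together with $c \equiv c^{[k]} \pmod{\pi^k}$ (which forces $h(c) \equiv h(c^{[k]}) \pmod{\pi^k}$ for any polynomial $h$) yields the claim. If $h_{\nu-1-k} \neq h_{\nu-k}$, the description of the $h_j$ in terms of the canonical form of Proposition \ref{MinimalGrobnerBasis} forces $\nu - k = a_{l+1}$ for some $l \leq s-1$, with $h_{\nu-1-k} = g_l$ and $h_{\nu-k} = g_{l+1}$. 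I would then invoke condition (iv) of that proposition to write $\pi^{a_{l+1}} g_l = \sum_{m \geq l+1} q_m(x) \pi^{a_m} g_m(x)$, evaluate at $c$, and observe that the inductive hypothesis forces $\pi^{a_m} g_m(c) = 0$ for every $m \geq l+1$ (since each such $g_m$ equals $h_{a_m}$ at the smaller level $k' = \nu-1-a_m$). Hence $\pi^{a_{l+1}} g_l(c) = 0$, i.e., $g_l(c) \equiv 0 \pmod{\pi^k}$. This subcase is the delicate one, requiring a careful bookkeeping of which generators of the Gr\"obner basis have already been annihilated.

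With the auxiliary claim in hand, I write $h_{\nu-1-k}(c^{[k]}) = \pi^k \alpha$, where $\alpha \equiv \gamma_k\bigl(h_{\nu-1-k}(c^{[k]})\bigr) \pmod \pi$, substitute into the Taylor congruence, factor out $\pi^k$, and reduce modulo $\pi$, using $c^{[k]} \equiv \gamma_0(c) \pmod \pi$ to rewrite $Dh_{\nu-1-k}(c^{[k]})$ as $Dh_{\nu-1-k}(\gamma_0(c))$ in the residue field. This shows that, conditional on the previous levels, the congruence $h_{\nu-1-k}(c) \equiv 0 \pmod{\pi^{k+1}}$ holds if and only if
\[
Dh_{\nu-1-k}(\gamma_0(c)) \cdot \gamma_k(c) \equiv -\gamma_k\bigl(h_{\nu-1-k}(c^{[k]})\bigr) \pmod \pi,
\]
which is exactly the condition of the proposition at level $j = k$. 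Both directions of the equivalence arise from the same manipulation, closing the induction.
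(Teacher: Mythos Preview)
Your argument is correct. The paper itself does not give a proof of this proposition; it simply attributes the result to \cite[Theorem 8 and Equation (54)]{Mikhailov2004solving} and states it. Your Hensel-lifting/Taylor-expansion approach, combined with the use of condition (iv) of Proposition~\ref{MinimalGrobnerBasis} to bridge the levels where the polynomial $h_j$ changes, is exactly the natural route and matches the spirit of the cited reference (which is also what underlies Proposition~\ref{LiftingSolution}).

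One small point to make explicit: in your second subcase you conclude $g_l(c)\equiv 0\pmod{\pi^k}$, whereas the auxiliary claim is about $h_{\nu-1-k}(c^{[k]})$. You should add the one-line observation that $c\equiv c^{[k]}\pmod{\pi^k}$ implies $g_l(c)\equiv g_l(c^{[k]})\pmod{\pi^k}$, closing the gap between the two. Otherwise the proof is complete.
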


According to Propositions \ref{EliminationTheorem} and \ref%
{LiftingSolutionOneVariable}, to solve a system of multivariate polynomial
equations over finite chain rings, we can compute a Gr\"obner basis of the
associated system with the lexicographic order and successively solve
systems of univariate polynomial equations. We will see in Sections \ref%
{MinRankproblem} and \ref{RDproblem} that this approach is appropriate for
some systems of algebraic equations when we just need a partial solution.

\begin{example}
\label{ExSolveFCR_GB} Let us solve Equation (\ref{ExSolveFCR2}) over $%
\mathbb{Z}
_{8}$ using Gr\"obner bases. According to Example \ref{ExGB}, a Gr\"obner basis
with the lexicographic order $x>y$ of the ideal $I$ generated by $\left\{
4x^{2}y+y^{3}+2y+4,4xy^{2}\right\} $ is $G=\left\{
g_{1,1},g_{1,2},g_{2,1},g_{2,2}\right\} $ where $g_{1,1}\left( y\right)
=y^{4}+2y^{2}+4y$, $g_{1,2}\left( y\right) =2y^{3}+4y$, $g_{2,1}\left(
x,y\right) =4x^{2}y+y^{3}+2y+4$, $g_{2,2}\left( x,y\right) =4xy^{2}$. By
Proposition \ref{EliminationTheorem}, a Gr\"obner basis of $I\cap R\left[ y%
\right] $ is $G_{1}=G\cap R\left[ y\right] =\left\{ g_{1,1}\left( y\right)
,g_{1,2}\left( y\right) \right\} $. So, we can use $G_{1}$ to find the
partial solution $y$ of (\ref{ExSolveFCR2}). The system
\begin{equation*}
g_{1,1}\left( y\right) =g_{1,2}\left( y\right) =0
\end{equation*}%
is equivalent to%
\begin{equation}
h_{1,1}\left( y\right) =2h_{1,2}\left( y\right) =4h_{1,3}\left( y\right) =0
\label{ExSolveFCR2_y}
\end{equation}%
where $h_{1,1}\left( y\right) =g_{1,1}\left( y\right) $ and $h_{1,2}\left(
y\right) =h_{1,3}\left( y\right) =y^{3}+2y$. Let $c$ be a solution of (\ref%
{ExSolveFCR2_y}). We have $c=\gamma _{0}\left( c\right) +2\gamma _{1}\left(
c\right) +4\gamma _{2}\left( c\right) $ where $\gamma _{j}\left( c\right)
\in \Gamma =\left\{ 0,1\right\} $ for $j\in \left\{ 0,1,2\right\} $. By
Proposition \ref{LiftingSolutionOneVariable}, $\gamma _{0}\left( c\right) $
is a solution in $\Gamma $ of the equation $h_{1,3}\left( c\right) \equiv 0\
\left( mod\ 2\right) $. So, $\gamma _{0}\left( c\right) =0$. By Proposition %
\ref{LiftingSolutionOneVariable}, $\gamma _{1}\left( c\right) $ is a
solution in $\Gamma $ of the equation $Dh_{1,2}\left( \gamma _{0}\left(
c\right) \right) y\equiv -\gamma _{1}\left( h_{1,2}\left( c^{[1]}\right)
\right) \ \left( mod\ 2\right) $. We have $c^{[1]}=\gamma _{0}\left(
c\right) =0$, $h_{1,2}\left( c^{[1]}\right) =0$, $\gamma _{1}\left(
h_{1,2}\left( c^{[1]}\right) \right) =0$, $Dh_{1,2}\left( y\right) =3y^{2}+2$%
, and $Dh_{1,2}\left( \gamma _{0}\left( c\right) \right) =2$. Therefore, $%
\gamma _{1}\left( c\right) $ is a solution of\ $2y\equiv 0\ \left( mod\
2\right) $. Thus, $\gamma _{1}\left( c\right) \in \left\{ 0,1\right\} $.
Using the same reasoning, for $\gamma _{1}\left( c\right) =0$ or $\gamma
_{1}\left( c\right) =1$, we compute $\gamma _{2}\left( c\right) \in \left\{
0,1\right\} $. Therefore, $c\in \left\{ 0,2,4,6\right\} $. Thus, the partial
solution $y$ of (\ref{ExSolveFCR2}) is in $\left\{ 0,2,4,6\right\} $. To
find the partial solution $x$ corresponding for example to $y=0$, we must
first calculate a Gr\"obner basis of $\left\{ g_{2,1}\left( x,0\right)
,g_{2,2}\left( x,0\right) \right\} $. But for all $x$ in $%
\mathbb{Z}
_{8}$, $g_{2,1}\left( x,0\right) =4\neq 0$, $g_{2,1}\left( x,4\right) =4\neq
0$, $g_{2,1}\left( x,2\right) =g_{2,2}\left( x,2\right) =0$, and $%
g_{2,1}\left( x,6\right) =g_{2,2}\left( x,6\right) =0$. Thus, $y$ is in $%
\left\{ 2,6\right\} $ and the solutions of (\ref{ExSolveFCR2}) are $\left\{
\left( t,2\right) ,\left( t,6\right) ,t\in
\mathbb{Z}
_{8}\right\} $.
\end{example}

\begin{remark}
\label{RingEqua}In \cite[Theorem 5.14]{Nechaev2008finite}, the monic
polynomial $F_{m}$ with smaller degree and satisfying 
$%
F_{m}\left( x\right) =0$ for all $x$ in $R$, has been defined. Thus, as in the case of finite
fields, to simplify the resolution of (\ref{PolyEqua}) we can add the following equations $%
F_{m}\left( x_{1}\right) =\cdots =F_{m}\left( x_{k}\right) =0$.
\end{remark}
\begin{example}
\label{ExSolveFCR_GB2}Consider again Equation (\ref{ExSolveFCR2}) over $%
\mathbb{Z}
_{8}$. A Gr\"obner basis with the lexicographic order $y>x$ of the ideal $I$
generated by $\left\{ 4x^{2}y+y^{3}+2y+4,4xy^{2}\right\} $ is the same set
\newline
$\left\{ 4x^{2}y+y^{3}+2y+4,4xy^{2}\right\} $. Consequently, $I\cap
\mathbb{Z}
_{8}[x]=\left\{ 0\right\} $. So, we cannot solve Equation (\ref{ExSolveFCR2}%
) directly by using only Proposition \ref{LiftingSolutionOneVariable} with
the lexicographic order $y>x$. However, according to \cite[Theorem 5.14]%
{Nechaev2008finite}, the monic polynomial $F_{m}$ for the ring $%
\mathbb{Z}
_{8}$ is defined by
\begin{equation*}
F_{m}\left( x\right) =\left( x^{2}-x\right) ^{2}-2\left( x^{2}-x\right) .
\end{equation*}

A Gr\"obner basis with the lexicographic order $y>x$ of the ideal generated by
\newline
$\left\{ 4x^{2}y+y^{3}+2y+4,4xy^{2},F_{m}\left( x\right),F_{m}\left(
y\right) \right\} $ is $\left\{ y^{2}+4,2y+4,F_{m}\left( x\right) \right\} $%
. Therefore, (\ref{ExSolveFCR2}) is equivalent to $y^{2}+4=2y+4=0$. We solve
the system $y^{2}+4=2y+4=0$ using Proposition \ref%
{LiftingSolutionOneVariable}, and we obtain $y=2$ or $y=6$. Thus, the
solutions of (\ref{ExSolveFCR2}) are $\left\{ \left( t,2\right) ,\left(
t,6\right) ,t\in
\mathbb{Z}
_{8}\right\} $.
\end{example}

\section{Solving Systems of Algebraic Equations Over Finite Commutative
Local Rings \label{LocalRings}}

In the preview section, we have used Gr\"obner bases to show how one can solve
systems of algebraic equations over finite chain rings. We will now show
that solving systems of algebraic equations over finite commutative rings
can be reduced to finite chain rings. According to \cite[ Theorem VI.2]%
{Mcdonald1974finite}, if $R$ is a finite commutative ring, then $R$ can be
decomposed as a direct sum of local rings, that is to say $R \cong
R_{(1)}\times \cdots \times R_{(\rho )}$ where for $j=1,\ldots ,\rho $, $R_{(j)}$ is a finite
commutative local ring. Thus, the problem of solving systems of algebraic equations over $R$ can be reduced to solving systems of algebraic equations over the various $R_{(j)}$. However, Gr\"o%
bner basis are not generally equal to strong Gr\"obner bases over local
rings. Therefore, we will use Galois rings, which are specific classes of finite chain rings to represent finite local rings. As specified in \cite%
{Agrawal2005automorphisms,Behboodi2014classification},\ finite rings have
several representations (the table representation, the basis
representation, and the polynomial representation). Galois rings can be used
to give the basis representation and the polynomial representation of \
finite commutative local rings \cite[Theorems XVI.2 and XVII.1]%
{Mcdonald1974finite}. In \cite{Bulyovszky2017polynomial}, Bulyovszky and Horv\'ath used the basis representation to give a good method for solving systems
of linear equations over finite local rings. We are going to extend this method to
systems of multivariate polynomial equations.

In this section, we assume that $R$ is a finite commutative local ring with
maximal ideal $\mathfrak{m}$ and residue field $\mathbb{F}_{q}=R/%
\mathfrak{m}$. Set $q=p^{\mu }$ where $p$ is a prime number. Then the
characteristic of $R$ is $p^{\varsigma }$ where $\varsigma $ is a
non-negative integer and by \cite[ Theorem XVII.1]{Mcdonald1974finite}\
there is a subring $R_{0}$ of $R$ such that $R_{0}$ is isomorphic to the
Galois ring of characteristic $p^{\varsigma }$ and cardinality $p^{\mu
\varsigma }$. Considering $R$ as a $R_{0}-$module, there exist $\theta
_{1},\ldots ,\theta _{\gamma }$ in $R$ such that
\begin{equation}
R=R_{0}\theta _{1}\oplus \cdots \oplus R_{0}\theta _{\gamma }.
\label{BasisRepresentation}
\end{equation}%
Let $j$ in $\left\{ 1,\ldots ,\gamma \right\} $. Since every ideal in $R_{0}$
is generated by a power of $p$, then there is $\varsigma _{j}$ in $\left\{
1,\ldots ,\varsigma \right\} $ such that
\begin{equation*}
p^{\varsigma _{j}}R_{0}=Ann\left( \theta _{j}\right) =\left\{ a\in
R_{0}:a\theta _{j}=0\right\} .
\end{equation*}

\begin{lemma}
\label{EquivLocal}Let $u$ in $R$ and $u_{j}$ in $R_{0}$ such that $%
u=\sum_{j=1}^{\gamma }u_{j}\theta _{j}$. The following statements are
equivalent:

(a) $u=0$;

(b) for all $j\in \left\{ 1,\ldots ,\gamma \right\} $, $\ \theta _{j}u_{j}=0$%
;

(c) for all $j\in \left\{ 1,\ldots ,\gamma \right\} $, $\ p^{\varsigma
-\varsigma _{j}}u_{j}=0$.

Moreover, each element $u_{j}$ is unique modulo $p^{\varsigma _{j}}$.
\end{lemma}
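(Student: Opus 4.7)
The plan is to exploit the $R_0$-module direct sum decomposition in \eqref{BasisRepresentation} together with the fact that $R_0$ is a Galois ring, hence a finite chain ring whose unique maximal ideal is generated by $p$ and whose nilpotency index is $\varsigma$.

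First I would obtain (a) $\Leftrightarrow$ (b) directly from the direct sum structure. If $u = \sum_{j=1}^{\gamma} u_j \theta_j = 0$ and the sum is direct as a sum of $R_0$-modules, then each summand $u_j \theta_j$ must vanish; the converse is immediate. So this equivalence requires no computation beyond invoking \eqref{BasisRepresentation}.

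Next I would prove (b) $\Leftrightarrow$ (c) by a valuation argument inside $R_0$. By definition $\theta_j u_j = 0$ is equivalent to $u_j \in \mathrm{Ann}(\theta_j) = p^{\varsigma_j} R_0$. Using Proposition \ref{DecompositionFCR}(a) in the chain ring $R_0$, write $u_j = p^{l_j} v_j$ with $v_j$ a unit when $u_j \neq 0$; the condition $u_j \in p^{\varsigma_j} R_0$ then becomes $l_j \geq \varsigma_j$ (the case $u_j = 0$ being trivial). On the other hand $p^{\varsigma - \varsigma_j} u_j = p^{\varsigma - \varsigma_j + l_j} v_j$, which equals $0$ if and only if $\varsigma - \varsigma_j + l_j \geq \varsigma$, i.e., $l_j \geq \varsigma_j$, since $\varsigma$ is the nilpotency index of $p$ in $R_0$. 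The two conditions therefore coincide, yielding (b) $\Leftrightarrow$ (c).

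For the uniqueness statement, I would suppose $\sum_j u_j \theta_j = \sum_j u'_j \theta_j$, subtract to obtain $\sum_j (u_j - u'_j) \theta_j = 0$, and apply the equivalence (a) $\Leftrightarrow$ (b) already established to conclude $(u_j - u'_j) \theta_j = 0$ for every $j$. This means $u_j - u'_j \in \mathrm{Ann}(\theta_j) = p^{\varsigma_j} R_0$, i.e., $u_j \equiv u'_j \pmod{p^{\varsigma_j}}$. The only subtle point — which is the mildly tricky step — is keeping in mind that the $\theta_j$ are not a free $R_0$-basis: they carry nontrivial annihilators, so uniqueness of the coefficients is only modulo $p^{\varsigma_j} R_0$, not absolute, but this is exactly what the statement predicts and what the direct-sum property delivers.
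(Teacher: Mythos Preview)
Your proof is correct. The paper states Lemma~\ref{EquivLocal} without proof, so there is no argument to compare against; what you have written is exactly the natural verification one would supply: the equivalence (a)$\Leftrightarrow$(b) is the defining property of the direct sum \eqref{BasisRepresentation}, the equivalence (b)$\Leftrightarrow$(c) is the standard annihilator computation in the chain ring $R_0$ (where $\mathrm{Ann}_{R_0}(p^{\varsigma_j})=p^{\varsigma-\varsigma_j}R_0$), and the uniqueness clause follows by subtracting two representations and applying (a)$\Rightarrow$(b).
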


Lemma \ref{EquivLocal}\ and the basis decomposition (\ref%
{BasisRepresentation}) can be used to transform a system of multivariate
polynomial equations over finite local rings to Galois rings. Specifically,
we have the following:

\begin{theorem}
\label{EquiEquaLocalRing}Consider a system of polynomial equations of the
form

\begin{equation}
f_{r}\left( \left( x_{i}\right) _{1\leq i\leq k}\right) =0,\ \ r=1,\ldots ,d
\label{EquaLocalRing1}
\end{equation}
where $f_{r}$ are multivariate polynomial functions with coefficients in $R$
and $\left( x_{i}\right) _{1\leq i\leq k}\in R^{k}$. Set
\begin{equation*}
x_{i}=\sum_{j=1}^{\gamma }x_{i,j}\theta _{j},\ \ i=1,\ldots ,k
\end{equation*}%
where $x_{i,j}\in R_{0}$ and
\begin{equation*}
f_{r}\left( \left( x_{i}\right) _{1\leq i\leq k}\right) =\sum_{s=1}^{\gamma
}f_{r,s}\left( \left( x_{i,j}\right) _{1\leq i\leq k,1\leq j\leq \gamma
}\right) \theta _{s},\ \ r=1,\ldots ,d
\end{equation*}%
where $f_{r,s}$ are multivariate polynomial functions with coefficients in $%
R_{0}$. Then Equation (\ref{EquaLocalRing1}) is equivalent to
\begin{equation}
p^{\varsigma -\varsigma _{s}}f_{r,s}\left( \left( x_{i,j}\right) _{1\leq
i\leq k,1\leq j\leq \gamma }\right) =0,\ \ r=1,\ldots ,d,\ s=1,\ldots
,\gamma .  \label{EquaLocalRing2}
\end{equation}
\end{theorem}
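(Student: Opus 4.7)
The plan is to reduce the theorem directly to Lemma \ref{EquivLocal} applied coordinate-by-coordinate to each polynomial $f_r$. The main preliminary work is to justify the existence of the polynomials $f_{r,s}$ appearing in the statement and to check that the conclusion does not depend on any non-canonical choice made in their construction.

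First I would establish the decomposition $f_r\left(\left(x_i\right)_{1\leq i\leq k}\right) = \sum_{s=1}^{\gamma} f_{r,s}\left(\left(x_{i,j}\right)_{i,j}\right)\theta_s$ with $f_{r,s} \in R_0[\{x_{i,j}\}]$. For each monomial $c\, x_1^{d_1}\cdots x_k^{d_k}$ appearing in $f_r$, I would write the coefficient in the basis (\ref{BasisRepresentation}) as $c = \sum_s c_s \theta_s$, substitute $x_i = \sum_j x_{i,j}\theta_j$, and expand formally. Every resulting term carries a product $\theta_{j_1}\cdots \theta_{j_m}$, which by induction and the fixed structure constants $\theta_j \theta_{j'} = \sum_s \lambda_{j,j'}^{s}\theta_s$ (with $\lambda_{j,j'}^{s}\in R_0$, obtained once and for all from the basis representation of $R$) can be rewritten as an $R_0$-linear combination of $\theta_1,\ldots,\theta_\gamma$. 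Collecting the coefficient of each $\theta_s$ produces a polynomial $f_{r,s}$ over $R_0$ with the claimed property.

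With this decomposition in hand, the theorem becomes immediate: applying Lemma \ref{EquivLocal} with $u = f_r\left(\left(x_i\right)\right)$ and $u_s = f_{r,s}\left(\left(x_{i,j}\right)\right)$ gives $f_r\left(\left(x_i\right)\right) = 0$ if and only if $p^{\varsigma-\varsigma_s} f_{r,s}\left(\left(x_{i,j}\right)\right) = 0$ for every $s \in \{1,\ldots,\gamma\}$. For the forward direction one starts from a solution in $R^k$ and decomposes each component via (\ref{BasisRepresentation}); for the reverse direction one starts from a tuple $\left(x_{i,j}\right)$ in $R_0^{k\gamma}$ satisfying (\ref{EquaLocalRing2}) and sets $x_i = \sum_j x_{i,j}\theta_j$. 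Running the equivalence simultaneously for every $r \in \{1,\ldots,d\}$ then yields the equivalence between (\ref{EquaLocalRing1}) and (\ref{EquaLocalRing2}).

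The main subtlety that I expect to need some care is that the polynomials $f_{r,s}$ are not canonical: by the uniqueness clause of Lemma \ref{EquivLocal}, each coefficient of $f_{r,s}$ is determined only modulo $p^{\varsigma_s}$. Fortunately, two valid choices of $f_{r,s}$ differ by a polynomial whose coefficients lie in $p^{\varsigma_s}R_0$, and since $R_0$ has characteristic $p^{\varsigma}$, multiplying by $p^{\varsigma-\varsigma_s}$ annihilates this difference. Hence the system (\ref{EquaLocalRing2}) is well-defined independently of the choice of representatives, and the equivalence passes through cleanly without any ambiguity.
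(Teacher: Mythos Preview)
Your proposal is correct and follows exactly the route the paper indicates: the paper does not spell out a formal proof but states just before the theorem that ``Lemma \ref{EquivLocal} and the basis decomposition (\ref{BasisRepresentation}) can be used to transform a system of multivariate polynomial equations over finite local rings to Galois rings,'' which is precisely what you carry out. Your additional paragraph on the non-canonicity of the $f_{r,s}$ and why the system (\ref{EquaLocalRing2}) is nonetheless well-defined is a welcome clarification that the paper leaves implicit.
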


Since Galois rings are specific cases of finite chain rings, we can use the
methods described in Section \ref{SolvingFCR} to solve (\ref{EquaLocalRing2}%
).

\begin{example}
\label{ExSolve_LR}In this example we consider a local ring of size $16$
which in not a finite chain ring. As specified in \cite{Martinez2015codes},
we can choose $R=%
\mathbb{Z}
_{8}\left[ X\right] /I$ where $I$ is the ideal generated by $X^{2}+4$ and $%
2X$. Then $R$ is a local ring with maximal ideal generated by $2+I$
and $X+I$. Set $\theta =X+I$, then a maximal Galois subring of $R$ is $R_{0}=%
\mathbb{Z}
_{8}$ and we have $R=\theta _{1}R_{0}\oplus \theta _{2}R_{0}$ where $\theta
_{1}=1$ and $\theta _{2}=\theta $. Moreover, $Ann\left( \theta _{1}\right)
=\left\{ 0\right\} =2^{3}R_{0}$ and $Ann\left( \theta _{2}\right) =2R_{0}$.
We would like to find the roots of the polynomial function defined over $R$
by
\begin{equation*}
P\left( x\right) =x^{3}+2x+4.
\end{equation*}%
The residual field of $R$ is $\mathbb{F}_{2}$ and the projection over $%
\mathbb{F}_{2}$ \ of $P\left( x\right) $ is $\overline{P}\left( x\right)
=x^{3}$ which is not square-free. Therefore, we are not able to find the
roots of $P$ using methods based on the Hensel's lemma \cite[Theorem XIII.4]%
{Mcdonald1974finite} or the Newton-Hensel's lemma \cite[Proposition 2.1.9]%
{Fontein2005elliptic}. Thus, an alternative method is to use Theorem \ref%
{EquiEquaLocalRing}. Set $x=x_{1}+x_{2}\theta $ where $x_{1}$ and $x_{2}$
are in $R_{0}$. Then,%
\begin{equation*}
P\left( x_{1}+x_{2}\theta \right) =x_{1}^{3}+4x_{1}x_{2}^{2}+2x_{1}+4+\theta
x_{1}^{2}x_{2}.
\end{equation*}%
Therefore, the equation
\begin{equation*}
x^{3}+2x+4=0
\end{equation*}%
is equivalent to the system
\begin{equation}
\left\{
\begin{array}{c}
x_{1}^{3}+4x_{1}x_{2}^{2}+2x_{1}+4=0 \\
4x_{1}^{2}x_{2}=0%
\end{array}%
\right.  \label{EquaLocalRing3}
\end{equation}

Thanks to Example \ref{ExSolveFCR_GB}, we deduce that the solutions of (\ref{EquaLocalRing3}) are $%
\left( x_{1},x_{2}\right) $ in $\left\{ \left( 2,t\right) ,\left( 6,t\right)
,t\in
\mathbb{Z}
_{8}\right\} $. As $2\theta =0$ and $x=x_{1}+x_{2}\theta $, then $x_{2}$ is
unique modulo $2$. We can therefore choose $x_{2}$ in $\left\{ 0,1\right\} $%
. Thus, the roots of $P$ are $2$, $6$, $2+\theta $, and $6+\theta $.
\end{example}

\section{MinRank Problem Over Finite Principal Ideal Rings \label%
{MinRankproblem}}

In this section, we extend some algebraic modeling of the MinRank problem
over finite principal ideal rings. In what follow, we assume that $R$ is
a finite commutative principal ideal ring. The set of all $m\times n$
matrices with entries in a ring $R$ will be denoted by $R^{m\times n}$. Let $%
\mathbf{A}\in R^{m\times n}$, we denote by $row\left( \mathbf{A}\right) $
the $R-$submodule generated by the row vectors of $\mathbf{A}$. The
transpose of $\mathbf{A}$ is denoted by $\mathbf{A}^{\top }$ and the $%
k\times k$ identity matrix is denoted by $\mathbf{I}_{k}$.

\subsection{MinRank Problem}
\begin{definition}
Let $\ \mathbf{A}\in R^{m\times n}$. The \textbf{rank} of $\mathbf{A}$,
denoted by $rk_{R}\left( \mathbf{A}\right) $ or simply by $rk\left( \mathbf{A%
}\right) $ is the smallest number of elements in $row(\mathbf{A})$ which
generate $row(\mathbf{A})$ as a $R-$module.
\end{definition}

As specified in \cite[Proposition 3.4]{Kamche2019rank}, the Smith normal
form can be used to compute the rank of a matrix. Moreover, as in the case
of fields, the map $R^{m\times n}\times R^{m\times n}\rightarrow
\mathbb{N}
$, given by $\left( \mathbf{A,B}\right) \mapsto row\left( \mathbf{A-B}%
\right) $ is a metric. However, some properties of the rank of a matrix over
fields generally do not extend to rings due to zero divisors.

\begin{example}
Consider the matrix $\mathbf{A=}\left(
\begin{array}{cc}
2 & 0 \\
0 & 4%
\end{array}%
\right) $ over $\mathbb{Z}_{8}$. Then, $rk\left( \mathbf{A}\right) =2$,  $%
rk\left( 6\mathbf{A}\right) =1$ and $det(\mathbf{A})=0$. Thus, $rk\left(
\mathbf{A}\right) \neq rk\left( 6\mathbf{A}\right) $ and $rk\left( \mathbf{A}%
\right) $  is not equal to the order of the highest order non-vanishing
minor.
\end{example}

\begin{definition}
Let $\mathbf{M}_{0}$, $\mathbf{M}_{1}$, $\ldots $, $\mathbf{M}_{k}$ in $%
R^{m\times n}$ and $r$ in ${{\mathbb{N}}}^{\ast }$. The \textbf{MinRank
problem} is to find $x_{1},\ldots ,x_{k}$ in $R$ such that $rk(\mathbf{M}%
_{0}+\sum_{i=1}^{k}x_{i}\mathbf{M}_{i})\leq r$. The \textbf{homogeneous
MinRank problem} corresponds to the case where $\mathbf{M}_{0}=\mathbf{0}$.
\end{definition}

Let $\mathbf{x}\in R^{n}$, and $\mathbf{D}_{x}$ be the $n \times n$ diagonal matrix with the entries of $\mathbf{x}$ on the diagonal. Then, according to \cite[Proposition 3.4%
]{Kamche2019rank}, the Hamming weight of $\mathbf{x}$ is equal to the rank of
$\mathbf{D}_{x}$ and thus, using the work of \cite{Weger2020hardness}, one can
easily prove as in \cite{Courtois2001efficient} that the MinRank problem over
finite principal ideal rings is NP-complete. Therefore, this problem is potential for cryptography and, the study of its algebraic resolution deserves attention. In  \cite[Proposition 4.3]{Kalachi2023OnTheRank}, it was shown that solving
the rank decoding problem over finite principal ideal rings reduces to
finite chain rings. This result does apply also to the MinRank problem.

In general, an instance of the MinRank problem has several solutions.
But if $r$ is not greater than the error correction capability of
the $R-$linear code generated by $\mathbf{M}_{1},\ldots ,\mathbf{M}_{k}$ (assuming $\mathbf{M}_{1},\ldots ,\mathbf{M}_{k}$ are $R-$linearly independent), then the problem has a unique solution $\left(
x_{1},\ldots ,x_{k} \right)$.
In the homogeneous case, for any solution  $\left( x_{1},\ldots ,x_{k}\right) $ and for any $\alpha \in R$,  $\left( \alpha x_{1},\ldots ,\alpha x_{k}\right) $ is
also a solution. Thus, if $R$ is a field, one of
the components of a nonzero solution of the homogeneous MinRank problem can always
be assumed to be $1$. However, if $R$ is not a field, this
assumption is not true in some cases (see Example \ref{Ex1KSModeling}).

From a modelling perspective, the MinRank problem over finite fields can be transformed into a system of
algebraic equations using the maximum minors while over finite principal
ideal rings, the rank of a matrix is usually not equal to the order of the
highest order non-vanishing minor. As a consequence, the MaxMinor modelling does not apply in general
when dealing with rings. In the following subsections, we will prove that the
Kipmis-Shamir Modelling and the Support Minors Modelling can be extended over
finite principal ideal rings.
\subsection{Kipmis-Shamir Modeling}

We start with some lemmas which will be used to give the Kipmis-Shamir
modeling over finite principal ideal rings. According to \cite[Proposition
3.2]{Kamche2019rank}, we have the following:

\begin{lemma}
\label{Envelope}
Let $\mathbf{E} \in R^{m\times n}$ such that $rk\left(
\mathbf{E}\right) \leq r$. Then, there exists a rank $r$ free submodule $F$ of $R^{n}$ such that $row\left( \mathbf{E}\right) \subset F$.
\end{lemma}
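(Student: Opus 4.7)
The plan is to construct $F$ explicitly from the Smith normal form of $\mathbf{E}$, whose existence over the finite commutative principal ideal ring $R$ is what underlies the rank characterization cited from \cite[Proposition 3.4]{Kamche2019rank}.

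First, I would write $\mathbf{U}\mathbf{E}\mathbf{V} = \mathbf{D}$, where $\mathbf{U}\in R^{m\times m}$ and $\mathbf{V}\in R^{n\times n}$ are invertible and $\mathbf{D}$ is diagonal with nonzero invariant-factor entries $d_{1},\ldots ,d_{s}$ on the diagonal and zeros elsewhere. By the rank characterization of \cite[Proposition 3.4]{Kamche2019rank}, we have $s=rk(\mathbf{E})\leq r$. Since left multiplication by the invertible matrix $\mathbf{U}^{-1}$ does not change the row span, $row(\mathbf{E})=row(\mathbf{D}\mathbf{V}^{-1})$.

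Next, denote by $\mathbf{v}_{1},\ldots ,\mathbf{v}_{n}$ the rows of $\mathbf{V}^{-1}$. Because $\mathbf{V}^{-1}$ is invertible over $R$, these vectors form an $R$-basis of $R^{n}$. The nonzero rows of $\mathbf{D}\mathbf{V}^{-1}$ are exactly $d_{1}\mathbf{v}_{1},\ldots ,d_{s}\mathbf{v}_{s}$, so $row(\mathbf{E})\subset R\mathbf{v}_{1}+\cdots +R\mathbf{v}_{s}$. Setting
\[
F:=R\mathbf{v}_{1}\oplus \cdots \oplus R\mathbf{v}_{r},
\]
where we pad with additional basis vectors $\mathbf{v}_{s+1},\ldots ,\mathbf{v}_{r}$ if $s<r$, produces a submodule of $R^{n}$ that is generated by $r$ vectors from a basis and is therefore free of rank $r$, while evidently containing $row(\mathbf{E})$.

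There is no substantial obstacle here beyond invoking the Smith normal form and the rank characterization from \cite[Proposition 3.4]{Kamche2019rank}; the argument is essentially the observation that any matrix over a PIR can be diagonalized by row and column operations, and that the row span of such a diagonalized form sits inside a submodule spanned by part of a basis of $R^{n}$.
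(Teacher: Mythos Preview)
Your argument is correct: the Smith normal form over a finite commutative principal ideal ring gives $\mathbf{E}=\mathbf{U}^{-1}\mathbf{D}\mathbf{V}^{-1}$, so $row(\mathbf{E})=row(\mathbf{D}\mathbf{V}^{-1})$ is contained in the span of the first $s\leq r$ rows of the invertible matrix $\mathbf{V}^{-1}$, and padding with further rows yields a free rank-$r$ submodule $F$ (implicitly using $r\leq n$, which is harmless in the MinRank context). The paper does not give its own proof but simply invokes \cite[Proposition~3.2]{Kamche2019rank}; your write-up is essentially an explicit unpacking of that citation via the Smith normal form, which is the standard route and almost certainly the one behind the cited proposition.
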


\begin{remark}
\label{NbEnvelope}Let $\mathbf{E}$ and $F$ as in Lemma \ref{Envelope}. If $%
row\left( \mathbf{E}\right) $ is a free module and $rk\left( \mathbf{E}%
\right) =r$ then $F$ is unique and $row\left( \mathbf{E}\right) =F$. But if $%
row\left( \mathbf{E}\right) $ is not a free module, then $F$ is generally not
unique.
\end{remark}

\begin{example}
Consider the matrix $\mathbf{E=}\left(
\begin{array}{ccc}
2 & 0 & 4%
\end{array} \right) $ over $
\mathbb{Z}_{8}$. Then $rk\left( \mathbf{E}\right) =1$ and there exist four free
submodules $F$ of $%
\mathbb{Z}
_{8}^{3}$ of rank $1$ such that $row\left( \mathbf{E}\right) \subset F$.
These four submodules are respectively generated by $\left( 1,0,2\right) $, $%
\left( 1,4,2\right) $, $\left( 1,0,6\right) $,and $\left( 1,4,6\right) $.
\end{example}

By \cite[Proposition 2.9]{Fan2014matrix}, we have the following:

\begin{lemma}
\label{ParityCheckMat} For any rank $r$ free submodule $F$ of $R^{n}$,
there exists $\mathbf{Z}\in R^{n\times (n-r)}$ with linearly independent column vectors and satisfying 
\[
\forall~ \mathbf{y} \in R^{n},~ \mathbf{y}
\in F~ \Longleftrightarrow ~\mathbf{yZ}=\mathbf{0}.
\]
\end{lemma}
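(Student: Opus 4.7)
My plan is to produce $\mathbf{Z}$ via the Smith normal form of a basis matrix of $F$. Since $F$ is free of rank $r$, fix a basis $\mathbf{g}_1,\dots,\mathbf{g}_r$ and assemble the matrix $\mathbf{G}\in R^{r\times n}$ whose rows are the $\mathbf{g}_i$. By linear independence of the $\mathbf{g}_i$, the map $\mathbf{w}\mapsto \mathbf{w}\mathbf{G}$ from $R^r$ to $R^n$ is injective. Because $R$ is a finite principal ideal ring, the Smith normal form applies: there exist invertible matrices $\mathbf{U}\in GL_r(R)$ and $\mathbf{V}\in GL_n(R)$ together with a diagonal matrix $\mathbf{D}=\mathrm{diag}(d_1,\dots,d_r)$ (with $d_i \mid d_{i+1}$) such that
\[
\mathbf{U}\mathbf{G}\mathbf{V}=\bigl(\mathbf{D}\ \ \mathbf{0}\bigr).
\]

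The first real step is to show that every $d_i$ is a unit. If some $d_i$ were a zero divisor, pick a nonzero $a\in R$ with $a d_i=0$ and set $\mathbf{v}=a\mathbf{e}_i\mathbf{U}$, where $\mathbf{e}_i$ is the $i$-th standard row vector. Then $\mathbf{v}$ is nonzero (since $\mathbf{U}$ is invertible) and $\mathbf{v}\mathbf{G}=a\mathbf{e}_i(\mathbf{U}\mathbf{G}\mathbf{V})\mathbf{V}^{-1}=\mathbf{0}$, contradicting the linear independence of the rows of $\mathbf{G}$. Hence each $d_i$ is a non-zero-divisor in the finite ring $R$, and therefore a unit. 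Absorbing $\mathbf{D}$ into $\mathbf{U}$, we may write $\mathbf{U}\mathbf{G}\mathbf{V}=(\mathbf{I}_r\ \ \mathbf{0})$, so that for every $\mathbf{y}\in R^n$, the vector $\mathbf{y}\mathbf{V}$ lies in $R^r\times\{\mathbf{0}\}$ exactly when $\mathbf{y}\in F$.

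Now define
\[
\mathbf{Z}=\mathbf{V}\begin{pmatrix}\mathbf{0}\\ \mathbf{I}_{n-r}\end{pmatrix}\in R^{n\times (n-r)},
\]
the matrix consisting of the last $n-r$ columns of $\mathbf{V}$. Writing $\mathbf{y}\mathbf{V}=(\mathbf{y}_1,\mathbf{y}_2)$ with $\mathbf{y}_2\in R^{n-r}$, one gets $\mathbf{y}\mathbf{Z}=\mathbf{y}_2$; by the previous paragraph this is zero iff $\mathbf{y}\in F$, which gives the equivalence required. For linear independence of the columns, suppose $\mathbf{Z}\mathbf{w}=\mathbf{0}$ for some $\mathbf{w}\in R^{n-r}$. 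Multiplying on the left by $\mathbf{V}^{-1}$ yields $\bigl(\begin{smallmatrix}\mathbf{0}\\ \mathbf{w}\end{smallmatrix}\bigr)=\mathbf{0}$, hence $\mathbf{w}=\mathbf{0}$.

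The main obstacle, as signalled above, is the unit-diagonal step: over a ring with zero divisors the Smith form of a matrix with linearly independent rows need not a priori consist of units, and this is exactly where one uses both the linear independence of a free-module basis and the fact that $R$ is finite, so that non-zero-divisors coincide with units. Once this is established, the rest is essentially a change-of-basis computation, and the construction makes it clear that $\mathbf{Z}$ plays the role of a parity check matrix for the free submodule $F$.
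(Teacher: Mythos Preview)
Your argument is correct. The paper does not actually prove this lemma: it simply cites \cite[Proposition 2.9]{Fan2014matrix} and states the result, so there is no in-paper proof to compare against. Your Smith-normal-form construction is a standard and entirely valid route; the key step---that the diagonal entries $d_i$ must be units because a nontrivial annihilator would produce a nonzero $\mathbf{v}$ with $\mathbf{v}\mathbf{G}=\mathbf{0}$, contradicting freeness---is handled correctly, and the observation that non-zero-divisors in a finite commutative ring are units is exactly what is needed. The remaining verification that $\mathbf{Z}$ consists of the last $n-r$ columns of $\mathbf{V}$ and satisfies both the kernel characterisation and column independence is routine and accurate.
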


If $a$ and $b$ are two elements of a finite chain ring, then $a$ divides $b$
or $b$ divides $a$. This property was used in \cite[Proposition 3.2]%
{Norton2000structure} to prove the existence of the generator matrices in
standard form over finite chain rings. So, we have the following:

\begin{lemma}
\label{DecomFCR}Assume that $R$ is a finite chain ring. Let $\mathbf{Z}\in
R^{n\times (n-r)}$ with column vectors that are linearly independent.
Then there exists a size $n$ permutation matrix $\mathbf{P}$, an invertible
matrix $\mathbf{Q}\in R^{(n-r)\times (n-r)}$, and a matrix $\mathbf{Z}%
^{\prime }\in R^{r\times (n-r)}$ such that
\begin{equation*}
\mathbf{Z}=\mathbf{P}\left(
\begin{array}{c}
\mathbf{I}_{n-r} \\
\mathbf{Z}^{\prime }%
\end{array}%
\right) \mathbf{Q}.
\end{equation*}
\end{lemma}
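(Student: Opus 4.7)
The plan is to reduce the lemma to a standard-form statement for free codes by transposing. The columns of $\mathbf{Z}$ are, by definition, the rows of $\mathbf{Z}^\top \in R^{(n-r)\times n}$, so the hypothesis says that the rows of $\mathbf{Z}^\top$ are $R$-linearly independent. Hence the submodule $C:=row(\mathbf{Z}^\top)\subset R^n$ is free of rank $n-r$, with the rows of $\mathbf{Z}^\top$ forming a basis. The lemma is therefore equivalent to the statement that any generator matrix of a free code of rank $n-r$ over a finite chain ring can be brought, by row operations and column permutations, to the systematic form $[\mathbf{I}_{n-r}\,|\,\mathbf{A}]$.

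To produce this systematic form, I would invoke the Norton--Salagean standard form for codes over finite chain rings cited in the excerpt before the lemma (\cite{Norton2000structure}, Proposition~3.2). The general standard form consists of layers indexed by $0,1,\dots,\nu-1$, with the $i$-th layer carrying a block $\pi^{i}\mathbf{I}_{k_i}$. Since $C$ is free of rank $n-r$, only the $i=0$ layer survives (the higher layers would produce torsion), so the standard form collapses to a single block $[\mathbf{I}_{n-r}\,|\,\mathbf{A}]$ with $\mathbf{A}\in R^{(n-r)\times r}$. Concretely, there exist a permutation matrix $\mathbf{P}_0\in R^{n\times n}$ and an invertible matrix $\mathbf{U}\in R^{(n-r)\times(n-r)}$ with
\[
\mathbf{U}\,\mathbf{Z}^\top\mathbf{P}_0 \;=\; [\,\mathbf{I}_{n-r}\;|\;\mathbf{A}\,].
\]

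Transposing this identity and using $\mathbf{P}_0^\top=\mathbf{P}_0^{-1}$ yields
\[
\mathbf{Z} \;=\; \mathbf{P}_0\begin{pmatrix}\mathbf{I}_{n-r}\\ \mathbf{A}^\top\end{pmatrix}\mathbf{U}^{-\top},
\]
and setting $\mathbf{P}:=\mathbf{P}_0$, $\mathbf{Z}':=\mathbf{A}^\top\in R^{r\times(n-r)}$, and $\mathbf{Q}:=\mathbf{U}^{-\top}\in R^{(n-r)\times(n-r)}$ (which is invertible) gives exactly the decomposition claimed.

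The only substantive step is the production of the systematic form, i.e.\ the existence of an $(n-r)$-subset of columns of $\mathbf{Z}^\top$ on which the projection $R^n\to R^{n-r}$ restricts to an $R$-isomorphism of $C$. This is where the chain-ring hypothesis is essential: the total ordering of ideals by divisibility lets one perform a pivot-style elimination, choosing at each step a column and a row whose entry has minimal valuation so that the pivot becomes a unit, and then clear the rest of the row and column. The linear independence of the rows guarantees that the process cannot terminate before $n-r$ pivots have been found, so all pivots can be made units. This is precisely the content of the Norton--Salagean result, so citing it closes the argument; the main obstacle, philosophically, is ensuring that freeness of $C$ (from linear independence of the columns of $\mathbf{Z}$) truly forces the vanishing of the higher-layer blocks, but this is immediate because any surviving $\pi^i$ block would produce a nonzero torsion element in $C$, contradicting freeness.
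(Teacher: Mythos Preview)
Your proposal is correct and follows essentially the same approach as the paper: the paper does not spell out a proof but simply notes that the divisibility property of finite chain rings is what enables the Norton--S\u{a}l\u{a}gean standard form \cite[Proposition~3.2]{Norton2000structure}, and derives the lemma from that. Your version supplies the missing details---the transposition, the freeness argument forcing only the $i=0$ layer to survive, and the back-substitution---so it is a fleshed-out form of the same citation-based argument rather than a different route.
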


The above Lemma \ref{DecomFCR} is not generally true when $R$ is not a finite chain
ring. Indeed, consider the matrix
\begin{equation*}
\mathbf{Z=}\left(
\begin{array}{c}
2 \\
3%
\end{array}%
\right)
\end{equation*}
over $%
\mathbb{Z}
_{6}$. The column vector of \ $\mathbf{Z}$ is $%
\mathbb{Z}
_{6}-$linearly independent. But $\mathbf{Z}$ cannot be decomposed as in
Lemma \ref{DecomFCR}.
Lemmas \ref{Envelope}, \ref{ParityCheckMat} and \ref{DecomFCR} allow\ to
extend the Kipmis-Shamir Modeling to finite principal ideal rings.

\begin{theorem}
\label{KipmisShamirModeling}Let $\mathbf{M}_{0}$, $\mathbf{M}_{1},\ldots ,%
\mathbf{M}_{k}$ in $R^{m\times n}$, $x_{1},\ldots ,x_{k}$ in $R$ and $r$ in $%
{{\mathbb{N}}}^{\ast }$. 
For $M_{x}=\mathbf{M}_{0}+\sum_{i=1}^{k}x_{i}%
\mathbf{M}_{i}$, the following statements are equivalent.
\begin{itemize}
\item[(i)] $rk(\mathbf{M}_{x})\leq r$.

\item[(ii)] There exists $\mathbf{Z}\in R^{n\times (n-r)}$, with column vectors that are
linearly independent and such that
\begin{equation}
\mathbf{M}_{x}\mathbf{Z=0}.  \label{KSModelingEq}
\end{equation}%
\end{itemize}
Moreover, if $R$ is a finite chain ring then, up to a permutation of
columns of $\mathbf{M}_{x}$, we can assume that $\mathbf{Z}$ is into the form
\begin{equation*}
\mathbf{Z=}\left(
\begin{array}{c}
\mathbf{I}_{n-r} \\
\mathbf{Z}^{\prime }%
\end{array}%
\right)
\end{equation*}%
where $\mathbf{Z}^{\prime }\in R^{r\times (n-r)}$.
\end{theorem}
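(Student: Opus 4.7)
The plan is to establish (i) $\Leftrightarrow$ (ii) by invoking Lemmas \ref{Envelope}, \ref{ParityCheckMat} and \ref{DecomFCR}, and to obtain the ``moreover'' clause as a direct by-product of the normal form supplied by Lemma \ref{DecomFCR}.

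The forward direction (i) $\Rightarrow$ (ii) is essentially automatic. From $rk(\mathbf{M}_x) \leq r$, Lemma \ref{Envelope} produces a free submodule $F \subseteq R^n$ of rank $r$ with $row(\mathbf{M}_x) \subseteq F$, and Lemma \ref{ParityCheckMat} then yields a matrix $\mathbf{Z} \in R^{n\times (n-r)}$ whose columns are linearly independent and which satisfies $F = \{\mathbf{y}\in R^n : \mathbf{y}\mathbf{Z} = \mathbf{0}\}$. Since every row of $\mathbf{M}_x$ belongs to $F$, this means $\mathbf{M}_x\mathbf{Z} = \mathbf{0}$.

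For the converse (ii) $\Rightarrow$ (i), I would begin with the chain-ring case using Lemma \ref{DecomFCR}. Writing $\mathbf{Z} = \mathbf{P}\begin{pmatrix}\mathbf{I}_{n-r}\\ \mathbf{Z}'\end{pmatrix}\mathbf{Q}$, substituting into $\mathbf{M}_x\mathbf{Z} = \mathbf{0}$ and right-multiplying by $\mathbf{Q}^{-1}$ turns the identity into $\mathbf{M}_x\mathbf{P}\begin{pmatrix}\mathbf{I}_{n-r}\\ \mathbf{Z}'\end{pmatrix} = \mathbf{0}$. Splitting $\mathbf{M}_x\mathbf{P} = [\mathbf{A}\mid\mathbf{B}]$ with $\mathbf{B} \in R^{m\times r}$ forces $\mathbf{A} = -\mathbf{B}\mathbf{Z}'$, hence $\mathbf{M}_x = \mathbf{B}\bigl[-\mathbf{Z}'\mid \mathbf{I}_r\bigr]\mathbf{P}^{-1}$; this realises $row(\mathbf{M}_x)$ as a submodule of the row span of an $r\times n$ matrix, so $rk(\mathbf{M}_x) \leq r$. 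For a general finite principal ideal ring, I would use CRT to write $R$ as a product of finite chain rings, apply the chain-ring argument on each factor (linear independence of the columns of $\mathbf{Z}$ transfers componentwise), and recombine using that the minimum number of generators of a product module equals the maximum over the factors.

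The moreover clause then falls out of the chain-ring computation above: after the column permutation $\mathbf{P}$ is absorbed into $\mathbf{M}_x$ and the invertible factor $\mathbf{Q}$ into $\mathbf{Z}$, the matrix $\mathbf{Z}$ takes the claimed form. The main obstacle I anticipate is the CRT recombination step in the converse direction --- ensuring that a bound on the minimum number of generators of $row(\mathbf{M}_x)$ on each chain-ring factor assembles back into the same bound over $R$; this deserves an explicit justification via the product decomposition of modules, and is what ultimately lets the chain-ring argument extend to the full principal-ideal-ring setting.
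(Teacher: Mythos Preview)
Your forward direction (i) $\Rightarrow$ (ii) matches the paper exactly. The converse, however, is organised differently. The paper does not split into the chain-ring case plus a CRT reassembly; instead it argues uniformly over any finite principal ideal ring: from $\mathbf{M}_x\mathbf{Z}=\mathbf{0}$ with $\mathbf{Z}$ having $n-r$ linearly independent columns, the rows of $\mathbf{M}_x$ lie in the left kernel $\{\mathbf{y}:\mathbf{y}\mathbf{Z}=\mathbf{0}\}$, which (via the reference underlying Lemma~\ref{ParityCheckMat}) is a free $R$-module of rank $r$; then \cite[Proposition 3.2]{Kamche2019rank} gives $rk(\mathbf{M}_x)\le r$ directly. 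This sidesteps both the explicit factorisation $\mathbf{M}_x=\mathbf{B}[-\mathbf{Z}'\mid\mathbf{I}_r]\mathbf{P}^{-1}$ and the CRT patching you flag as the main obstacle.

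Your route is nonetheless correct. The step you worried about---that linear independence of the columns of $\mathbf{Z}$ descends to each chain-ring factor---holds because a nontrivial relation over $R_{(j)}$ lifts (by zero on the other components) to a nontrivial relation over $R$; and the bound on the number of generators recombines since $rk_R(\mathbf{M}_x)=\max_j rk_{R_{(j)}}(\mathbf{M}_{x,(j)})$. So your argument goes through, but it is longer and invokes Lemma~\ref{DecomFCR} for the converse where the paper only needs it for the ``moreover'' clause. What you gain is a completely explicit description of $row(\mathbf{M}_x)$ via the factorisation; what the paper gains is a one-line converse valid over all finite principal ideal rings at once.
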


\begin{proof}
The proof is similar to the case of fields. Indeed, assume that $rk(\mathbf{M%
}_{x})\leq r$. Then, by Lemma \ref{Envelope}, there exists a free submodule $%
F$ of $R^{n}$ of rank $r$ such that $row\left( \mathbf{M}_{x}\right) \subset
F$. Thus, by Lemma \ref{ParityCheckMat}, there is $\mathbf{Z}\in R^{n\times
(n-r)}$, with column vectors that are linearly independent and such that (\ref%
{KSModelingEq}) holds.
Conversely, assume that (ii) holds. Then, by Lemma \ref{ParityCheckMat}, all row
vectors of $\mathbf{M}_{x}$ are in a free module of rank $r$. Therefore, by
\cite[Proposition 3.2]{Kamche2019rank}, $rk(\mathbf{M}_{x})\leq r$.
\end{proof}

As specified in Remark \ref{NbEnvelope}, the free submodule $F$ is generally
not unique. Therefore, $\mathbf{Z}^{\prime }$ is generally not unique.

\begin{example}
\label{Ex1KSModeling} Consider the following MinRank problem: find $x_{1}$, $x_{2}$, $x_{3}$ in   $\mathbb{Z}_{8}$ such that 
\begin{equation}
rk\left( x_{1}\mathbf{M}_{1}+x_{2}\mathbf{M}_{2}+x_{3}\mathbf{M}_{3}\right)
\leq 1  \label{Ex1KSModeling1}
\end{equation}%
where
\begin{equation*}
M_{1}=\left(
\begin{array}{rrrr}
0 & 0 & 0 & 7 \\
1 & 0 & 0 & 5 \\
0 & 1 & 0 & 2 \\
0 & 0 & 1 & 4%
\end{array}%
\right),\ M_{2}=\left(
\begin{array}{rrrr}
0 & 0 & 7 & 4 \\
0 & 0 & 5 & 3 \\
1 & 0 & 2 & 5 \\
0 & 1 & 4 & 2%
\end{array}%
\right),\ M_{3}=\left(
\begin{array}{rrrr}
2 & 2 & 0 & 4 \\
4 & 2 & 0 & 6 \\
0 & 4 & 2 & 4 \\
0 & 6 & 6 & 0%
\end{array}%
\right).
\end{equation*}%
Since $r=1$, by Theorem \ref{KipmisShamirModeling}, (\ref%
{Ex1KSModeling1}) is equivalent to
\begin{equation}
\left( x_{1}\mathbf{M}_{1}+x_{2}\mathbf{M}_{2}+x_{3}\mathbf{M}_{3}\right)
\left(
\begin{array}{ccc}
1 & 0 & 0 \\
0 & 1 & 0 \\
0 & 0 & 1 \\
z_{1} & z_{2} & z_{3}%
\end{array}%
\right) =\mathbf{0}  \label{Ex1KSModeling2}
\end{equation}%
A Gr\"obner basis associated to (\ref{Ex1KSModeling2}) with the lexicographic
order $z_{1}>z_{2}$ $>z_{3}>x_{1}>x_{2}>x_{3}$ is $2z_{1}x_{3}+6x_{3}$, $%
2z_{2}x_{3}+6x_{3}$, $2z_{3}x_{3}+6x_{3}$, $x_{1}+2x_{3}$, $x_{2}+2x_{3}$, $%
4x_{3}$. According to Proposition \ref{LiftingSolutionOneVariable}, the
solutions of the system $x_{1}+2x_{3}=x_{2}+2x_{3}=4x_{3}=0$ are $\left(
x_{1},x_{2},x_{3}\right) \in \left\{ \left( 0,0,0\right) ,\left(
4,4,2\right) ,\left( 0,0,4\right) ,\left( 4,4,6\right) \right\} $. Thus, (\ref{Ex1KSModeling1})
has exactly four solutions.
\end{example}

In the simulations, we observe that in some cases to simplify the resolution of (%
\ref{KSModelingEq}) it is necessary to add some equations as specified in
Remark \ref{RingEqua}.

\begin{example}
\label{Ex2KSModeling} Consider the following MinRank problem: find $x_{1}$, $x_{2}$, $x_{3}$ in   $\mathbb{Z}_{8}$ such that
\begin{equation}
rk(\mathbf{M}_{x})\leq 1  \label{Ex2KSModeling1}
\end{equation}%
where $\mathbf{M}_{x}=\mathbf{M}_{0}+\sum_{i=1}^{3}x_{i}\mathbf{M}_{i}$ and%
\begin{equation*}
\mathbf{M}_{0}=\left(
\begin{array}{rrr}
5 & 2 & 3 \\
5 & 1 & 4 \\
4 & 3 & 6%
\end{array}%
\right) ,\ \ \ \ \mathbf{M}_{1}=\left(
\begin{array}{rrr}
1 & 2 & 0 \\
0 & 1 & 3 \\
0 & 2 & 1%
\end{array}%
\right)
\end{equation*}%
\newline
\begin{equation*}
\mathbf{M}_{2}=\left(
\begin{array}{rrr}
0 & 2 & 1 \\
1 & 0 & 3 \\
0 & 5 & 5%
\end{array}%
\right) ,\ \ \ \ \mathbf{M}_{3}=\left(
\begin{array}{rrr}
0 & 5 & 5 \\
0 & 1 & 0 \\
1 & 2 & 5%
\end{array}%
\right)
\end{equation*}%
\newline
According to Theorem \ref{KipmisShamirModeling}, (\ref{Ex2KSModeling1}) is
equivalent to
\begin{equation}
\mathbf{M}_{x}\mathbf{Z}=\mathbf{0}.  \label{Ex2KSModeling2}
\end{equation}%
When we choose $\mathbf{Z}$ in the form
\begin{equation*}
\mathbf{Z=}\left(
\begin{array}{cc}
1 & 0 \\
0 & 1 \\
z_{1} & z_{2}%
\end{array}%
\right)
\end{equation*}%
then we do not get the solution. Thus, it is necessary to choose the
switchable permutation. In our simulations, we observed that we can choose
\begin{equation*}
\mathbf{Z=}\left(
\begin{array}{cc}
z_{1} & z_{2} \\
1 & 0 \\
0 & 1%
\end{array}%
\right).
\end{equation*}%
In this case, when we compute a Gr\"obner basis associated to (\ref%
{Ex2KSModeling2}) with the lexicographic order $z_{1}>z_{2}$ $%
>x_{1}>x_{2}>x_{3}$, we obtain a system which is not easy to solve. But when
we add the polynomial expressions $F_{m}\left( z_{1}\right) $, $F_{m}\left(
z_{2}\right) $, $F_{m}\left( x_{1}\right) $, $F_{m}\left( x_{2}\right) $, $%
F_{m}\left( x_{3}\right) $ as in Example \ref{ExSolveFCR_GB2}, we get the Gr\"obner basis $z_{1}^{4}-z_{1}^{2}$, $2z_{1}$, $z_{2}^{4}+3z_{2}^{2}+4$, $%
2z_{2}+4$, $x_{1}+7$, $x_{2}+5$, $x_{3}+2$. Thus, we directly obtain the
solution of (\ref{Ex2KSModeling1}) which is $x_{1}=1$, $x_{2}=3$, $x_{3}=6$.
\end{example}

\subsection{Support-Minors Modeling}

In this subsection, we will show that the Support-Minors modeling of the
MinRank problem given in \cite{BBBGNRT20} over fields can be extended to
finite principal ideal rings.

\begin{lemma}
\label{MaxMinLem} Let $\mathbf{A}\in R^{r\times n}$ with row vectors that
are linearly independent, and $\mathbf{y\in }R^{n}$. Then $\mathbf{y}\in
row\left( \mathbf{A}\right) $ if and only if
\begin{equation}
Minors_{r+1}\left(
\begin{array}{c}
\mathbf{y} \\
\mathbf{A}%
\end{array}%
\right) \mathbf{=0,}  \label{MaxMinEq}
\end{equation}%
where (\ref{MaxMinEq}) means that all minors of the matrix $\left(
\begin{array}{c}
\mathbf{y} \\
\mathbf{A}%
\end{array}%
\right) $ of size $r+1$ are equal to zero.
\end{lemma}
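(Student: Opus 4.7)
The plan is to handle the two implications separately. The forward direction is a short multilinearity argument; the substantive work is the converse, which I would reduce to the finite chain ring case and then to a carefully chosen family of $(r+1)$-minors.

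Write $M=\left(\begin{array}{c}\mathbf{y}\\\mathbf{A}\end{array}\right)$. If $\mathbf{y}\in row(\mathbf{A})$, so $\mathbf{y}=\boldsymbol{\lambda}\mathbf{A}$ for some $\boldsymbol{\lambda}\in R^{r}$, then for every $(r+1)$-subset $J$ of columns the first row of the corresponding submatrix $M_{J}$ is the same $R$-linear combination $\boldsymbol{\lambda}$ of the remaining rows. Clearing it by elementary row operations leaves a zero row, so $\det M_{J}=0$.

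For the converse, the main obstacle is that over a ring with zero divisors the rank of a matrix is not in general equal to the size of the largest non-vanishing minor, so the usual ``rank is $\leq r$'' argument from the field case is not available. My plan is first to use the structure theorem for finite commutative rings to write $R\cong R_{1}\times\cdots\times R_{\rho}$ with each $R_{j}$ a finite chain ring (every finite local PIR being a chain ring), and to observe that the two hypotheses (linear independence of the rows of $\mathbf{A}$, and vanishing of every $(r+1)$-minor of $M$) together with the conclusion $\mathbf{y}\in row(\mathbf{A})$ are all preserved componentwise by the projections $R\to R_{j}$. This reduces the converse to the case where $R$ is a finite chain ring. In that case, linear independence of the rows of $\mathbf{A}$ forces $row(\mathbf{A})$ to be a free module of rank $r$, so the standard-form result \cite[Proposition 3.2]{Norton2000structure} applies: there exist a column-permutation matrix $\mathbf{P}$ and an invertible matrix $\mathbf{B}\in R^{r\times r}$ with $\mathbf{A}\mathbf{P}=\mathbf{B}(\mathbf{I}_{r}\mid\mathbf{A}')$ for some $\mathbf{A}'\in R^{r\times(n-r)}$.

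With this normal form in hand, I would examine only the $(r+1)$-minors of $M\mathbf{P}$ on the columns $\{1,\ldots,r,r+j\}$ for $j=1,\ldots,n-r$. Writing $\mathbf{y}\mathbf{P}=(\mathbf{y}_{1},\mathbf{y}_{2})$ with $\mathbf{y}_{1}\in R^{r}$ and $\mathbf{y}_{2}\in R^{n-r}$, and letting $\mathbf{a}'_{j}$ denote the $j$-th column of $\mathbf{A}'$, the corresponding $(r+1)\times(r+1)$ submatrix factors as $\operatorname{diag}(1,\mathbf{B})$ times $\left(\begin{array}{cc}\mathbf{y}_{1}&(\mathbf{y}_{2})_{j}\\\mathbf{I}_{r}&\mathbf{a}'_{j}\end{array}\right)$; clearing the first row of the second factor by row operations and expanding along it yields $\pm\det(\mathbf{B})\bigl((\mathbf{y}_{2})_{j}-\mathbf{y}_{1}\mathbf{a}'_{j}\bigr)$. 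Since $\det(\mathbf{B})$ is a unit and this minor vanishes by hypothesis, we conclude $(\mathbf{y}_{2})_{j}=\mathbf{y}_{1}\mathbf{a}'_{j}$ for every $j$, i.e.\ $\mathbf{y}\mathbf{P}=\mathbf{y}_{1}(\mathbf{I}_{r}\mid\mathbf{A}')=\mathbf{y}_{1}\mathbf{B}^{-1}\mathbf{A}\mathbf{P}$, which yields $\mathbf{y}\in row(\mathbf{A})$. The truly delicate point is the selection of these ``pivoted'' minors; the role of the standard form is precisely to guarantee a maximal minor of $\mathbf{A}$ that is a \emph{unit}, not merely non-zero, which is what makes the argument go through over a ring rather than a field.
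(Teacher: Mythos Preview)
Your proof is correct, but it takes a somewhat different route from the paper's. The paper works directly over the finite principal ideal ring $R$ without any CRT reduction: invoking \cite[Corollary~2.8]{Fan2014matrix}, it produces an \emph{invertible} $\mathbf{P}\in R^{n\times n}$ (not merely a permutation) with $\mathbf{A}\mathbf{P}=(\mathbf{I}_r\mid\mathbf{0})$. The Cauchy--Binet formula then transfers the vanishing of all $(r+1)$-minors from $M$ to $M\mathbf{P}$, and Laplace expansion of the minors of $M\mathbf{P}$ on the column sets $\{1,\dots,r,r+j\}$ yields $\mathbf{y}\mathbf{P}_2=\mathbf{0}$, whence $\mathbf{y}\in row(\mathbf{A})$ via Lemma~\ref{ParityCheckMat}. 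Your approach trades the stronger normal form and Cauchy--Binet for the componentwise reduction to chain rings and an explicit pivot computation; this is more elementary (no Cauchy--Binet, no auxiliary parity-check lemma) but less uniform, and the reduction to chain rings is genuinely needed in your argument since the permutation standard form from \cite{Norton2000structure} can fail over a non-local finite PIR (cf.\ the $\mathbb{Z}_6$ example following Lemma~\ref{DecomFCR}).
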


\begin{proof}
As the row vectors of $\mathbf{A}$ are linearly independent, by \cite[%
Corollary 2.8 ]{Fan2014matrix} there is an invertible matrix $\mathbf{P}\in
R^{n\times n}$ such that $\mathbf{AP=}\left(
\begin{array}{cc}
\mathbf{I}_{r} & \mathbf{0}%
\end{array}%
\right) $. Set $\mathbf{P}=\left(
\begin{array}{cc}
\mathbf{P}_{1} & \mathbf{P}_{2}%
\end{array}%
\right) $ where $\mathbf{P}_{1}$ and $\mathbf{P}_{2}$ are submatrices of $%
\mathbf{P}$ of sizes $n\times r$ and $n \times ( n-r)$,
respectively. Assume that (\ref{MaxMinEq}) holds. Then, using the
Cauchy--Binet formula, we get
\begin{equation*}
Minors_{r+1}\left( \left(
\begin{array}{c}
\mathbf{y} \\
\mathbf{A}%
\end{array}%
\right) \mathbf{P}\right) \mathbf{=0},
\end{equation*}%
that is to say,
\begin{equation*}
Minors_{r+1}\left(
\begin{array}{cc}
\mathbf{yP}_{1} & \mathbf{yP}_{2} \\
\mathbf{I}_{r} & \mathbf{0}%
\end{array}%
\right) =\mathbf{0}.
\end{equation*}%
Consequently, using the Laplace expansion along the first row of each $%
\left( r+1\right) \times \left( r+1\right) $ submatrix, we obtain $\mathbf{yP%
}_{2}=\mathbf{0}$. Thus, by Lemma \ref{ParityCheckMat}, $\mathbf{y}\in
row\left( \mathbf{A}\right) $. Conversely, if $\mathbf{y}\in row\left( \mathbf{A}\right) $ then (\ref%
{MaxMinEq}) holds, since $\mathbf{y}$ is a linear combination of the rows of
$\mathbf{A}$.
\end{proof}

Let $\mathbf{A}$ and $\mathbf{y}$ as in Lemma \ref{MaxMinLem}. For any
sequence of $r$ positive integers $1\leq j_{1}<\cdots <j_{r}\leq n$, let $%
a_{j_{1},\ldots ,j_{r}}$ be the determinant of the $r\times r$ submatrix of $%
\mathbf{A}$ with column index in\ $\left\{ j_{1},\ldots ,j_{r}\right\} .$
The set $\left\{ a_{j_{1},\ldots ,j_{r}}:1\leq j_{1}<\cdots <j_{r}\leq
n\right\} $ is said to be a  \textbf{Pl\"ucker coordinates}  \cite%
{Bruns2006determinantal} \ of the free $R-$module $row\left( \mathbf{A}%
\right) $. By \cite[Remark 2.12]{Gorla2017algebraic}, if $\ \mathbf{B}\in
R^{r\times n}$ and $row\left( \mathbf{A}\right) =row\left( \mathbf{B}\right)
$, then there is an invertible matrix $\mathbf{Q}\in R^{r\times r}$ such
that $\mathbf{B}=\mathbf{QA}$. Thus, as in the case of fields, the $R-$module $row\left( \mathbf{A}\right)$ may admit several sets of Pl\"ucker
coordinates, but they are all equal up to a unit multiplicative factor. Moreover, if $R$ is a finite chain
ring, then according to Lemma \ref{DecomFCR}, at least one  component
in any Pl\"ucker coordinates is a unit. Furthermore, by setting $\mathbf{y}=\left(
y_{j_{\alpha }}\right) _{1\leq \alpha \leq n}$ where $y_{j_{\alpha }}\in R$,
and using the Laplace expansion along the first row, Equation (\ref%
{MaxMinEq}) is equivalent to
\begin{equation}
\sum_{\alpha =1}^{r+1}\left( -1\right) ^{\alpha +1}y_{j_{\alpha
}}a_{j_{1},\ldots ,j_{\alpha -1},j_{\alpha +1},\ldots j_{r+1}}=0,
\label{PluckerCoordinates}
\end{equation}%
for all sequence of $r+1$ positive integers $1\leq j_{1}<\cdots <j_{r+1}\leq
n$.

Notice that, when the row vectors of $\mathbf{A}$ are not linearly independent, the ``only if'' part of Lemma \ref{MaxMinLem} may not be true. Indeed,
consider the matrix $\mathbf{A=}\left(
\begin{array}{cc}
2 & 0%
\end{array}%
\right) $ over $%
\mathbb{Z}
_{4}$. Then
\begin{equation*}
Minors_{2}\left(
\begin{array}{cc}
0 & 2 \\
2 & 0%
\end{array}%
\right) =0.
\end{equation*}
But $\left( 0,2\right) \notin row\left( \mathbf{A}\right) $.

Similar to the Support-Minors modeling given in \cite{BBBGNRT20}, we have the
following:

\begin{theorem}
\label{SupportMinorsModeling}Let $\mathbf{M}_{0}$, $\mathbf{M}_{1},\ldots ,%
\mathbf{M}_{k}$ in $R^{m\times n}$, $x_{1},\ldots ,x_{k}$ in $R$ and $r$ in $%
{{\mathbb{N}}}^{\ast }$. 

Set $\mathbf{M}_{x}=\mathbf{M}_{0}+%
\sum_{l=1}^{k}x_{l}\mathbf{M}_{l}$. Then, the following statements are
equivalent.
\begin{itemize} 

\item[(i)] $rk(\mathbf{M}_{x})\leq r$.

\item[(ii)] There exists a Pl\"ucker coordinates $\left\{ z_{j_{1},\ldots ,j_{r}}:1\leq
j_{1}<\cdots <j_{r}\leq n\right\} $ of a free submodule of $R^{n}$ of rank $%
r $ such that
\begin{equation}
\sum_{\alpha =1}^{r+1}\left( -1\right) ^{\alpha +1}\mathbf{M}%
_{x}[i,j_{\alpha }]z_{j_{1},\ldots ,j_{\alpha -1},j_{\alpha +1},\ldots
j_{r+1}}=0,  \label{SupportMinorsModelingEq}
\end{equation}%
for all $i=1,\ldots ,n$ and all sequence of $r+1$ positive integers $1\leq
j_{1}<\cdots <j_{r+1}\leq n$, where $\mathbf{M}_{x}[i,j_{\alpha }]$ is the
entry at the $i^{th}$ row and $j_{\alpha }^{th}$ column of  $\mathbf{M}_x$.
\end{itemize}
\end{theorem}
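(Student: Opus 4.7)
The plan is to read condition (ii) as a row-by-row reformulation of the maximal-minors criterion of Lemma \ref{MaxMinLem} and then to invoke Lemma \ref{Envelope} to move between ``$rk(\mathbf{M}_x)\le r$'' and ``the rows of $\mathbf{M}_x$ all sit in a common free submodule of rank $r$''.

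For the implication (i) $\Rightarrow$ (ii), I would start by applying Lemma \ref{Envelope} to obtain a free submodule $F\subset R^{n}$ of rank $r$ with $row(\mathbf{M}_{x})\subset F$. Choose a matrix $\mathbf{A}\in R^{r\times n}$ whose rows form a basis of $F$; by construction its rows are linearly independent, so its $r\times r$ minors $a_{j_{1},\ldots ,j_{r}}$ form a Pl\"ucker coordinates of $F$, and I would set $z_{j_{1},\ldots ,j_{r}}:=a_{j_{1},\ldots ,j_{r}}$. Denote by $\mathbf{y}_{i}$ the $i$-th row of $\mathbf{M}_{x}$. Since $\mathbf{y}_{i}\in F=row(\mathbf{A})$, Lemma \ref{MaxMinLem} yields $Minors_{r+1}\bigl(\begin{smallmatrix}\mathbf{y}_{i}\\ \mathbf{A}\end{smallmatrix}\bigr)=\mathbf{0}$. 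Expanding each such $(r+1)\times (r+1)$ minor along the first row, as in equation \eqref{PluckerCoordinates}, gives precisely the relations \eqref{SupportMinorsModelingEq} for that $i$ and for every choice of columns $j_{1}<\cdots <j_{r+1}$.

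For the converse (ii) $\Rightarrow$ (i), the data of a Pl\"ucker coordinates of a rank-$r$ free submodule $F\subset R^{n}$ means there exists $\mathbf{A}\in R^{r\times n}$ with linearly independent rows and with $r\times r$ minors equal (up to a unit) to the given $z_{j_{1},\ldots ,j_{r}}$'s. Fix such an $\mathbf{A}$. Reading \eqref{SupportMinorsModelingEq} together with \eqref{PluckerCoordinates} in reverse, the hypothesis states that for every row $\mathbf{y}_{i}$ of $\mathbf{M}_{x}$ all $(r+1)\times (r+1)$ minors of $\bigl(\begin{smallmatrix}\mathbf{y}_{i}\\ \mathbf{A}\end{smallmatrix}\bigr)$ vanish. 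Lemma \ref{MaxMinLem} then gives $\mathbf{y}_{i}\in row(\mathbf{A})=F$, hence $row(\mathbf{M}_{x})\subset F$. By \cite[Proposition 3.2]{Kamche2019rank} (already used in the proof of Theorem \ref{KipmisShamirModeling}), a submodule of a free module of rank $r$ has rank at most $r$, so $rk(\mathbf{M}_{x})\le r$.

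The main subtlety is the non-uniqueness of the ambient free module $F$ (Remark \ref{NbEnvelope}): the Pl\"ucker coordinates in (ii) are \emph{not} determined by $\mathbf{M}_{x}$, and one must check that such coordinates exist as elements of $R$ exactly when some free envelope $F$ does, rather than as an ``abstract'' family of unknowns. This is handled by noting that over a principal ideal ring a rank-$r$ free submodule of $R^{n}$ admits a basis matrix with linearly independent rows, whose $r\times r$ minors furnish a valid Pl\"ucker tuple; well-definedness up to a common unit factor follows from \cite[Remark 2.12]{Gorla2017algebraic}, which does not affect the homogeneous equations \eqref{SupportMinorsModelingEq}. Everything else is the straightforward Laplace expansion bridging \eqref{SupportMinorsModelingEq} and the minor-vanishing condition of Lemma \ref{MaxMinLem}.
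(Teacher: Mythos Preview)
Your proof is correct and follows essentially the same approach as the paper: both directions hinge on Lemma~\ref{Envelope} (to produce a rank-$r$ free envelope $F$), Lemma~\ref{MaxMinLem} combined with the Laplace expansion~\eqref{PluckerCoordinates} (to translate between row membership in $F$ and the vanishing relations~\eqref{SupportMinorsModelingEq}), and \cite[Proposition 3.2]{Kamche2019rank} (to conclude the rank bound). Your write-up is simply more explicit about the choice of a basis matrix $\mathbf{A}$ for $F$ and about the harmless unit ambiguity in the Pl\"ucker coordinates, but the argument is the same.
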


\begin{proof}
Assume that $rk(\mathbf{M}_{x})\leq r$. Then, by Lemma \ref{Envelope}, there
exists a free submodule $F$ of $R^{n}$ of rank $r$ such that $row\left(
\mathbf{M}_{x}\right) \subset F$. Let $\left\{ z_{j_{1},\ldots ,j_{r}}:1\leq
j_{1}<\cdots <j_{r}\leq n\right\} $ be a Pl\"ucker coordinates of $F$. Then,
by Lemma \ref{MaxMinLem} and (\ref{PluckerCoordinates}), we get (\ref%
{SupportMinorsModelingEq}).

Conversely, assume that (ii) holds. 
Then, by Lemma \ref{MaxMinLem}, all row
vectors of $\mathbf{M}_{x}$ are in a free module of rank $r$. Therefore, by
\cite[Proposition 3.2]{Kamche2019rank}, $rk(\mathbf{M}_{x})\leq r$.
\end{proof}

As stated in Remark \ref{NbEnvelope}, the free submodule $F$ is generally not
unique. Consequently, there are usually several Pl\"ucker coordinates associated to different
free submodules, and which all satisfy Equation (\ref{SupportMinorsModelingEq}).\
Equation (\ref{SupportMinorsModelingEq}) is a system of polynomial equations
with unknowns $x_{l}$ and $z_{j_{1},\ldots ,j_{r}}$. Thus, as specified in
the previous sections, we can use Gr\"obner bases to solve (\ref%
{SupportMinorsModelingEq}). But in some cases, it is possible to use linear
algebra as in \cite{BBBGNRT20}.

\begin{example}
\label{ExSupportMinorsModeling} Consider the MinRank problem (\ref%
{Ex1KSModeling1}) of Example \ref{Ex1KSModeling}. Since $r=1$, then  by Theorem %
\ref{SupportMinorsModeling}, there exists a Pl\"ucker coordinates $\left( z_{1},z_{2},z_{3},z_{4}\right) $  of a free submodule of $\mathbb{Z}_{8}^4$ of rank $1$ such that  (\ref{Ex1KSModeling1}) is equivalent
to
\begin{equation}
\left\{
\begin{array}{c}
\mathbf{M}_{x}[i,1]z_{2}-\mathbf{M}_{x}[i,2]z_{1}=0 \\
\mathbf{M}_{x}[i,1]z_{3}-\mathbf{M}_{x}[i,3]z_{1}=0 \\
\mathbf{M}_{x}[i,1]z_{4}-\mathbf{M}_{x}[i,4]z_{1}=0 \\
\mathbf{M}_{x}[i,2]z_{3}-\mathbf{M}_{x}[i,3]z_{2}=0 \\
\mathbf{M}_{x}[i,2]z_{4}-\mathbf{M}_{x}[i,4]z_{2}=0 \\
\mathbf{M}_{x}[i,3]z_{4}-\mathbf{M}_{x}[i,4]z_{3}=0%
\end{array}%
\right. ,\ \ \ i=1,...,4  \label{ExSupportMinorsModeling1}
\end{equation}%
where $\mathbf{M}_{x}=x_{1}\mathbf{M}_{1}+x_{2}\mathbf{M}_{2}+x_{3}\mathbf{M}%
_{3}$. Since $\mathbb{Z}_{8}$ is a finite chain ring, at least one component of the Pl\"ucker
coordinates $\left( z_{1},z_{2},z_{3},z_{4}\right) $ is a unit. Without loss
of generality, assume that $z_{4}$ is a unit, then in other to recover $x_{1}$, $%
x_{2}$ and $x_{3},$ we rewrite (\ref{ExSupportMinorsModeling1}) as
\begin{equation}
\mathbf{AX=0}  \label{ExSupportMinorsModeling2}
\end{equation}%
where
$\mathbf{\allowbreak X}^{\top }=\left(
\begin{array}{cccccccccccc}
x_{1}z_{1} & x_{2}z_{1} & x_{3}z_{1} & x_{1}z_{2} & x_{2}z_{2} & x_{3}z_{2}
& x_{1}z_{3} & x_{2}z_{3} & x_{3}z_{3} & x_{1}z_{4} & x_{2}z_{4} & x_{3}z_{4}%
\end{array}%
\right)$
and $\mathbf{A}$ is a matrix with entries in $\mathbb{Z}_{8}$. Using SageMath \cite{Sagemath2023}, we can compute the
row echelon form $\widetilde{\mathbf{A}}$ of $\mathbf{A}$ and get
{\small 
\begin{equation*}
\widetilde{\mathbf{A}}=\left(
\begin{array}{rrrrrrrrrrrr}
1 & 0 & 0 & 0 & 0 & 0 & 0 & 0 & 0 & 0 & 0 & 2 \\
0 & 1 & 0 & 0 & 0 & 0 & 0 & 0 & 0 & 0 & 0 & 2 \\
0 & 0 & 2 & 0 & 0 & 0 & 0 & 0 & 0 & 0 & 0 & 2 \\
0 & 0 & 0 & 1 & 0 & 0 & 0 & 0 & 0 & 0 & 0 & 2 \\
0 & 0 & 0 & 0 & 1 & 0 & 0 & 0 & 0 & 0 & 0 & 2 \\
0 & 0 & 0 & 0 & 0 & 2 & 0 & 0 & 0 & 0 & 0 & 2 \\
0 & 0 & 0 & 0 & 0 & 0 & 1 & 0 & 0 & 0 & 0 & 2 \\
0 & 0 & 0 & 0 & 0 & 0 & 0 & 1 & 0 & 0 & 0 & 2 \\
0 & 0 & 0 & 0 & 0 & 0 & 0 & 0 & 2 & 0 & 0 & 2 \\
0 & 0 & 0 & 0 & 0 & 0 & 0 & 0 & 0 & 1 & 0 & 2 \\
0 & 0 & 0 & 0 & 0 & 0 & 0 & 0 & 0 & 0 & 1 & 2 \\
0 & 0 & 0 & 0 & 0 & 0 & 0 & 0 & 0 & 0 & 0 & 4%
\end{array}%
\right)
\end{equation*}
}
Therefore, (\ref{ExSupportMinorsModeling2}) is equivalent to
\begin{equation*}
\widetilde{\mathbf{A}}\mathbf{X=0}
\end{equation*}%
Thus, $x_{1}z_{4}+2x_{3}z_{4}=$ $x_{2}z_{4}+2x_{3}z_{4}=4x_{3}z_{4}=0$.
Since we assumed that $z_{4}$ is a unit, we get $\left(
x_{1},x_{2},x_{3}\right) \in \left\{ \left( 0,0,0\right) ,\left(
4,4,2\right) ,\left( 0,0,4\right) ,\left( 4,4,6\right) \right\} $.
\end{example}

In the case of fields, some conditions have been given in \cite{BBBGNRT20}
to solve (\ref{SupportMinorsModelingEq}) using linear algebra. It will be
interesting to study if these conditions can be extended to rings.

It is important to note that, according to \cite[Proposition 3.4]{Kamche2019rank}, the rank of a matrix
and its transpose are equal. Therefore, the MinRank problem defined with $%
\mathbf{M}_{0}$, $\mathbf{M}_{1}$, $\ldots $, $\mathbf{M}_{k}$ shares the same
solution set with the one defined with $\mathbf{M}_{0}^{\top }$, $\mathbf{M}%
_{1}^{\top }$, $\ldots $, $\mathbf{M}_{k}^{\top }$. Thus, in order to reduce
the number of variables in the algebraic modeling, one can transpose the matrices before solving the MinRank problem, as stated for example in \cite%
{Bardet2022improvement}.

\section{Rank Decoding Problems Over Finite Principal Ideal Rings \label{RDproblem}}

In this section, we will study the algebraic approach for solving the rank
decoding problem over finite principal ideal rings. Over fields, the rank
decoding problem has several algebraic modeling. As specified in \cite[%
Section 4]{Kalachi2023OnTheRank}, the Ourivski-Johansson modeling \cite%
{Ourivski2002new} and the MaxMinors modeling \cite{BBCGPSTV20}\ cannot
extend directly to rings due to zero divisors. We will show that the
Support-Minors modeling \cite{Bardet2022revisiting} and the modeling using
linearized polynomials \cite{Gaborit2016complexity} can be extended to
finite principal ideal rings.

\subsection{Rank Decoding Problem}

To define the rank decoding problem, we must first recall the construction
of a Galois extension of a finite principal ideal ring $R$. As we
specified in Section \ref{LocalRings},\ $R$ can be decomposed into a direct
sum of local rings. Thus, in the following, we assume that $%
R=R_{(1)}\times \cdots \times R_{(\rho )}$ where each $R_{(j)}$ is a finite chain
ring with maximal ideal $\mathfrak{m}_{(j)}$ and residue field $\mathbb{F}%
_{q_{(j)}}$, for $j=1,\ldots ,\rho $. Let $m$ be a non-zero positive integer
and $h_{(j)}\in $ $R_{(j)}\left[ X\right] $ a monic polynomial of degree $m$
such that its projection onto $\mathbb{F}_{q_{(j)}}\left[ X\right] $ is
irreducible. Set $S_{(j)}=R_{(j)}\left[ X\right] /\left( h_{(j)}\right) $,
then by \cite{Mcdonald1974finite} $S_{(j)}$ is a Galois extension of $%
R_{(j)} $ of degree $m$ with Galois group that is cyclic of order $m$. Moreover, $%
S_{(j)}$ is also a finite chain ring with maximal ideal $\mathfrak{M}_{(j)}=%
\mathfrak{m}_{(j)}S_{(j)}$ and residue field $\mathbb{F}_{q_{(j)}^{m}}$. Let us
denote by $\sigma _{(j)}$ a generator of the Galois group of $S_{(j)}$, $%
\sigma =\left( \sigma _{(j)}\right) _{1\leq j\leq \rho }$ and $%
S=S_{(1)}\times \cdots \times S_{(\rho )}$. Then, as specified in \cite%
{Kamche2019rank}, $S$ is a Galois extension of $R$ of degree $m$ with Galois group generated by $\sigma $. Moreover, there exists $h\in R\left[ X%
\right] $ such that $S\cong R\left[ X\right] /\left( h\right) $. An example
of construction of a Galois extension of $%
\mathbb{Z}
_{40}$ of degree $4$ was given in \cite[Example 2.2]{Kalachi2023OnTheRank}.
The following example shows how one can construct a generator of the Galois
group in practice using the Hensel lifting of a primitive polynomial.

\begin{example}
\label{ExGaloisExtension}Let us construct
a degree $3$ Galois extension of $R = \mathbb{Z}_{8}$, and its Galois group. The residue
field of $R$ is $\mathbb{F}_{q} =\mathbb{F}_{2}
$ and the polynomial $g=X^{3}+X+1$ is a primitive polynomial in $\mathbb{F}_{q}%
\left[ X\right] $. Using the Hensel's lemma, we can construct the polynomial $%
h=X^{3}+6X^{2}+5X+7\ \ \in R\left[ X\right] $, such that $\overline{h}=g$
and $h$ divides $X^{q^{m}-1}-1$. Therefore,$\ S=R\left[ X\right] /\left(
h\right) =R\left[ \alpha \right] $ is a Galois extension of $R$ of degree $%
m=3$, where $\alpha =X+\left( h\right) $. Moreover, $\alpha ^{q^{m}-1}=1$
and $\alpha ^{i}\neq 1$, for $0<i<q^{m}-1$. Thus, the Galois group is
generated by the map $\sigma :S\rightarrow S$ given by $\alpha \mapsto
\alpha ^{q}$, that is to say, for all \ $x=\sum_{i=0}^{m-1}x_{i}\alpha ^{i}$, where
$x_{i}\in R$, $\sigma \left( x\right) =\sum_{i=0}^{m-1}x_{i}\alpha ^{iq}$.
\end{example}

\begin{definition}
Let $\ \mathbf{u}=\left( u_{1},\ldots ,u_{n}\right) \in S^{n}$.
\begin{enumerate}
    \item[a)] The \textbf{support} of $\mathbf{u}$, denoted by $supp(\mathbf{u})$, is
the $R-$submodule of $S$ generated by $\{ u_{1},\ldots ,u_{n} \} $.

\item[b)] The \textbf{rank} of $\mathbf{u},$ denoted by $rk_{R}\left( \mathbf{u}%
\right),$ or simply by $rk\left( \mathbf{u}\right) $ is the smallest
number of elements in $supp(\mathbf{u})$ which generate $supp(\mathbf{u})$
as a $R-$module.
\end{enumerate}
\end{definition}

Since $S$ is a free $R-$module, computing the rank of a vector $%
\mathbf{u}\in S^{n}$ can be done by using its matrix representation in a $R-$basis
of $S$ as in the case of finite fields (for more details see \cite[Proposition 3.13]%
{Kamche2019rank}).

\begin{definition}
Let $\mathcal{C}$ be a $S-$submodule of $S^{n}$, $\mathbf{y}$ an element of
$S^{n}$ and $r\in {{\mathbb{N}}}^{\ast }$. The \textbf{rank decoding problem}
is to find $\mathbf{e}$ in $S^{n}$ and $\mathbf{c}$ in $\mathcal{C}$ such
that $\mathbf{y}=\mathbf{c}+\mathbf{e}$ with $rk(\mathbf{e})\leq r$.
\end{definition}

Using the representation of elements in $S^{n}$ as elements of $R^{m\times
n} $, the rank decoding problem can be reduced to the MinRank problem, as in
the case of finite fields \cite{Faugere2008cryptanalysis}.

\begin{example}
\label{ExDecodingProblem}Let us consider the rings $R=\mathbb{Z}_{8}$ and $\ S=R\left[
\alpha \right] $ as in Example \ref{ExGaloisExtension}. Let $\mathcal{C}%
\subset S^{3}$ be the $S-$linear code generated by:
\begin{equation*}
\mathbf{g}=\left( 1,2\alpha ^{2}+\alpha +2,\alpha ^{2}+3\alpha \right) .
\end{equation*}%
Set
$\mathbf{y=}\left( 4\alpha ^{2}+3\alpha +3,\,5\alpha ^{2}+7\alpha
+6,\,2\alpha ^{2}+4\alpha +5\right) 
$ and consider the instance of the rank decoding problem consisting of finding $\mathbf{c}\in \mathcal{C}
$ such that
\begin{equation}
rk\left( \mathbf{y}-\mathbf{c}\right) \leq 1.  \label{ExDecodingProblem1}
\end{equation}%
Equation (\ref{ExDecodingProblem1}) is equivalent to finding $x_{1}$, $x_{2}$%
, $x_{3}$ in $R$ such
\begin{equation}
rk\left( \mathbf{y}-\left( x_{1}+x_{2}\alpha +x_{3}\alpha ^{2}\right)
\mathbf{g}\right) \leq 1  \label{ExDecodingProblem2}
\end{equation}%
Since $\mathcal{C}$ is generated by $\mathbf{g}$, then the matrix
representation of $\mathcal{C}$ in the basis $\left( 1,\alpha ,\alpha
^{2}\right) $ is the $R-$linear code generated by $\mathbf{M}_{1}$, $\mathbf{%
M}_{2}$, $\mathbf{M}_{3}$ which are respectively the representation matrices
of $\mathbf{g}$, $\alpha \mathbf{g}$, $\alpha ^{2}\mathbf{g}$ in the basis $%
\left( 1,\alpha ,\alpha ^{2}\right) $. Let $\mathbf{M}_{0}$ be the matrix
representation of $-\mathbf{y}$ in the basis $\left( 1,\alpha ,\alpha
^{2}\right) $. Then, the rank decoding problem (\ref{ExDecodingProblem2}) is
equivalent to the MinRank problem (\ref{Ex2KSModeling1}) defined in Example %
\ref{Ex2KSModeling}. The solution of (\ref{Ex2KSModeling1}) is $x_{1}=1$, $%
x_{2}=3$, $x_{3}=6$. Thus, $\mathbf{c=}\left( 1+3\alpha +6\alpha ^{2}\right)
$ $\mathbf{g}$.
\end{example}

\subsection{Support-Minors Modeling}

According to \cite[Proposition 3.14]{Kamche2019rank}\ we have the following:

\begin{lemma}
\label{VectorRankDecompositions} For any $\mathbf{u}\in S^{n}$ with $rk\left(
\mathbf{u}\right) \leq r$, there exists $\mathbf{b} \in S^{r}$ and $\mathbf{%
A} \in R^{r\times n}$ such that $row\left( \mathbf{A}\right) $ is a free
module of rank $r$ and $\mathbf{u}=\mathbf{bA}$.
\end{lemma}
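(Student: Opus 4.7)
The plan is to reduce the statement to Lemma \ref{Envelope} via the matrix representation of $\mathbf{u}$. First I would fix an $R$-basis $\theta_1, \ldots, \theta_m$ of the free $R$-module $S$ and expand each $u_j = \sum_i U_{ij}\theta_i$ with $U_{ij} \in R$, assembling the coefficients into a matrix $\mathbf{U} \in R^{m\times n}$. By \cite[Proposition 3.13]{Kamche2019rank}, the rank of this matrix equals $rk(\mathbf{u})$ and is therefore at most $r$.

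Next, applying Lemma \ref{Envelope} to $\mathbf{U}$ produces a free $R$-submodule $F \subset R^n$ of rank $r$ containing $row(\mathbf{U})$. I take $\mathbf{a}_1, \ldots, \mathbf{a}_r$ to be a basis of $F$ and let $\mathbf{A} \in R^{r\times n}$ be the matrix whose rows are the $\mathbf{a}_k$'s, so that $row(\mathbf{A}) = F$ is automatically free of rank $r$. Every row of $\mathbf{U}$ lies in $F$ and therefore admits a unique expansion in the basis $(\mathbf{a}_k)_{1 \leq k \leq r}$; assembling these expansions gives a factorization $\mathbf{U} = \mathbf{V}\mathbf{A}$ with $\mathbf{V} \in R^{m\times r}$. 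Finally, setting $b_k := \sum_i V_{ik}\theta_i \in S$ and $\mathbf{b} := (b_1, \ldots, b_r)$, a routine re-expansion in the $\theta_i$ gives $u_j = \sum_i \theta_i U_{ij} = \sum_k b_k A_{kj}$, i.e.\ $\mathbf{u} = \mathbf{b}\mathbf{A}$.

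The main subtlety — and the real content of the argument — lies in choosing \emph{where} to apply Lemma \ref{Envelope}. It is tempting to try to cover $supp(\mathbf{u}) \subset S$ directly by a free $R$-submodule of $S$ of rank $r$ and then expand each $u_j$ in its basis, but this does not in general produce a matrix $\mathbf{A}$ with free row space: for instance, over $R = S = \mathbb{Z}_{8}$ the vector $\mathbf{u} = (2,4)$ has $supp(\mathbf{u}) = 2\mathbb{Z}_{8}$ whose only rank-$1$ free cover in $S$ is $S$ itself, so the naive choice forces $\mathbf{A} = (2,4)$, which has non-trivial annihilator $4R$. Applying Lemma \ref{Envelope} instead to $\mathbf{U}$ covers $row(\mathbf{U}) \subset R^n$ by a free module $F$, and the basis of $F$, which serves as the rows of $\mathbf{A}$, forces $row(\mathbf{A})$ to be free of rank $r$ by design.
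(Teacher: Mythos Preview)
Your argument is correct. The paper itself does not prove this lemma but simply quotes it from \cite[Proposition 3.14]{Kamche2019rank}; your proof is a clean self-contained reconstruction using only tools already present in the paper (the identification of $rk(\mathbf{u})$ with the rank of its coordinate matrix via \cite[Proposition 3.13]{Kamche2019rank}, together with Lemma~\ref{Envelope}), and your closing remark correctly pinpoints why one must pass to the matrix representation $\mathbf{U}\in R^{m\times n}$ rather than covering $supp(\mathbf{u})$ directly inside $S$.
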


The following result is a generalization of the Support-Minors modeling for
the rank decoding problem giving in  \cite{Bardet2022revisiting}.


\begin{theorem}
\label{SupportMinorsModelingRDP} Let $\mathcal{C}$ be a $S-$submodule of $%
S^{n}$ with a generator matrix $\mathbf{G}=\left( g_{i,j}\right) _{1\leq
i\leq k,1\leq j\leq n}$, $\mathbf{y=}\left(
y_{i}\right) _{1\leq i\leq n}\in $ $S^{n}$ and $r\in {{\mathbb{N}}}$. The following statements
are equivalent.

\begin{enumerate}
\item[(i)] There exists $\mathbf{c}\in \mathcal{C}$ such that $rk\left(
\mathbf{y}-\mathbf{c}\right) \leq r$.

\item[(ii)] There exist a set  $\left\{ z_{j_{1},\ldots
,j_{r}}:1\leq j_{1}<\cdots <j_{r}\leq n\right\} $ of Pl\"ucker coordinates of a free submodule of $%
R^{n}$ of rank $r$ and $\mathbf{x}=\left( x_{i}\right) _{1\leq i\leq k}\in
S^{k}$ such that
\begin{equation}
\sum_{s=1}^{r+1}\sum_{i=1}^{k}\left( -1\right) ^{s+1}\left(
x_{i}g_{i,_{j_{s}}}-y_{j_{s}}\right) z_{j_{1},\ldots ,j_{s-1},j_{s+1},\ldots
j_{r+1}}=0,  \label{EqSupportMinorsModelingRDP}
\end{equation}

for all sequence of $r+1$ positive integers $1\leq j_{1}<\cdots <j_{r+1}\leq
n.$
\end{enumerate}
\end{theorem}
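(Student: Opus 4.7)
The plan is to mirror the proof of Theorem \ref{SupportMinorsModeling}, but applied to a vector in $S^n$ rather than a matrix over $R$. First I would use the fact that any $\mathbf{c}\in\mathcal{C}$ has the form $\mathbf{c}=\mathbf{x}\mathbf{G}$ for some $\mathbf{x}\in S^k$, so that (i) is equivalent to $rk(\mathbf{u})\leq r$ for $\mathbf{u}=\mathbf{y}-\mathbf{x}\mathbf{G}$, whose $j$-th entry is exactly $u_j=y_j-\sum_{i=1}^{k}x_i g_{i,j}$. This reduces the statement to showing that for a vector $\mathbf{u}\in S^n$, $rk(\mathbf{u})\leq r$ is equivalent to the existence of Plücker coordinates of a rank-$r$ free submodule of $R^n$ satisfying the analogue of equation (\ref{PluckerCoordinates}) with $u_{j_\alpha}$'s in place of $y_{j_\alpha}$'s.

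For the direction (i)$\Rightarrow$(ii), I would invoke Lemma \ref{VectorRankDecompositions} to write $\mathbf{u}=\mathbf{b}\mathbf{A}$ with $\mathbf{A}\in R^{r\times n}$ having $row(\mathbf{A})$ free of rank $r$, and take the Plücker coordinates $\{z_{j_1,\ldots,j_r}\}$ of this free submodule. Viewing $\mathbf{A}$ as a matrix over $S$, the vector $\mathbf{u}=\mathbf{b}\mathbf{A}$ lies in its $S$-row space, so Lemma \ref{MaxMinLem} (applied over $S$) forces all $(r+1)$-minors of $\binom{\mathbf{u}}{\mathbf{A}}$ to vanish. Laplace-expanding each such minor along its top row, and noting that the $r\times r$ minors of $\mathbf{A}$ are precisely the $z_{j_1,\ldots,j_r}$'s (since these minors do not depend on the base ring), reproduces the $S$-coefficient identity (\ref{PluckerCoordinates}) for $\mathbf{u}$; substituting $u_{j_s}=y_{j_s}-\sum_i x_i g_{i,j_s}$ and rearranging signs gives exactly (\ref{EqSupportMinorsModelingRDP}).

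For the converse (ii)$\Rightarrow$(i), I would run the argument backwards. Pick a matrix $\mathbf{A}\in R^{r\times n}$ whose $R$-row space is the free rank-$r$ submodule $F$ whose Plücker coordinates are the given $z_{j_1,\ldots,j_r}$; then (\ref{EqSupportMinorsModelingRDP}) asserts (up to sign) the vanishing of every $(r+1)$-minor of $\binom{\mathbf{u}}{\mathbf{A}}$ for $\mathbf{u}=\mathbf{y}-\mathbf{x}\mathbf{G}$. Applying Lemma \ref{MaxMinLem} over $S$ yields $\mathbf{u}\in S\cdot F$, i.e.\ $\mathbf{u}=\mathbf{b}\mathbf{A}$ for some $\mathbf{b}\in S^r$. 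Each entry $u_j$ then lies in $Rb_1+\cdots+Rb_r$, so $supp(\mathbf{u})$ is $R$-generated by $\{b_1,\ldots,b_r\}$, whence $rk(\mathbf{u})\leq r$ by \cite[Proposition 3.2]{Kamche2019rank}, and $\mathbf{c}=\mathbf{x}\mathbf{G}$ witnesses (i).

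The main obstacle is the careful application of Lemma \ref{MaxMinLem} in a setting that is slightly outside its literal statement: the matrix $\mathbf{A}$ has entries in $R$ while the candidate vector $\mathbf{u}$ lies in $S^n$. One has to check that an $R$-independent set of rows of $\mathbf{A}$ remains $S$-independent (which follows from $S$ being a free, hence flat, $R$-module, so that the injection $R^r\hookrightarrow R^n$ given by $\mathbf{A}$ tensors to an injection $S^r\hookrightarrow S^n$), and that Lemma \ref{MaxMinLem} itself is valid over the principal ideal ring $S$ with the same proof, since the Cauchy--Binet formula and the existence of a standard-form completion of $\mathbf{A}$ hold in that generality. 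Once these base-change issues are dispatched, everything else is a routine transcription of the proof of Theorem \ref{SupportMinorsModeling}.
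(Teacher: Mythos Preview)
Your proposal is correct and follows exactly the route the paper intends: the paper's own proof is the single sentence ``Using Lemmas~\ref{MaxMinLem} and~\ref{VectorRankDecompositions}, the proof is similar to the one from~\cite{Bardet2022revisiting}'', and you have simply unpacked that sentence. Your explicit handling of the base-change issue (that the $R$-independent rows of $\mathbf{A}$ remain $S$-independent by flatness, so Lemma~\ref{MaxMinLem} applies over $S$) is a genuine detail that the paper leaves implicit, but it is the right observation and does not constitute a different approach.
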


\begin{proof}
Using Lemmas\ \ref{MaxMinLem} and \ref{VectorRankDecompositions}, the
proof is similar to the one from \cite{Bardet2022revisiting}.
\end{proof}
\\

Equation (\ref{EqSupportMinorsModelingRDP}) is a system of algebraic
equations over $S$ with unknows $x_{i}\in S$ and $z_{j_{1},\ldots
,j_{s-1},j_{s+1},\ldots j_{r+1}}\in R$. To solve this equation using Gr\"obner
bases, we must first expand this equation to $R$.

\begin{example}
\label{ExSupportMinorsModelingRDP}Consider the rank decoding problem (\ref%
{ExDecodingProblem1}) of Example \ref{ExDecodingProblem}. Set $\mathbf{g=}%
\left( g_{1,}g_{2},g_{3}\right) $ and $\mathbf{y=}\left(
y_{1,}y_{2},y_{3}\right) $. Then, by Theorem \ref{SupportMinorsModelingRDP},
there are $x$ in $S$ and $z_{1}$, $z_{2}$, $z_{3}$ in $R$ such that
\begin{equation}
\left\{
\begin{array}{c}
\left( xg_{1}-y_{1}\right) z_{2}-\left( xg_{2}-y_{2}\right) z_{1}=0 \\
\left( xg_{1}-y_{1}\right) z_{3}-\left( xg_{3}-y_{3}\right) z_{1}=0 \\
\left( xg_{2}-y_{2}\right) z_{3}-\left( xg_{3}-y_{3}\right) z_{2}=0%
\end{array}%
\right.  \label{ExSupportMinorsModelingRDP1}
\end{equation}

Since $R=%
\mathbb{Z}
_{8}$ is a finite chain ring, then at least one of the elements in the Pl\"ucker coordinates $\left( z_{1},z_{2},z_{3}\right) $ is a unit. Without loss
of generality, assume that $z_{1}=1$. Set $x=x_{0}+x_{1}\alpha +x_{2}\alpha
^{2}$ where $x_{i} \in R$, for $i \in \{ 0, 1, 2 \}$. Using SageMath \cite{Sagemath2023}, we
substitute $x$ and \ $z_{1}$ in (\ref{ExSupportMinorsModelingRDP1}) and
expand the resulting equations over $R$ using the basis $\left( 1,\alpha
,\alpha ^{2}\right) $, then we obtain a system of equations of the form%
\begin{equation}
\left\{
\begin{array}{l}
-z_{2}x_{0}+3z_{2}+2x_{0}+2x_{1}+5x_{2}+2=0 \\
-z_{2}x_{1}+3z_{2}+x_{0}+x_{2}+1=0 \\
-z_{2}x_{2}+4z_{2}+2x_{0}+5x_{1}+2x_{2}+3=0 \\
-z_{3}x_{0}+3z_{3}+x_{1}+5x_{2}+3=0 \\
-z_{3}x_{1}+3z_{3}+3x_{0}+3x_{1}+4=0 \\
-z_{3}x_{2}+4z_{3}+x_{0}+5x_{1}+5x_{2}+6=0 \\
z_{2}x_{1}+5z_{2}x_{2}+3z_{2}+6z_{3}x_{0}+6z_{3}x_{1}+3z_{3}x_{2}+6z_{3}=0
\\
3z_{2}x_{0}+3z_{2}x_{1}+4z_{2}-z_{3}x_{0}-z_{3}x_{2}-z_{3}=0 \\
z_{2}x_{0}+5z_{2}x_{1}+5z_{2}x_{2}+6z_{2}+6z_{3}x_{0}+3z_{3}x_{1}+6z_{3}x_{2}+5z_{3}=0%
\end{array}%
\right.  \label{ExSupportMinorsModelingRDP2}
\end{equation}

As in Example \ref{Ex2KSModeling}, when we compute a Gr\"obner basis
associated to (\ref{ExSupportMinorsModelingRDP2}) with the lexicographic
order $z_{2}>z_{3}$ $>x_{0}>x_{1}>x_{2}$ we obtain a system which is not
easy to solve. So, to simplify the resolution, we add the polynomial
expressions $F_{m}\left( z_{2}\right) $, $F_{m}\left( z_{3}\right) $, $%
F_{m}\left( x_{0}\right) $, $F_{m}\left( x_{1}\right) $, $F_{m}\left(
x_{2}\right) $ as in Example \ref{ExSolveFCR_GB2}, and get the Gr\"obner
basis: $z_{2}^{4}-z_{2}^{2}$, $2z_{2}$, $z_{3}^{4}+3z_{3}^{2}+4$, $2z_{3}+4$%
, $x_{0}+7$, $x_{1}+5$, $x_{2}+2$. Thus, $x_{0}=1$, $x_{1}=3$, $x_{2}=6$.
\end{example}

\subsection{Algebraic Modeling With Skew Polynomials}

Skew polynomials \cite{Ore1933theory} generalize linearized polynomials, and
some properties of linearized polynomials have been extended to skew
polynomials in \cite{Kamche2019rank}.

\begin{definition}
The\textbf{\ skew polynomial ring} over $S$ with automorphism $\sigma $,
denoted by $S[X,\sigma ]$, is the ring of all polynomials in $S[X]$ such that
\begin{itemize}
    \item the addition is defined to be the usual addition of polynomials;
    \item the multiplication is defined by the basic rule $Xa=\sigma \left( a\right)
X$, for all $a\in S$.
\end{itemize}

\end{definition}

\begin{notation}[Evaluation Map]
Let $f=a_{0}+a_{1}X+\cdots a_{k}X^{k}\in S[X,\sigma ]$, $x\in S$ and \ $%
\mathbf{u}=\left( u_{i}\right) _{1\leq i\leq n}\in S^{n}$.

\begin{enumerate}
\item $f\left( x\right) :=a_{0}x+a_{1}\sigma \left( x\right) +\cdots
+a_{k}\sigma ^{k}\left( x\right) $.

\item $f\left( \mathbf{u}\right) :=\left( f\left( u_{i}\right) \right)
_{1\leq i\leq n}$.
\end{enumerate}
\end{notation}

According to \cite[Propositions 3.15, 3.16 and Corollary 2.7]%
{Kamche2019rank}, we have the following proposition.

\begin{proposition}
\label{Annulator} For all $\mathbf{u}\in S^{n}$, $rk\left( \mathbf{u}\right)
\leq r$ if and only if there is a monic skew polynomial $f\in S[X,\sigma ]$
of degree $r$ such that $f\left( \mathbf{u}\right) =\mathbf{0}.$ Moreover,
if \ $supp(\mathbf{u})$ is a free module and $rk\left( \mathbf{u}\right) =r$
then $f$ is unique.
\end{proposition}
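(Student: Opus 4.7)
The plan is to prove both implications via Lemma~\ref{VectorRankDecompositions}, together with one crucial auxiliary fact: for any monic $f\in S[X,\sigma]$ of degree $r$, the kernel $\ker(f):=\{x\in S:f(x)=0\}$ is an $R$-submodule of $S$ of rank at most $r$. Granting this, the reverse implication is immediate: if $f$ is monic of degree $r$ with $f(\mathbf{u})=\mathbf{0}$, every entry of $\mathbf{u}$ lies in $\ker(f)$, hence $supp(\mathbf{u})\subset\ker(f)$, so $rk(\mathbf{u})\leq r$. The auxiliary fact itself I would establish by decomposing $S=S_{(1)}\times\cdots\times S_{(\rho)}$, working in each chain-ring factor, and invoking a Moore/Vandermonde-type determinant argument whose non-degeneracy is guaranteed by the monicity of $f$ combined with the valuation decomposition of Proposition~\ref{DecompositionFCR}.

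For the forward implication, assume $rk(\mathbf{u})\leq r$. By Lemma~\ref{VectorRankDecompositions}, write $\mathbf{u}=\mathbf{b}\mathbf{A}$ with $\mathbf{b}=(b_1,\ldots,b_r)\in S^r$ and $\mathbf{A}\in R^{r\times n}$. Since $R$ is the fixed ring of $\sigma$, the entries of $\mathbf{A}$ are $\sigma$-invariant, which yields the identity $f(\mathbf{u})=f(\mathbf{b})\mathbf{A}$ for every $f\in S[X,\sigma]$. It therefore suffices to build a monic skew polynomial of degree $r$ annihilating $b_1,\ldots,b_r$. I would do this by induction: starting from $f_0=1$ and given $f_i$ monic of degree $i$ with $f_i(b_j)=0$ for $j\leq i$, set $c:=f_i(b_{i+1})$. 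If $c=0$, take $f_{i+1}(X):=Xf_i(X)$, which still kills $b_1,\ldots,b_i$ because $(Xg)(x)=\sigma(g(x))$. Otherwise, I seek $a\in S$ with $\sigma(c)=ac$ and set $f_{i+1}(X):=(X-a)f_i(X)$, which gives $f_{i+1}(b_j)=0$ for all $j\leq i+1$. If the construction terminates with degree less than $r$, left-multiplying by $X$ repeatedly raises the degree to exactly $r$ without changing the zero set.

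The main obstacle is producing the element $a$ above when $c$ is a zero divisor; over a field one would simply set $a=\sigma(c)/c$, but this quotient need not exist in $S$. My plan to overcome this is to exploit the fact that the uniformizer $\pi_{(j)}$ of each $R_{(j)}$ also generates the maximal ideal of $S_{(j)}$, so $\pi_{(j)}$ is fixed by $\sigma_{(j)}$. Writing $c$ componentwise as $\pi_{(j)}^{\ell}u$ with $u$ a unit (Proposition~\ref{DecompositionFCR}) yields $\sigma_{(j)}(c)=\pi_{(j)}^{\ell}\sigma_{(j)}(u)$, so $a_{(j)}:=\sigma_{(j)}(u)/u$ is a unit of $S_{(j)}$ satisfying $\sigma_{(j)}(c)=a_{(j)}c$; assembling these gives the required $a\in S$. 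Notice that the hypothesis that $row(\mathbf{A})$ is free of rank $r$ in Lemma~\ref{VectorRankDecompositions} is what keeps this valuation-matching argument clean across all components.

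For the uniqueness claim, suppose $supp(\mathbf{u})$ is free of rank $r$ and $f_1,f_2$ are both monic of degree $r$ with $f_i(\mathbf{u})=\mathbf{0}$. Then $g:=f_1-f_2$ has degree strictly less than $r$ and $\ker(g)$ contains the free module $supp(\mathbf{u})$ of rank $r$. If $g\neq 0$, normalising its leading coefficient componentwise into a monic one would produce a monic skew polynomial of degree less than $r$ whose kernel contains a free submodule of rank $r$, contradicting the rank bound from the first paragraph. Hence $f_1=f_2$.
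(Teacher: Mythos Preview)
The paper does not prove Proposition~\ref{Annulator}; it merely records it as a consequence of \cite[Propositions 3.15, 3.16 and Corollary 2.7]{Kamche2019rank}. Your proposal therefore goes well beyond what the paper offers, and the overall architecture is sound. The forward direction is correct: the identity $f(\mathbf u)=f(\mathbf b)\mathbf A$ holds because $\sigma$ fixes $R$, and your inductive construction of the annihilator is the standard one; the decisive step---producing $a\in S$ with $\sigma(c)=ac$ even when $c$ is a zero divisor---is handled correctly by writing $c_{(j)}=\pi_{(j)}^{\ell}u$ componentwise and setting $a_{(j)}=\sigma_{(j)}(u)u^{-1}$.

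There is, however, a gap in the uniqueness step. You assert that if $g=f_1-f_2\neq 0$ one can ``normalise its leading coefficient componentwise into a monic one''. This fails whenever that leading coefficient has positive valuation: no unit multiple of $\pi^{\ell}u$ with $\ell>0$ equals $1$. The repair is to pass to the residue field. Factor $g_{(j)}=\pi^{\ell}h_{(j)}$ with $\ell$ minimal, so that the reduction $\bar h_{(j)}$ is nonzero of degree $<r$; a free $R_{(j)}$-basis $b_1,\dots,b_r$ of $supp(\mathbf u)_{(j)}$ then reduces to $r$ elements of $\ker(\bar h_{(j)})$ that remain $\mathbb F_{q_{(j)}}$-independent, contradicting the classical bound over fields. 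The same residue-field reduction is what actually makes your auxiliary kernel bound go through: via the Smith normal form of $f$ as an $R_{(j)}$-linear endomorphism of $S_{(j)}$ one obtains $rk_{R_{(j)}}(\ker f)=\dim_{\mathbb F_{q_{(j)}}}\ker(\bar f)\le r$. A direct Moore-determinant argument over the chain ring runs into the obstacle that a module of rank $s$ need not contain a free submodule of rank $s$ (for instance $2\mathbb Z_4\subset\mathbb Z_4$ has rank $1$ but contains no free rank-$1$ submodule).
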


\begin{remark}
\label{NbAnnulator} To construct the skew polynomial $f$ of Proposition \ref%
{Annulator}, one generally use a free $R-$submodule of $S$ which contains $%
supp(\mathbf{u})$. Hence, as we pointed out in Remark \ref{NbEnvelope} , there are generally more than one
free $R-$submodule of $S$ which contains $supp(\mathbf{u})$. Thus, $f$ is generally not unique.
\end{remark}

\begin{example}
Consider again $R=\mathbb{Z}_{8}$ and $\ S=R\left[ \alpha \right] $ as in
Example \ref{ExGaloisExtension}. The rank of $\mathbf{u}=\left( 2+6\alpha
^{2},0,4+4\alpha ^{2}\right) $ is $1$ and we would like to find all the
monic skew polynomials $f\in S[X,\sigma ]$ of degree $1$ such that $f\left(
\mathbf{u}\right) =\mathbf{0}$. So, $f=X+w$, where $w\in S$ and can be written as $w=w_{0}+w_{1}\alpha +w_{2}\alpha ^{2}$ with $w_{0}$,
$w_{1}$, $w_{2}$ in $R$.
When we solve the equation $f\left( \mathbf{u}\right) =\mathbf{0}$, we get $w_{0} \in \{ 3,7\}$, $w_{1} \in \{0,4 \}$, $w_{2} \in \{3,7 \}$. Thus,
there are eight monic skew polynomials $f\in S[X,\sigma ]$ with degree $1$ such
that $f\left( \mathbf{u}\right) =\mathbf{0}.$
\end{example}

\begin{notation}
If $\mathbf{B}=\left( b_{i,j}\right) $ is a matrix with entries in $S$ and $l
$ is a positive integer then,
\begin{equation*}
\sigma ^{l}\left( \mathbf{B}\right) :=\left( \sigma ^{l}\left(
b_{i,j}\right) \right) .
\end{equation*}
\end{notation}

The following result is a generalization of the result giving in \cite[%
Section V]{Gaborit2016complexity}.

\begin{theorem}
\label{KeyEquationTheorem} Let $\mathcal{C}$ be a $S-$submodule of $S^{n}$
with generator matrix $\mathbf{G}=\left( g_{i,j}\right) _{1\leq i\leq
k,1\leq j\leq n}$, $\mathbf{y=}\left(
y_{i}\right) _{1\leq i\leq n}\in $ $S^{n}$ and $r\in {{\mathbb{N}}}$. The following statements are
equivalent.

\begin{enumerate}
\item[(i)] There exists $\mathbf{c}\in \mathcal{C}$ such that $rk\left(
\mathbf{y}-\mathbf{c}\right) \leq r$.

\item[(ii)] There are $\left( z_{l}\right) _{0\leq l\leq r}\in S^{r+1}$, $%
z_{r}=1$, and $\mathbf{x}=\left( x_{i}\right) _{1\leq i\leq k}\in S^{k}$
such that
\begin{equation}
\sum_{l=0}^{r}z_{l}\sigma ^{l}\left( \mathbf{y}\right)
=\sum_{l=0}^{r}z_{l}\sigma ^{l}\left( \mathbf{xG}\right)  \label{KeyEquation}
\end{equation}
\end{enumerate}

Moreover, if $\mathcal{C}$ is a free $S-$submodule\ of rank $k$ and $r\leq
t$, where $t$ is the error correction capability of $\mathcal{C}$ then $%
\mathbf{x}$ is unique.
\end{theorem}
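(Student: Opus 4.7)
The plan is to leverage Proposition \ref{Annulator}, which characterizes the rank of a vector by the existence of an annihilating monic skew polynomial, and then simply rewrite the annihilation identity by splitting the argument into a received word part and a codeword part.

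First, I would parameterize codewords: since $\mathbf{G}$ is a generator matrix of $\mathcal{C}$, every $\mathbf{c} \in \mathcal{C}$ can be written as $\mathbf{c} = \mathbf{x}\mathbf{G}$ for some $\mathbf{x} \in S^k$. For the implication (i)$\Rightarrow$(ii), assume $\mathbf{c} = \mathbf{x}\mathbf{G}$ satisfies $rk(\mathbf{y}-\mathbf{c}) \leq r$. By Proposition \ref{Annulator}, there exists a monic skew polynomial $f = \sum_{l=0}^{r} z_l X^l \in S[X,\sigma]$ of degree $r$ (so $z_r = 1$) with $f(\mathbf{y}-\mathbf{c}) = \mathbf{0}$. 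The key observation is that each $\sigma^l$ is a ring automorphism of $S$ and acts componentwise on vectors, hence is additive. Therefore
\begin{equation*}
\mathbf{0} = \sum_{l=0}^{r} z_l \sigma^l(\mathbf{y}-\mathbf{c}) = \sum_{l=0}^{r} z_l \sigma^l(\mathbf{y}) - \sum_{l=0}^{r} z_l \sigma^l(\mathbf{x}\mathbf{G}),
\end{equation*}
which is exactly Equation (\ref{KeyEquation}).

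For the converse (ii)$\Rightarrow$(i), set $\mathbf{c} = \mathbf{x}\mathbf{G} \in \mathcal{C}$. Rewriting (\ref{KeyEquation}) using additivity of each $\sigma^l$ gives $\sum_{l=0}^{r} z_l \sigma^l(\mathbf{y}-\mathbf{c}) = \mathbf{0}$, i.e.\ $f(\mathbf{y}-\mathbf{c}) = \mathbf{0}$ for the monic skew polynomial $f = \sum_{l=0}^{r} z_l X^l$ of degree $r$. Applying Proposition \ref{Annulator} in the other direction yields $rk(\mathbf{y}-\mathbf{c}) \leq r$, as required.

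For the uniqueness statement, suppose $\mathbf{x}, \mathbf{x}' \in S^k$ both satisfy (ii), with possibly different coefficient tuples $(z_l)$ and $(z_l')$. Setting $\mathbf{c} = \mathbf{x}\mathbf{G}$ and $\mathbf{c}' = \mathbf{x}'\mathbf{G}$, the previous paragraph gives $rk(\mathbf{y}-\mathbf{c}) \leq r \leq t$ and $rk(\mathbf{y}-\mathbf{c}') \leq r \leq t$. By the triangle inequality for the rank metric (rank is a metric on $S^n$ as noted earlier), $rk(\mathbf{c}-\mathbf{c}') \leq 2t$, and since $t$ is the error correction capability, this forces $\mathbf{c} = \mathbf{c}'$, i.e.\ $\mathbf{x}\mathbf{G} = \mathbf{x}'\mathbf{G}$. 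Since $\mathcal{C}$ is free of rank $k$, the rows of $\mathbf{G}$ are linearly independent over $S$, so the map $\mathbf{x} \mapsto \mathbf{x}\mathbf{G}$ is injective, giving $\mathbf{x} = \mathbf{x}'$.

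The main obstacle, which is really just a bookkeeping point, is ensuring the additivity splitting $\sigma^l(\mathbf{y}-\mathbf{c}) = \sigma^l(\mathbf{y}) - \sigma^l(\mathbf{c})$ is valid componentwise and that the convention for evaluating $f$ (coefficients on the left, composing with powers of $\sigma$) is consistent with the notation $f(\mathbf{u}) = (f(u_i))_i$ from the excerpt; once this is set up, the proof is essentially a one-line rewrite. The uniqueness part requires one extra ingredient (the rank-metric triangle inequality applied to codewords within distance $t$), but does not need to control the $z_l$'s themselves, only the codeword $\mathbf{c}$.
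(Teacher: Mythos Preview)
Your proof is correct and follows essentially the same approach as the paper: invoke Proposition \ref{Annulator} to pass between the rank bound and the existence of a monic annihilating skew polynomial, then split $f(\mathbf{y}-\mathbf{c})=\mathbf{0}$ using additivity of the $\sigma^l$. Your write-up is in fact more detailed than the paper's, which leaves the uniqueness claim implicit; your triangle-inequality argument together with the injectivity of $\mathbf{x}\mapsto\mathbf{x}\mathbf{G}$ (using that $\mathcal{C}$ is free of rank $k$ with a $k$-row generator matrix) is exactly the standard justification the paper omits.
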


\begin{proof}
By Proposition \ref{Annulator}, $rk\left( \mathbf{y}-\mathbf{c}\right)  \leq r$
if and only if there exists a monic skew polynomial $P=\sum_{l=0}^{r}z_{l}X^{l}%
\in S[X,\sigma ]$ of degree $r$ such that $P\left( \mathbf{y}-\mathbf{c}%
\right) =\mathbf{0}$. Since \ $\mathbf{c}\in \mathcal{C}$, then there exists $%
\mathbf{x}=\left( x_{i}\right) _{1\leq i\leq k}\in S^{k}$, such that $%
\mathbf{c=xG}$. Thus, the result follows.
\end{proof}
\\

According to Remark \ref{NbAnnulator}, when the support of the error is not
a free module, the unknowns $z_{i}$'s, $i=0,\ldots ,r-1$ are not unique, even
if $\mathbf{x}$ is unique. So in general, \eqref{KeyEquation} has many
solutions. This is the main difference compared to the same result over
finite fields. Note that to solve the rank decoding problem, we don't need
the unknowns $z_{i}$. We just need $\mathbf{x}$, since we can use it to
recover $\mathbf{c}$.

\subsubsection{Solving by Linearization}

In this subsection, we will show that in some cases, the unknowns $\mathbf{x}$
in (\ref{KeyEquation}) can be recovered using linear algebra. Equation %
\eqref{KeyEquation} is equivalent to%
\begin{equation}
\mathbf{Au}=\mathbf{0}  \label{LinearModelingRDP1}
\end{equation}%
where%
\begin{equation*}
\mathbf{A}=\left(
\begin{array}{ccccccc}
-\sigma ^{0}\left( \mathbf{y}^{\top }\right)  & \cdots  & -\sigma
^{r-1}\left( \mathbf{y}^{\top }\right)  & \sigma ^{0}\left( \mathbf{G}^{\top
}\right)  & \cdots  & \sigma ^{r}\left( \mathbf{G}^{\top }\right)  & -\sigma
^{r}\left( \mathbf{y}^{\top }\right)
\end{array}%
\right)
\end{equation*}%
and%
\begin{equation*}
\mathbf{u}^{\top }\mathbf{=}\left(
\begin{array}{ccccccc}
z_{0} & \cdots  & z_{r-1} & z_{0}\sigma ^{0}\left( \mathbf{x}\right)  &
\cdots  & z_{r}\sigma ^{r}\left( \mathbf{x}\right)  & z_{r}%
\end{array}%
\right) .
\end{equation*}

In the same way as the row echelon form over fields, the matrix $\mathbf{A}$
can be decomposed as $\mathbf{A=PT}$ where $\mathbf{P}$ is an invertible
matrix and $\mathbf{T=}\left( t_{i,j}\right) $ is an upper triangular
matrix, that is to say $t_{i,j}=0$ if $i>j$ \cite[Theorem 3.5]%
{Kaplansky1949elementary}.\ The matrix $\mathbf{T}$ is usually called the\
Hermite normal form of $\mathbf{A}$. One can compute the Hermite normal
using the same methods as the Gaussian elimination algorithm, see for
example \cite{Kaplansky1949elementary, Storjohann2000algorithms, Bulyovszky2017polynomial}. As $%
z_{r}=1$, the following proposition shows that if $\mathbf{T}$ has a
specific form, then $\mathbf{x}$ can be recovered.

\begin{proposition}
With the above notations, assume that \eqref{KeyEquation} has a solution and
that $\mathbf{T}$ is of the form%
\begin{equation}
\mathbf{T}=\left(
\begin{array}{cc}
\mathbf{T}_{1} & \mathbf{T}_{2} \\
\mathbf{0} & \mathbf{T}_{3} \\
\mathbf{0} & \mathbf{0}%
\end{array}%
\right)   \label{LinearModelingRDP2}
\end{equation}%
where $\mathbf{T}_{1}$ is an $r(k+1)\times r(k+1)$ upper triangular matrix, $%
\mathbf{T}_{2}$ being a $r(k+1)\times (k+1)$ matrix and $\mathbf{T}%
_{3}=\left(
\begin{array}{cc}
\mathbf{I}_{k} & \mathbf{b}%
\end{array}%
\right) $ where $\mathbf{b}$ is a $k\times 1$ matrix, then
\begin{equation*}
\mathbf{x=}-\sigma ^{-r}\left( \mathbf{b}^{\top }\right).
\end{equation*}
\end{proposition}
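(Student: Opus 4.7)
The plan is to exploit the fact that invertible row operations do not change the solution set, so $\mathbf{Au}=\mathbf{0}$ is equivalent to $\mathbf{Tu}=\mathbf{0}$. The whole argument then reduces to reading off the last $k+1$ coordinates of $\mathbf{u}$ from the middle row-block of $\mathbf{T}$.

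First I would set up notation carefully for the unknown vector $\mathbf{u}$. With the block partition used to define $\mathbf{A}$, the first $r(k+1)$ entries of $\mathbf{u}$ correspond to the unknowns $z_{0},\dots,z_{r-1}$ together with the products $z_{l}\sigma^{l}(\mathbf{x})$ for $l=0,\dots,r-1$, while the last $k+1$ entries correspond to $z_{r}\sigma^{r}(\mathbf{x})$ followed by the scalar $z_{r}$. Thus the column dimension is $(r+1)(k+1)$, which matches the horizontal block sizes in \eqref{LinearModelingRDP2}.

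Next, since $\mathbf{P}$ is invertible, any solution of $\mathbf{Au}=\mathbf{0}$ satisfies $\mathbf{Tu}=\mathbf{0}$. The bottom row-block of $\mathbf{T}$ is identically zero and gives no information. The middle row-block involves only the last $k+1$ columns, so it reads $\mathbf{T}_{3}\,\mathbf{v}=\mathbf{0}$, where $\mathbf{v}^{\top}=\bigl(z_{r}\sigma^{r}(x_{1}),\dots,z_{r}\sigma^{r}(x_{k}),\,z_{r}\bigr)$. Because $\mathbf{T}_{3}=\bigl(\mathbf{I}_{k}\mid \mathbf{b}\bigr)$, this is simply
\begin{equation*}
z_{r}\sigma^{r}(x_{i})+b_{i}\,z_{r}=0,\qquad i=1,\dots,k.
\end{equation*}
Substituting the normalization $z_{r}=1$ yields $\sigma^{r}(x_{i})=-b_{i}$, and applying the inverse automorphism $\sigma^{-r}$ (which exists since $\sigma$ generates a finite cyclic Galois group) gives $x_{i}=-\sigma^{-r}(b_{i})$ for each $i$, i.e. $\mathbf{x}=-\sigma^{-r}(\mathbf{b}^{\top})$.

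The whole proof is essentially bookkeeping; there is no real obstacle once the column partition of $\mathbf{u}$ is aligned with the block structure of $\mathbf{T}$. The one mildly delicate point I would state explicitly is that the top row-block $(\mathbf{T}_{1}\ \mathbf{T}_{2})$ plays no role in recovering $\mathbf{x}$: it only constrains the auxiliary unknowns $z_{0},\dots,z_{r-1}$ and the products $z_{l}\sigma^{l}(\mathbf{x})$, which (as already emphasized after Theorem~\ref{KeyEquationTheorem}) need not be unique when $\operatorname{supp}(\mathbf{y}-\mathbf{c})$ is not a free module. The shape hypothesis on $\mathbf{T}_{3}$ is precisely what forces the genuine unknowns $\mathbf{x}$ to be uniquely pinned down, and the assumed existence of a solution to \eqref{KeyEquation} guarantees that the linear system $\mathbf{T}_{3}\mathbf{v}=\mathbf{0}$ is consistent with $z_{r}=1$.
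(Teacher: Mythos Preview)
Your argument is correct and is exactly the natural one: pass from $\mathbf{Au}=\mathbf{0}$ to $\mathbf{Tu}=\mathbf{0}$ via the invertible $\mathbf{P}$, read off the middle block $\mathbf{T}_{3}\mathbf{v}=\mathbf{0}$ against the last $k+1$ coordinates of $\mathbf{u}$, use $z_{r}=1$, and invert $\sigma^{r}$. The paper states this proposition without proof, so there is nothing to compare against; your write-up supplies precisely the bookkeeping the authors left implicit, and your remark that the top block $(\mathbf{T}_{1}\ \mathbf{T}_{2})$ only constrains the auxiliary $z_{l}$'s (which may be non-unique) is a helpful clarification.
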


Note that (\ref{LinearModelingRDP1}) is a homogeneous system of $n$ linear
equations with $(k+1)(r+1)$ unknowns. So, a necessary condition for $\mathbf{%
T}$ to have the form (\ref{LinearModelingRDP2}) is $n\geq (k+1)(r+1)-1$. The
same condition was given in \cite[Theorem 12]{Gaborit2016complexity} in the
case of finite fields. With this condition, we observed in our simulations
that, when $\mathcal{C}$ is a random free submodule, $\mathbf{x}$ can be
recovered in many cases. It will be therefore interesting to study the
probability of this observation.

\begin{example}
\label{ExLinearRDP}Consider the rank decoding problem of Example \ref%
{ExDecodingProblem}. Then there are $x\in S$ and $\mathbf{e\in }S^{3}$ such
that
\begin{equation}
\mathbf{y}=x\mathbf{g}+\mathbf{e}  \label{ExLinearRDP1}
\end{equation}%
with $rk\left( \mathbf{e}\right) =r=1$. So, the skew polynomial $P\in
S[X,\sigma ]$, such that
\begin{equation}
P\left( \mathbf{e}\right) =\mathbf{0}  \label{ExLinearRDP2}
\end{equation}%
is of the form $P=z_{0}+z_{1}X$ where $z_{0},z_{1}\in S$ with $z_{1}=1$. By setting
$\mathbf{g=}\left( g_{1,}g_{2},g_{3}\right) $ and $\mathbf{y=}\left(
y_{1,}y_{2},y_{3}\right)$,  (\ref{ExLinearRDP1}) and (\ref%
{ExLinearRDP2}) imply%
\begin{equation}
z_{0}\left( xg_{j}-y_{j}\right) +z_{1}\sigma \left( xg_{j}-y_{j}\right) =0,\
\ \ \ j=1,...,3.  \label{ExLinearRDP3}
\end{equation}%
which means that
\begin{equation}
\mathbf{A}\left(
\begin{array}{c}
z_{0} \\
z_{0}x \\
z_{1}\sigma \left( x\right) \\
z_{1}%
\end{array}%
\right) =\left(
\begin{array}{c}
0 \\
0 \\
0 \\
0%
\end{array}%
\right)  \label{ExLinearRDP4}
\end{equation}%
where
\begin{equation*}
\mathbf{A}=\left(
\begin{array}{cccc}
-y_{1} & g_{1} & \sigma \left( g_{1}\right) & -\sigma \left( y_{1}\right) \\
-y_{2} & g_{2} & \sigma \left( g_{2}\right) & -\sigma \left( y_{2}\right) \\
-y_{3} & g_{3} & \sigma \left( g_{3}\right) & -\sigma \left( y_{3}\right)%
\end{array}%
\right) .
\end{equation*}%
Using Magma \cite{Cannon2013handbook}, we compute the row echelon form of $\mathbf{A}$ and get:
\begin{equation*}
\mathbf{T}=\left(
\begin{array}{cccc}
1 & \alpha ^{2}+\alpha & 0 & 2\alpha ^{2}+4 \\
0 & 2 & 0 & 6\alpha ^{2}+4\alpha \\
0 & 0 & 1 & 3\alpha ^{2}+6\alpha +3%
\end{array}%
\right)
\end{equation*}%
Thus,
\begin{eqnarray*}
x &=&-\sigma ^{-1}\left( 3\alpha ^{2}+6\alpha +3\right) \\
&=&1+3\alpha +6\alpha ^{2}
\end{eqnarray*}
\end{example}

\subsubsection{Solving With Gr{\"o}bner Bases \label{SolvingWithGB2}}

When $S$ is a finite field, Equation (\ref{KeyEquation}) is a system of
multivariate polynomial equations in the variables $z_{l}$ and $x_{i}$, and such a system was solved directly with Gr\"obner bases in \cite[Section VII]{Gaborit2016complexity}. However, when $S$ is not a field,
the expression $\sigma ^{l}\left( x_{i}g_{i,j}\right) $ is not a polynomial
function in the variable $x_{i}$. So, to transform (\ref{KeyEquation}) into
a system of multivariate polynomial equations, we will expand this equation
in $R$. Let $\left( \beta _{u}\right) _{1\leq u\leq m}$ be a $R-$basis of $S$. Using the notations of Theorem \ref{KeyEquationTheorem}, set
$x_{i}=\sum_{u=1}^{m}x_{i,u}\beta _{u}$ and $z_{l}=\sum_{v=1}^{m}z_{l,v}%
\beta _{v}$ where $x_{i,u}$ and $z_{l,v}$ are in $R$. If we substitute $x_{i}
$ and $z_{l}$ in (\ref{KeyEquation}) and expand the resulting equations over
$R$ using the basis $\left( \beta _{u}\right) _{1\leq u\leq m}$, then we
obtain a system of equations of the form:%
\begin{equation}
\left( \widetilde{\mathbf{x}}\otimes \widetilde{\mathbf{z}}\right) \mathbf{A}%
+\widetilde{\mathbf{x}}\mathbf{B}+\widetilde{\mathbf{z}}\mathbf{C}+\mathbf{D}%
=\mathbf{0}  \label{Expa}
\end{equation}%
where $\widetilde{\mathbf{x}}=\left( x_{1,1},\ldots x_{1,m},\ldots
,x_{k,1},\ldots x_{k,m}\right) $, $\widetilde{\mathbf{z}}=\left(
z_{0,1},\ldots z_{0,m},\ldots ,z_{r-1,1},\ldots z_{r-1,m}\right) $, and $%
\mathbf{A}$, $\mathbf{B}$, $\mathbf{C}$, $\mathbf{D}$ are matrices with $mn$
columns and entries in $R$.

Assume that $\mathcal{C}$ is a free $S-$submodule and $r\leq t$, where $t $
is the error correction capability of $\mathcal{C}$. Then, according to
Theorem \ref{KeyEquationTheorem}, Equation (\ref{Expa}) has a unique
solution in the variables $\widetilde{\mathbf{x}}$ that we denote by $%
\widetilde{\mathbf{x}}_{0}$. Remember that when the support of the error is
not a free module, Equation (\ref{Expa}) has many solutions in the variables
$\widetilde{\mathbf{z}}$. But also note that we do not need all the
solutions of (\ref{Expa}). We just need the partial solution $\widetilde{%
\mathbf{x}}_{0}$. Therefore, to solve (\ref{Expa}) we can use the
elimination theorem as specified in Subsection \ref{SolvingWithGB1} to
simply find the partial solution $\widetilde{\mathbf{x}}_{0}$ using Gr\"obner bases. Equation (\ref{Expa}) is equivalent to
\begin{equation*}
\left\{
\begin{array}{c}
\left( \widetilde{\mathbf{x}}_{0}\otimes \widetilde{\mathbf{z}}\right)
\mathbf{A+}\widetilde{\mathbf{x}}_{0}\mathbf{B}+\widetilde{\mathbf{z}}%
\mathbf{C+D}=\mathbf{0} \\
\widetilde{\mathbf{x}}-\widetilde{\mathbf{x}}_{0}=\mathbf{0}%
\end{array}%
\right.
\end{equation*}%
In our simulations we observe that, when $\mathcal{C}$ is a random free
submodule, the ideal associated to (\ref{Expa}) is generated by
\begin{equation*}
\left\{ \widetilde{\mathbf{x}}-\widetilde{\mathbf{x}}_{0},\ \left(
\widetilde{\mathbf{x}}_{0}\otimes \widetilde{\mathbf{z}}\right) \mathbf{A+}%
\widetilde{\mathbf{x}}_{0}\mathbf{B}+\widetilde{\mathbf{z}}\mathbf{C+D}%
\right\} .
\end{equation*}%
Thus, we have the partial solution $\widetilde{\mathbf{x}}_{0}$ directly
when we compute a Gr\"obner basis of (\ref{Expa}). It is therefore
interesting to study the properties of (\ref{Expa}) in order to estimate
the probability of this observation.

\begin{example}
\label{ExGrobnerRDP}Consider Equation (\ref{ExLinearRDP3}) of Example \ref%
{ExLinearRDP}. Set $x=x_{0}+x_{1}\alpha +x_{2}\alpha ^{2}$ and $%
z_{0}=t_{0}+t_{1}\alpha +t_{2}\alpha ^{2}$ where $x_{i}$ and \ $t_{i}$ are
in $R$ for $i=1,\ldots ,3$. Using SageMath, we substitute $x$, $z_{0}$ and $%
z_{1}=1$ in (\ref{ExLinearRDP3}) and expand the resulting equations over $R$
using the basis $\left( 1,\alpha ,\alpha ^{2}\right) $ to finally obtain a
system of equations of the form

\begin{equation}
\tiny{
\left\{
\begin{array}{l}
x_{0}t_{0}+x_{2}t_{1}+x_{1}t_{2}+2x_{2}t_{2}+x_{0}+2x_{2}+5t_{0}+4t_{1}+5t_{2}+5=0
\\
x_{1}t_{0}+x_{0}t_{1}+3x_{2}t_{1}+3x_{1}t_{2}-x_{2}t_{2}-x_{2}+5t_{0}+t_{1}+3t_{2}+4=0
\\
x_{2}t_{0}+x_{1}t_{1}+2x_{2}t_{1}+x_{0}t_{2}+2x_{1}t_{2}-x_{2}t_{2}+x_{1}-x_{2}+4t_{0}+5t_{1}+3t_{2}+1=0
\\
2x_{0}t_{0}+2x_{1}t_{0}+5x_{2}t_{0}+2x_{0}t_{1}+5x_{1}t_{1}+2x_{2}t_{1}+5x_{0}t_{2}+2x_{1}t_{2}+5x_{2}t_{2}+6x_{0}+4x_{1}+x_{2}+2t_{0}+3t_{1}-t_{2}=0
\\
x_{0}t_{0}+x_{2}t_{0}+x_{1}t_{1}+3x_{2}t_{1}+x_{0}t_{2}+3x_{1}t_{2}+x_{2}t_{2}+6x_{0}+3x_{1}+6x_{2}+t_{0}+3t_{1}+5=0
\\
2x_{0}t_{0}+5x_{1}t_{0}+2x_{2}t_{0}+5x_{0}t_{1}+2x_{1}t_{1}+5x_{2}t_{1}+2x_{0}t_{2}+5x_{1}t_{2}+5x_{2}t_{2}-x_{0}+3x_{1}-x_{2}+3t_{0}-t_{1}+t_{2}+6=0
\\
x_{1}t_{0}+5x_{2}t_{0}+x_{0}t_{1}+5x_{1}t_{1}+5x_{2}t_{1}+5x_{0}t_{2}+5x_{1}t_{2}+2x_{2}t_{2}+2x_{0}+3x_{1}-x_{2}+3t_{0}+6t_{1}+7=0
\\
3x_{0}t_{0}+3x_{1}t_{0}+3x_{0}t_{1}+4x_{2}t_{1}+4x_{1}t_{2}+3x_{2}t_{2}-x_{0}+3x_{1}+3x_{2}+4t_{0}+5t_{1}+6t_{2}+2=0
\\
x_{0}t_{0}+5x_{1}t_{0}+5x_{2}t_{0}+5x_{0}t_{1}+5x_{1}t_{1}+2x_{2}t_{1}+5x_{0}t_{2}+2x_{1}t_{2}+2x_{0}+6x_{1}+3x_{2}+6t_{0}+5t_{2}+6=0%
\end{array}%
\right.  \label{ExGrobnerRDP1}
}
\end{equation}

 \ \

Using SageMath \cite{Sagemath2023}, we compute a Gr\"obner basis of (\ref{ExGrobnerRDP1}) and
get:
\begin{equation*}
\left\{ x_{0}+7,x_{1}+5,x_{2}+2,2t_{0}+2,2t_{1},2t_{2}+2\right\} \text{.}
\end{equation*}%
Thus, $x=x_{0}+x_{1}\alpha +x_{2}\alpha ^{2}=1+3\alpha +6\alpha ^{2}$.
\end{example}

\section{Conclusion}\label{conclusion}
In this work, we have shown that solving systems of algebraic equations
over finite commutative rings reduces to the same problem over Galois rings. Then, using
the elimination theorem and some properties of canonical generating systems,
we have also shown how Gr\"obner bases can be used to solve systems of algebraic
equations over finite chain rings. As applications, these results have been used to give some algebraic approaches for solving the MinRank problem and the rank decoding problem over finite principal ideal rings.

The above work clearly open the door to an important complexity question, namely the real coast of Gr\"obner bases computation over finite chain rings, or at least the coast when dealing with the MinRank and rank decoding problems over finite chain rings.

Another metric used in coding theory and cryptography is the Lee metric \cite%
{Lee1958some}. This metric is usually defined over integer residue rings,
which are specific cases of finite principal ideal rings. Another interesting perspective will be to study the possibility of using algebraic techniques for solving the decoding problem in the Lee metric.

\bibliographystyle{abbrv}
\bibliography{main}

\end{document}